\DeclareMathOperator*{\argmax}{arg\,max}
\newcommand\independent{\protect\mathpalette{\protect\independenT}{\perp}}
\def\independenT#1#2{\mathrel{\rlap{$#1#2$}\mkern2mu{#1#2}}}
\newtheorem{theorem}{Theorem}
\newtheorem{proposition}{Proposition}
\newtheorem{remark}{Remark}
\newtheorem{corollary}{Corollary}[theorem]
\newtheorem{example}{Example}
\newtheorem{lemma}{Lemma}
\newtheorem{definition}{Definition}
\pgfplotsset{compat = newest}
\begin{document}
\title{Data Disclosure under Perfect Sample Privacy}
\author{Borzoo Rassouli$^1$, Fernando E. Rosas$^2$, and Deniz G\"und\"uz$^2$\\
$^1$ University of Essex, Colchester, UK\\
$^2$ Imperial College London, London, UK\\
{\tt\small b.rassouli@essex.ac.uk}, {\tt\small  f.rosas@imperial.ac.uk}, {\tt\small  d.gunduz@imperial.ac.uk}
}
 
\maketitle
\begin{abstract}
Perfect data privacy seems to be in fundamental opposition to the economical and scientific opportunities associated with extensive data exchange. Defying this intuition, this paper develops a framework that allows the disclosure of collective properties of datasets without compromising the privacy of individual data samples. 
We present an algorithm to build an optimal disclosure strategy/mapping, and discuss it fundamental limits on finite and asymptotically large datasets. Furthermore, we present explicit expressions to the asymptotic performance of this scheme in some scenarios, and study cases where our approach attains maximal efficiency. We finally discuss suboptimal schemes to provide sample privacy guarantees to large datasets with a reduced computational cost.
\end{abstract}
\begin{IEEEkeywords}
Data disclosure, inference attacks, data privacy, latent features, perfect privacy
\end{IEEEkeywords}


\section{Introduction}

\subsection{Context}

The fundamental tension between the benefits of information exchange and the need of data privacy is at the heart of the digital society. On the one hand, the massive amount of available data is currently enabling important scientific and economic opportunities; for example, experimental data can nowadays be shared effortlessly between researchers to allow parallel analyses, and consumer preferences can be extracted from online activity to aid the design of new products and services. On the other hand,  recent cases of misuse of user data (e.g. the well-known cases of Facebook and Cambridge Analytica~\cite{cadwalladr2018revealed}) are raising major concerns about data privacy, which has become a preeminent topic with overarching social, legal, and business consequences. As a matter of fact, important efforts have been taken to guarantee user privacy, including the \textit{General Data Protection Regulation} (GDPR) adopted by the European Union at an estimated cost of \euro{} 200 billion \cite{Georgiev2019}, and the recent adoption of differential privacy~\cite{dwork2014algorithmic} standards by major tech companies including Apple and Google. A key open problem is how to satisfy sufficient privacy requirements while still enabling the benefits of extensive data sharing.

There have been important efforts to address this problem from academic and industrial sectors, which are mainly focused on developing privacy-preserving data processing techniques. Privacy-preserving data disclosure is based on the intuition that the content of a dataset can be divided in two qualitatively different parts: non-sensitive statistical regularities that exist across the data, and private information that refers to the contents of specific entries/users. This distinction suggests that -- at least in principle -- one could extract and share global properties of data, while keeping information about specific samples confidential.

The highest privacy standard that a data disclosure strategy can guarantee, called \textit{perfect privacy}, corresponds to when nothing can be learned about an individual that could not have been learned without the disclosed data anyway~\cite{dalenius1977towards}. While studied in~\cite{Calmon2,rassouli2017}, perfect privacy is often disregarded for being too restrictive, corresponding to a extreme choice within the trade-off that exists between privacy and utility \cite{li2009tradeoff,Basciftci2016}. The most popular approach that takes advantage of this trade-off is \textit{differential privacy}~\cite{dwork2006}, which is equipped with free parameters that can be flexibly tuned in order to adapt to the requirements of diverse scenarios. However, while these degrees of freedom provide significant flexibility, determining the range of values that can guarantee that the system is ``secure enough'' is usually not straightforward \cite{lee2011much}.

There is an urgent need of procedures that can enable effective data exchange while ensuring rigorous privacy guarantees. Within this context, the goal of this work is to revisit perfect privacy and present algorithms to build perfectly-private data disclosure procedures.

\subsection{Scenario and related work}\label{sec:1.2}

Let us consider a user who has a private dataset, denoted by $X^n\triangleq(X_1,\dots,X_n)$, which is correlated with a latent variable of interest, denoted by $W$, that the user would like to share with an analyst. Note that in this scenario the user has no direct access to $W$, but can only attempt to infer its contents via the information provided by $X^n$. For instance, $X^n$ can be measurements of a patient's vital signals while $W$ is a particular health indicator, e.g., the risk of heart attack. Although it would be desirable for the patient to share the whole dataset with a remote assessment unit to provide early alerts in case of an emergency, she may not want to disclose the data samples themselves as this could reveal unintended personal information.

We follow the framework for privacy against inference attacks~\cite{Monedero}, \cite{Calmon1}, which proposes to disclose a variable $Y$ that is obtained through a mapping from the data set. This work focuses on mappings that satisfy \textit{perfect sample privacy}; that is, mappings, whose output ($Y$) do not provide any useful information that could foster statistical inference on the value of any particular sample, i.e. on $X_i$ for all $i=1,\dots,n$. Mathematically, this is equivalent to consider only those mappings whereby $Y$ and $X_i$ are statistically independent for all $i=1,\dots,n$, while $W- X^n -Y$ form a Markov chain. To assess the quality of $Y$ as an estimator of $W$, we consider the mutual information between the two, $I(Y;W)$. This quantity is an adequate proxy -- with better algebraic properties -- for the estimation error rate (also known as \textit{0-1 loss})~\cite{hellman1970probability,feder1994relations}, which is a central performance metric for classification and many other machine learning tasks~\cite{shalev2014}.

It is important to note that the above conditions are not equivalent to imposing statistical independence between the disclosed variable $Y$ and the whole dataset $X^n$. In fact, if $X^n$ and $Y$ are independent then the data-processing inequality leads to $I(Y;W)\leq I(X^n;Y)=0$, implying that under this condition the analyst cannot receive any information about $W$. At this point, it is useful to recall a counterintuitive and largely underexploited feature of multivariate statistics: variables that are pairwise independent can still be globally interdependent \cite{jacod2012probability}. Said differently, while $I(Y;X^n)=0$ implies $I(Y;X_i)=0$ for $i=1,\dots,n$, the converse does not hold. For example, it is direct to verify that if $X_1$ and $X_2$ are two independent fair coins, then $Y= X_1\oplus X_2$ (i.e., their exclusive OR) is independent of each of them, while $I(X_1,X_2;Y)>0$ \cite{rosas2016understanding}. Therefore, in this case, $Y$ reveals a collective property (whether the entries of $(X_1,X_2)$ are equal or not), while saying nothing about the specific value of $X_1$ or $X_2$.

Differential privacy is driven by similar desiderata, but the corresponding set of techniques and guarantees are quite different. A disclosure mapping $Y$ is said to be $\epsilon$-differential private if $\log \mathbb{P}\{Y|X^n=x^n\} \leq \epsilon + \log \mathbb{P}\{Y|X^n=\hat{x}^n\}$ for any pair of datasets $x^n$ and $\hat{x}^n$ that differ in only one entry, so that the value of a particular sample is not supposed to affect too much the value of $Y$ and vice-versa. Hence, while perfect sample privacy guarantees strict independency between the disclosed data and each sample, differential privacy only limits their conditional dependency\footnote{A direct calculation shows that differential privacy imposes a restriction on the conditional mutual information, i.e. $I(Y;X_i|X_{-i}^n) \leq \epsilon$ for all $i=1,\dots,n$, where $X_{-i}^n$ stands for the whole dataset excluding $X_i$.}. Unfortunately, it has been shown that in some cases this latter restriction fails to provide privacy guarantees, as some $\epsilon$-private disclosure mechanisms can still allow the leakage of an unbounded amount of information -- independently of how small $\epsilon$ might be~\cite[Theorem 4]{Calmon1}.

Another related problem to the one considered here is the \textit{privacy funnel}, in which the goal is to reveal the data set $X$ within a given accuracy under some utility measure, while keeping the latent variable $W$ as private as possible \cite{Makhdoumi}. Also, various metrics for quantifying the quality of the disclosure strategy has been studied in \cite{Basciftci2016,rassouli2017,issa2016operational,borzoo2018}.

\subsection{Contributions}

In this paper we study a data disclosure technique that guarantees perfect sample privacy, which we call ``synergistic information disclosure'' as it reveals information about the whole dataset (i.e., about $X^n$), but not about any of its constituting elements (i.e., $X_i$'s). 
Building up on~\cite{rassouli2018latent}, we derive necessary and sufficient conditions that determine when information about a latent feature can be synergistically disclosed, and present a simple but tight upper bound on how much information can be disclosed. Moreover, we provide a practical procedure for building an optimal synergistic disclosure mapping, which is based on linear programming (LP) methods, as stated in Theorem \ref{theo1}. We illustrate this method on a simple scenario where the data set $X$ consists of two binary samples, for which we provide a closed-form expression for the performance of the optimal synergistic disclosure mapping. 

When considering large datasets, we obtain the asymptotic performance limit of optimal synergistic disclosure when the dataset is composed of noisy measurements of a phenomenon of interest in Theorems \ref{th4} and \ref{th5}. As a by-product of this analysis, we observe a link between the \textit{full data observation} and \textit{output perturbation} models in \cite{Ishwar}. Moreover, when considering self-disclosure, we show that in most cases the ratio of information that one can synergistically disclose about the dataset to the information content of the dataset tends asymptotically to one, provided in Theorem \ref{th2}. We also show, in Theorem \ref{th6}, that, when applied to datasets of continuous samples, the disclosure capacity diverges. Finally, we present two suboptimal schemes of low computational complexity to build disclosure mappings that still guarantee perfect sample privacy, which are well suited to large datasets composed of independent samples.


The rest of the paper is structured as follows. Section~\ref{sec:2} introduces the notion of perfect sample privacy, and develops the conditions and bounds that characterize the private disclosure capacity. Subsequently, Section~\ref{sec:3} proves that the optimal mapping can be found through an LP, and develops the case where the data set consists of two binary samples. Then, Section~\ref{sec:large_datasets} studies the asymptotic performance for datasets of noisy  observations of a latent feature, and Section~\ref{sec:synergistic_selfdisc} considers the limits of synergistic self-disclosure. Section~\ref{sec:cont} studies the case of datasets with continuous variables. Finally, Section~\ref{sec:conclusions} conveys our final remarks.

\subsection{Notation}

Random variables (r.v.'s) are denoted by capital letters and their realizations 
by lowercase letters. For two r.v.'s $X$ and $Y$, $X\independent Y$ indicates that they are statistically independent. Matrices and vectors are denoted by bold capital and bold lowercase letters, respectively. For a matrix $\mathbf{A}_{m\times k}$, the null space, rank, and nullity are denoted by $\mbox{Null}(\mathbf{A})$, $\mbox{rank}(\mathbf{A})$, and $\mbox{nul}(\mathbf{A})$, respectively, with $\mbox{rank}(\mathbf{A})+\mbox{nul}(\mathbf{A})=k$. For integers $m$ and $n$ such that $m\leq n$, we define the discrete interval $[m:n]\triangleq\{m, m+1,\ldots,n\}$, and 
for $[1:n]$, we use the shorthand notation $[n]$.
For an integer $n\geq 1$, $\mathbf{1}_n$ denotes an $n$-dimensional all-one column vector. For a finite alphabet $\mathcal{X}$, the probability simplex $\mathcal{P}(\mathcal{X})$ is the standard $(|\mathcal{X}|-1)$-simplex given by
\begin{equation*}
\mathcal{P}(\mathcal{X})=\bigg\{\mathbf{v}\in\mathbb{R}^{|\mathcal{X}|}\bigg|\mathbf{1}_{|\mathcal{X}|}^T\cdot\mathbf{v}=1,\ v_i\geq 0,\ \forall i\in [|\mathcal{X}|]\bigg\}.
\end{equation*}
To each probability mass function (pmf) on $\mathcal{X}$, denoted by $p_{X}(\cdot)$ (or written simply as $p_X$), corresponds a probability vector $\mathbf{p}_X\in \mathcal{P}(\mathcal{X})$, whose  $i$-th element is $p_X(x_i)$ ($i\in[|\mathcal{X}|]$). Likewise, for a pair of discrete r.v.'s $(X,Y)$ with joint pmf $p_{X,Y}$, the probability vector $\mathbf{p}_{X|y}$ corresponds to the conditional pmf $p_{X|Y}(\cdot|y),\forall y\in\mathcal{Y}$, and $\mathbf{P}_{X|Y}$ is an $|\mathcal{X}|\times|\mathcal{Y}|$ matrix with columns $\mathbf{p}_{X|y},\forall y\in\mathcal{Y}$.

\section{Definition and basic properties}
\label{sec:2}

\subsection{Perfect sample privacy and synergistic disclosure}

Consider a case where a user has access to a dataset, denoted by $X^n$, which is dependent on a latent variable of interest $W$ that the user wishes to share with an analyst. The constituting elements of the dataset, i.e., $X_i$'s (which are in general random variables), are informally referred to as ``data samples''. From a communication theoretic perspective, $X^n$ can be considered to be a set of (possibly noisy) observations of $W$. The variables $W,X^n$ are assumed to be distributed according to a given joint distribution $p_{W,X^n}$.

Our goal is to process the dataset $X^n$ in a way that the result is maximally informative about $W$, while keeping the content of each $X_i$ $\forall i\in[n]$ confidential. Without loss of generality, we consider data disclosure strategies that take the form of a stochastic mapping, which can be captured by a conditional pmf $p_{Y|X^n}$. By this construction, $W-X^n-Y$ form a Markov chain.
Throughout the paper, unless stated otherwise, we focus on the case where $|\mathcal{W}|,|\mathcal{X}_i|<\infty,\ \forall i\in[n]$.

Our first step is to provide a suitable definition of data privacy. Although the condition $Y\independent X^n$ is sufficient for guaranteeing perfect privacy of the data samples, it is too constrictive. In fact, from the data processing inequality, if $Y\independent X^n$ then $Y\independent W$, implying that such a requirement makes $Y$ useless for the task of informing about $W$. In the following, we introduce the notion of \emph{perfect sample privacy}, which imposes a set of more flexible constraints. 

\begin{definition}
Consider a stochastic mapping which is applied on a dataset $X^n$ and produces an output $Y$. This mapping guarantees \textit{perfect sample privacy} if and only if $Y\independent X_i$ $\forall i\in[n]$. Furthermore, the set of all such mappings is denoted by
\begin{equation}\label{def_set}
\mathcal{A} = \bigg\{ p_{Y|X^n} \:\bigg| \:Y\independent X_i,\forall i\in[n]\bigg\}.
\end{equation}
\end{definition}
Therefore, a variable $Y$ generated by processing $X^n$ via a mapping $p_{Y|X^n}\in\mathcal{A}$ cannot foster statistical inference attacks over any of the samples of the dataset. Interestingly, mappings that satisfy perfect sample privacy can still carry useful information about latent variables. Our key principle is \textit{synergistic disclosure}: that is possible for $Y$ to carry information about a latent feature $W$ while revealing no information about any of the individual data samples. The next definition formalizes this notion by characterizing the latent variables that are feasible of being synergistically disclosed.
\begin{definition}\label{def1}
For a given latent variable $W$ and dataset $X^n$, 
$W$ is said to be feasible of \textit{synergistic disclosure} if there exists a random variable $Y$ that satisfies the following conditions:
\begin{itemize}
\item[1.] $W-X^n-Y$ form a Markov chain,
\item[2.] $p_{Y|X^n} \in \mathcal{A}$,
\item[3.] $Y\not\independent W$.
\end{itemize}
Moreover, the \textit{synergistic disclosure capacity} is defined as
\begin{equation}\label{def}
I_s(W,X^n)\triangleq\sup_{\substack{p_{Y|X^n}\in \mathcal{A}:\\W-X^n-Y}} I(W;Y).
\end{equation}
Finally, the synergistic disclosure efficiency is defined as $\eta(W,X^n) \triangleq I_\text{s}(W,X^n)/H(W)$.
\end{definition}
\noindent

The term "synergistic" comes from the fact that a synergistic disclosure mapping reveals collective properties of the whole dataset that do not compromise its "parts" (i.e. the samples themselves). In the sequel, $I_\text{s}$ is employed as a shorthand notation for $I_\text{s}(W,X^n)$ when the dataset and latent feature are clear.

Let the support of $X^n$ be defined as
\begin{equation*}
\hat{\mathcal{X}} \triangleq \bigg\{x^n\in\prod_{i=1}^n \mathcal{X}_i\bigg|p_{X^n}(x^n)>0\bigg\}.
\end{equation*}
From this definition, $\mathbf{p}_{X^n}$ always lies in the interior of $\mathcal{P}(\hat{\mathcal{X}})$. Also, it is evident that $|\hat{\mathcal{X}}|\leq \Pi_{i=1}^{n}|\mathcal{X}_i|$. 

Define matrix $\mathbf{P}$ as 
\begin{equation}\label{matp}
\mathbf{P}\triangleq\begin{bmatrix}
\mathbf{P}_{X_1|X^n}\\ \vdots \\\mathbf{P}_{X_n|X^n}
\end{bmatrix}_{G\times|\hat{\mathcal{X}}|},
\end{equation}
where $G\triangleq \sum_{i=1}^n|\mathcal{X}_i|$. Note that $\mathbf{P}$ is a binary matrix, as $X_i$'s are deterministic functions of $X^n$. For example, if $|\mathcal{X}_i|=2, \forall i\in[n]$ and $\hat{\mathcal{X}}$ is the set of all binary $n$-sequences, i.e., $\hat{\mathcal{X}}=\{0,1\}^n$, then $\mathbf{P}$ is a $2n\times 2^n$ matrix that can be built recursively according to
\begin{equation}\label{matrix_structure}
\mathbf{P}_{k+1} = 
    \left[
    \begin{array}{c;{2pt/2pt}c}
    \vspace{-0.2cm}
    1 \dots 1 & 0 \dots 0 \\
    0 \dots 0 & 1 \dots 1 \\
    \hdashline[2pt/2pt]
    \mathbf{P}_k & \mathbf{P}_k
    \end{array}
    \right],  
    \nonumber
\end{equation}
with $\mathbf{P} = \mathbf{P}_n$ and $\mathbf{P}_1 = \Big[ \begin{array}{cc} \vspace{-0.3cm} 1 0 \\ 0 1\end{array}\Big] $. 
\begin{remark}(\textit{Graphical representation})
Let $V=\{v_i\}_{i=1}^G$ be a set whose elements are in a one-to-one correspondence with the realizations of the data samples as follows. The first $|\mathcal{X}_1|$ elements correspond to $\mathcal{X}_1$, the next $|\mathcal{X}_2|$ elements correspond to $\mathcal{X}_2$, and so on. It can be verified that to the joint distribution $p_{X^n}$, corresponds an n-uniform n-partite hypergraph $H_n=(V,E)$, in which, any hyperedge, i.e., any element of $E$, corresponds to an element of $\hat{\mathcal{X}}$, i.e., the support of $X^n$. Furthermore, the incidence matrix of $H_n$ is matrix $\mathbf{P}$ given in \eqref{matp}.  
\end{remark}
The importance of $\mathbf{P}$ is clarified in the following Lemma.
\begin{lemma}\label{lemma2}
We have $X_i\independent Y,\ \forall i\in[n]$, if and only if 
$(\mathbf{p}_{X^n}-\mathbf{p}_{X^n|y})\in\textnormal{Null}(\mathbf{P}),\forall y\in\mathcal{Y}$.
\end{lemma}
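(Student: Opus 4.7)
The plan is to unfold the independence condition into equality of conditional and marginal distributions for each sample, express both sides as matrix-vector products involving the blocks of $\mathbf{P}$, and then stack the resulting $n$ equations into the single null-space condition.

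First, I would recall that $X_i\independent Y$ is equivalent to $p_{X_i\mid Y}(x_i\mid y) = p_{X_i}(x_i)$ for every $x_i\in\mathcal{X}_i$ and every $y\in\mathcal{Y}$ (restricting, as usual, to $y$ with $p_Y(y)>0$; values with zero probability are irrelevant for the joint distribution). In vector form this reads $\mathbf{p}_{X_i\mid y} = \mathbf{p}_{X_i}$ for all $y\in\mathcal{Y}$.

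Second, I would use the law of total probability together with the Markov structure to write each conditional marginal as a matrix–vector product involving the block $\mathbf{P}_{X_i\mid X^n}$. Concretely, for any $y\in\mathcal{Y}$,
\begin{equation*}
\mathbf{p}_{X_i\mid y} = \mathbf{P}_{X_i\mid X^n}\,\mathbf{p}_{X^n\mid y},\qquad \mathbf{p}_{X_i} = \mathbf{P}_{X_i\mid X^n}\,\mathbf{p}_{X^n},
\end{equation*}
where the second identity comes from taking $y$-average (or equivalently, from the fact that $X_i$ is a deterministic function of $X^n$, so $\mathbf{P}_{X_i\mid X^n}$ is the $\{0,1\}$-matrix whose entry is $1$ iff the $i$-th coordinate of $x^n$ equals $x_i$). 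Subtracting the two identities yields
\begin{equation*}
\mathbf{p}_{X_i}-\mathbf{p}_{X_i\mid y} \;=\; \mathbf{P}_{X_i\mid X^n}\bigl(\mathbf{p}_{X^n}-\mathbf{p}_{X^n\mid y}\bigr),
\end{equation*}
so that $X_i\independent Y$ is equivalent to $\mathbf{P}_{X_i\mid X^n}(\mathbf{p}_{X^n}-\mathbf{p}_{X^n\mid y})=\mathbf{0}$ for all $y$.

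Third, I would observe that asking this simultaneously for every $i\in[n]$ is exactly the same as asking the stacked equation
\begin{equation*}
\mathbf{P}\bigl(\mathbf{p}_{X^n}-\mathbf{p}_{X^n\mid y}\bigr) \;=\; \mathbf{0},\qquad \forall y\in\mathcal{Y},
\end{equation*}
with $\mathbf{P}$ as defined in \eqref{matp}. This is precisely the statement that $(\mathbf{p}_{X^n}-\mathbf{p}_{X^n\mid y})\in\mbox{Null}(\mathbf{P})$ for every $y\in\mathcal{Y}$, which gives the claimed equivalence in both directions simultaneously.

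This proof is a direct computation, so there is no real obstacle; the only subtlety worth mentioning is the handling of the domain in which $\mathbf{p}_{X^n\mid y}$ lives. Since $\mathbf{p}_{X^n}$ is supported on $\hat{\mathcal{X}}$ and $\mathbf{p}_{X^n\mid y}$ is absolutely continuous with respect to it (given the Markov chain $W-X^n-Y$ and the construction $p_{Y\mid X^n}$), both vectors are elements of $\mathbb{R}^{|\hat{\mathcal{X}}|}$, which is the domain of $\mathbf{P}$, so the null-space statement makes sense as written.
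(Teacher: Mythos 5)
Your proposal is correct and follows essentially the same route as the paper: both express the conditional marginal of each sample as $\mathbf{P}_{X_i|X^n}\mathbf{p}_{X^n|y}$ via the (trivial) Markov chain $X_i - X^n - Y$, reduce independence to $\mathbf{P}_{X_i|X^n}(\mathbf{p}_{X^n}-\mathbf{p}_{X^n|y})=\mathbf{0}$, and stack over $i$ using $\mathrm{Null}(\mathbf{P})=\cap_{i=1}^n\mathrm{Null}(\mathbf{P}_{X_i|X^n})$. The paper merely phrases the first two steps for a generic chain $X-Y-Z$ before specializing; the content is identical.
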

\begin{proof}
Let $X$, $Y$ and $Z$ be discrete r.v.'s that form a Markov chain as $X-Y-Z$. Having $X\independent Z$ is equivalent to $
p_{X}(\cdot)=p_{X|Z}(\cdot|z)$, i.e., $\mathbf{p}_{X}=\mathbf{p}_{X|z},\ \forall z\in\mathcal{Z}$.
Furthermore, due to the Markov chain assumption, we have $\mathbf{p}_{X|z} = \mathbf{P}_{X|Y}\mathbf{p}_{Y|z},\ \forall z\in\mathcal{Z}$, and in particular, $\mathbf{p}_{X}=\mathbf{P}_{X|Y}\mathbf{p}_{Y}$. Therefore, having $\mathbf{p}_{X}=\mathbf{p}_{X|z}$, $\forall z\in\mathcal{Z}$ results in
\begin{equation*}
\mathbf{P}_{X|Y} \left(\mathbf{p}_{Y}- \mathbf{p}_{Y|z} \right)=\mathbf{0},\:\forall z \in\mathcal{Z},
\end{equation*}
or equivalently, $\left(\mathbf{p}_{Y}-\mathbf{p}_{Y|z}\right)\in \text{Null}(\mathbf{P}_{X|Y})$, $\forall z\in\mathcal{Z}$.

The proof is complete by noting that i) $X_i-X^n-Y$ form a Markov chain for each index $i\in[n]$, and ii) $\mbox{Null}(\mathbf{P}) = \cap_{i=1}^n \mbox{Null}(\mathbf{P}_{X_i|X^n})$.
\end{proof}

\subsection{Fundamental properties of synergistic disclosure}

The following Proposition characterizes the class of features that are feasible of synergistic disclosure from a given dataset. 
\begin{proposition}\label{Prop1}
For a given pair $(W,X^n)$, $W$ is feasible of synergistic disclosure if and only if $\textnormal{Null}(\mathbf{P})\not\subset\textnormal{Null}(\mathbf{P}_{W|X^n})$. 
\end{proposition}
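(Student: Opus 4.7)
The plan is to translate both the privacy condition and the ``non-independence from $W$'' condition into linear-algebraic statements about the vectors $\mathbf{p}_{X^n|y}-\mathbf{p}_{X^n}$, and then establish the two directions separately: the ``only if'' part by a short contrapositive argument, and the ``if'' part by constructing a binary-alphabet $Y$ from a witness vector in $\text{Null}(\mathbf{P})\setminus \text{Null}(\mathbf{P}_{W|X^n})$.

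First I would invoke Lemma~\ref{lemma2}: the privacy constraint $p_{Y|X^n}\in\mathcal{A}$ is equivalent to $(\mathbf{p}_{X^n}-\mathbf{p}_{X^n|y})\in \text{Null}(\mathbf{P})$ for every $y\in\mathcal{Y}$. Next, since $W-X^n-Y$ is a Markov chain, $\mathbf{p}_{W|y}=\mathbf{P}_{W|X^n}\mathbf{p}_{X^n|y}$ and $\mathbf{p}_W=\mathbf{P}_{W|X^n}\mathbf{p}_{X^n}$, so $Y\independent W$ is equivalent to $\mathbf{P}_{W|X^n}(\mathbf{p}_{X^n}-\mathbf{p}_{X^n|y})=\mathbf{0}$ for all $y$. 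Hence $W$ is feasible of synergistic disclosure iff there exists a valid $p_{Y|X^n}$ such that every difference vector lies in $\text{Null}(\mathbf{P})$ while at least one does not lie in $\text{Null}(\mathbf{P}_{W|X^n})$.

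The ``only if'' direction is then immediate: if such a $Y$ exists, the offending difference vector certifies $\text{Null}(\mathbf{P})\not\subset \text{Null}(\mathbf{P}_{W|X^n})$. For the ``if'' direction, I would pick any $\mathbf{v}\in \text{Null}(\mathbf{P})\setminus \text{Null}(\mathbf{P}_{W|X^n})$ and construct $Y$ on a binary alphabet $\mathcal{Y}=\{y_1,y_2\}$ by setting
\begin{equation*}
\mathbf{p}_{X^n|y_1}=\mathbf{p}_{X^n}+\alpha\mathbf{v},\qquad \mathbf{p}_{X^n|y_2}=\mathbf{p}_{X^n}-\beta\mathbf{v},
\end{equation*}
with $p_Y(y_1)=\beta/(\alpha+\beta)$ and $p_Y(y_2)=\alpha/(\alpha+\beta)$, so that the marginal $\mathbf{p}_{X^n}$ is preserved. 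Because $\mathbf{p}_{X^n}$ lies in the interior of $\mathcal{P}(\hat{\mathcal{X}})$, sufficiently small $\alpha,\beta>0$ guarantee that $\mathbf{p}_{X^n|y_1}$ and $\mathbf{p}_{X^n|y_2}$ are bona fide probability vectors, from which a valid $p_{Y|X^n}$ can be recovered via Bayes' rule. Both difference vectors are parallel to $\mathbf{v}\in \text{Null}(\mathbf{P})$, so privacy holds; and $\mathbf{P}_{W|X^n}\mathbf{v}\neq\mathbf{0}$ ensures $\mathbf{p}_{W|y_1}\neq \mathbf{p}_W$, hence $Y\not\independent W$.

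The only subtle step is the feasibility of the perturbation in the backward direction, i.e.\ choosing $\alpha,\beta>0$ so that $\mathbf{p}_{X^n}\pm\alpha\mathbf{v},\mathbf{p}_{X^n}\mp\beta\mathbf{v}$ remain nonnegative; this is where one must use the fact that $\mathbf{p}_{X^n}$ is strictly positive on $\hat{\mathcal{X}}$ (and that $\mathbf{v}$, lying in $\text{Null}(\mathbf{P})\subseteq \ker(\mathbf{1}_{|\hat{\mathcal{X}}|}^{T})$ thanks to $\mathbf{1}_{|\hat{\mathcal{X}}|}^{T}$ being in the row span of $\mathbf{P}$, preserves normalization). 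I expect this bookkeeping to be the main, though routine, obstacle; everything else reduces to the matrix identities recorded above.
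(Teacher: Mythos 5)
Your proposal is correct and follows essentially the same route as the paper's proof: both directions reduce to Lemma~\ref{lemma2} together with the identity $\mathbf{p}_{W|y}=\mathbf{P}_{W|X^n}\mathbf{p}_{X^n|y}$, and the achievability construction is the paper's binary perturbation $\mathbf{p}_{X^n|y_i}=\mathbf{p}_{X^n}\pm\epsilon\mathbf{v}$ (yours merely allows asymmetric weights, and your ``only if'' direction exhibits a single witness difference vector rather than arguing by contradiction with two conditionals). The normalization point you flag as the subtle step is handled exactly as you anticipate, via $\mathbf{1}_{|\hat{\mathcal{X}}|}^{T}=\mathbf{1}_{G}^{T}\mathbf{P}$ and the interiority of $\mathbf{p}_{X^n}$.
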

\begin{proof}
For the first direction, we proceed as follows. If $I_s>0$, we have $W\not\independent Y$ in $W-X^n-Y$. Therefore, there exist $y_1,y_2\in\mathcal{Y}$ , where $y_1\neq y_2$, such that $\mathbf{p}_{W|y_1}\neq \mathbf{p}_{W|y_2}$, and hence, $\mathbf{p}_{X^n|y_1}\neq \mathbf{p}_{X^n|y_2}$. Since $X_i\independent Y,\ \forall i\in[n]$, Lemma~\ref{lemma2} implies that $(\mathbf{p}_{X^n}-\mathbf{p}_{X^n|y_1}),(\mathbf{p}_{X^n}-\mathbf{p}_{X^n|y_2})\in \mbox{Null}(\mathbf{P})$. 
Moreover, $\mbox{Null}(\mathbf{P})\not\subset\mbox{Null}(\mathbf{P}_{W|X^n})$, since otherwise $\mathbf{P}_{W|X^n}(\mathbf{p}_{X^n}-\mathbf{p}_{X^n|y_1})=\mathbf{P}_{W|X^n}(\mathbf{p}_{X^n}-\mathbf{p}_{X^n|y_2})=\mathbf{0}$, which implies $\mathbf{p}_{W|y_1}=\mathbf{p}_{W|y_2}$ leading to a contradiction.

The second direction is proved as follows. If $\mbox{Null}(\mathbf{P})\not\subset\mbox{Null}(\mathbf{P}_{W|X^n})$, there exists a non-zero vector $\mathbf{v}\in\mbox{Null}(\mathbf{P})$, such that $\mathbf{v}\not\in \mbox{Null}(\mathbf{P}_{W|X^n})$. Let $\mathcal{Y}\triangleq\{y_1,y_2\}$, $Y\sim\mbox{Bern}(\frac{1}{2})$, and for sufficiently small $\epsilon>0$, let $\mathbf{p}_{X^n|y_i}\triangleq\mathbf{p}_{X^n}+(-1)^i\epsilon\mathbf{v}, i\in[2]$. This construction is possible as $\mathbf{p}_{X^n}$ lies in the interior of $\mathcal{P}(\hat{\mathcal{X}})$, and $\mathbf{1}_{|\hat{\mathcal{X}}|}^T\cdot\mathbf{v}=\mathbf{1}_{G}^T\cdot\mathbf{P}\mathbf{v}=0$, which follows from $\mathbf{1}_{|\hat{\mathcal{X}}|}^T=\mathbf{1}_{G}^T\cdot\mathbf{P}$, and $\mathbf{v}\in\mbox{Null}(\mathbf{P})$. Accordingly, since $\mathbf{p}_{X^n}-\mathbf{p}_{X^n|y_i}\in \text{Null}(\mathbf{P}),~i\in[2]$, from Lemma~\ref{lemma2}, we have $X_i\independent Y,\ \forall i\in[n]$. Also, in the construction of the pair $(X^n,Y)$, $\mathbf{p}_{X^n}$ is preserved, as specified in $p_{W,X^n}$. Therefore, we have $W-X^n-Y$. Finally, since $\mathbf{v}\not\in \mbox{Null}(\mathbf{P}_{W|X^n})$, from $\mathbf{p}_{W|y}=\mathbf{P}_{W|X^n}\mathbf{p}_{X^n|y}$, we get $\mathbf{p}_{W|y_1}\neq \mathbf{p}_{W|y_2}$, or equivalently, $I_s>0$.
\end{proof}

The characterization presented in Proposition~\ref{Prop1} can be understood intuitively as follows. Changing $\mathbf{p}_{X^n}$ along the vectors in $\textnormal{Null}(\mathbf{P})$ corresponds to conditional pmfs $p_{X^n|Y}$ whose corresponding $p_{Y|X^n}$ guarantee perfect sample privacy, while changing $\mathbf{p}_{X^n}$ along the vectors in $\textnormal{Null}(\mathbf{P}_{W|X^n})$ corresponds to the conditional pmfs $p_{X^n|Y}$ that result in $W\independent Y$. Therefore, the condition $\textnormal{Null}(\mathbf{P})\not\subset\textnormal{Null}(\mathbf{P}_{W|X^n})$ asks for the existence of conditional probabilities that guarantee perfect sample privacy while introducing statistical dependencies with $W$.

\begin{proposition}\label{pr:upper_bound}
The following upper bound holds for $I_s$:
\begin{equation}\label{uppbound}
I_s\leq \min_{j\in [n]} I(W;X_{-j}^n|X_j) ,
\end{equation}
where $X_{-j}^n \triangleq \{X_1,\ldots,X_n\}\backslash X_j$.
\end{proposition}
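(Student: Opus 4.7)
The plan is to prove, for every admissible $Y$ and every $j \in [n]$, the pointwise bound $I(W;Y) \leq I(W; X_{-j}^n \mid X_j)$; the proposition then follows by taking the minimum over $j$ and the supremum over $Y \in \mathcal{A}$.

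The first step uses perfect sample privacy to promote the unconditional mutual information to a conditional one. Expanding $I(Y; W, X_j)$ by the chain rule in the two possible orders yields
\begin{equation*}
I(Y;W) + I(Y; X_j \mid W) \;=\; I(Y; X_j) + I(Y; W \mid X_j).
\end{equation*}
Since $p_{Y|X^n} \in \mathcal{A}$ forces $Y \independent X_j$, the term $I(Y;X_j)$ vanishes, and since $I(Y;X_j\mid W) \geq 0$ we obtain $I(W;Y) \leq I(W;Y \mid X_j)$.

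The second step extracts a conditional Markov chain from $W - X^n - Y$ and applies the data processing inequality. For every value $x_j$ of $X_j$, the joint conditional distribution factorizes as $p(w, x_{-j}^n, y \mid x_j) = p(x_{-j}^n \mid x_j)\, p(w \mid x_j, x_{-j}^n)\, p(y \mid x_j, x_{-j}^n)$, so $W - X_{-j}^n - Y$ is a Markov chain given $X_j$. The conditional DPI then gives $I(W; Y \mid X_j) \leq I(W; X_{-j}^n \mid X_j)$, and composing with the first step yields the desired pointwise inequality.

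There is no serious obstacle; the subtlety worth flagging is the order of the two moves. A naive attempt that first applies DPI to $W - X^n - Y$ and then splits via the chain rule produces $I(W;Y) \leq I(X^n;Y) = I(X_{-j}^n; Y \mid X_j)$, which still involves $Y$ on the right and carries no $W$ dependence. Conditioning on $X_j$ first --- exploiting sample privacy --- and only then invoking DPI on the conditional Markov chain is what replaces the $Y$ on the right-hand side with $W$.
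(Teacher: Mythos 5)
Your proof is correct, and it is essentially the paper's argument in a more modular arrangement: the paper performs the whole computation as a single chain-rule identity starting from $I(W;Y)=I(W;X^n)-I(W;X^n|Y)$, but the two nonnegative terms it discards, $I(Y;X_j|W)$ and $I(W;X_{-j}^n|X_j,Y)$, are exactly the slack in your two steps (the first from $Y\independent X_j$ giving $I(W;Y)\leq I(W;Y|X_j)$, the second from the conditional data-processing inequality on the chain $W-X_{-j}^n-Y$ given $X_j$). Both proofs use only the Markov chain and sample independence, so the approaches coincide in substance.
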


\begin{proof}
Let $j\in[n]$ be an arbitrary index. Then,
\begin{align}
I(W;Y)=&\:I(W;X^n)-I(W;X^n|Y)\label{e1}\\
=&\:I(W;X_{-j}^n|X_j)+I(W;X_j)-I(W;X_j|Y)-I(W;X_{-j}^n|X_j,Y)\nonumber\\
=&\:I(W;X_{-j}^n|X_j)+I(W;X_j)-I(W,Y;X_j)-I(W;X_{-j}^n|X_j,Y)\label{e2}\\
=&\:I(W;X_{-j}^n|X_j)-I(Y;X_j|W)-I(W;X_{-j}^n|X_j,Y)\nonumber\\
\leq &\:I(W;X_{-j}^n|X_j),\label{fe2}
\end{align}
where (\ref{e1}) follows from the Markov chain $W-X^n-Y$, and (\ref{e2}) from the independence of $X_j$ and $Y$. Since $j$ is chosen arbitrarily, (\ref{fe2}) holds for all $j\in[n]$, resulting in (\ref{uppbound}).
\end{proof}
\begin{remark}
From Proposition~\ref{pr:upper_bound} and noting that $I(W;X^n_{-j}|X_j) = I(W;X^n) - I(W;X_j)$, one can find that in general
\begin{equation}
    I(W;X^n) - I_\text{s} \geq \max_{j\in[n]} I(W;X_j),
\end{equation}
which shows that amount of information that one needs to restrain from sharing for guaranteeing perfect sample privacy is at least equal to the amount of information carried by the most strongly correlated sample. 
\end{remark}
The following example shows that the upper bound in Proposition \ref{pr:upper_bound} is tight.
\begin{example}
Let $X_1$ and $X_2$ be two independent r.v.'s, with $X_1$ and $X_2$ being uniformly distributed over $[M]$ and $[kM]$, respectively, for some positive integers $k,M$. Set $W\triangleq X_1+X_2\ \textnormal{mod}\ M$. It can be readily verified that $W\independent X_i,\ i\in[2]$, and hence, the upper bound in Proposition \ref{pr:upper_bound} reduces to $I_s\leq H(W)$, since $H(W|X_1,X_2)=0$. 
Setting $Y\triangleq X_1+X_2\ \textnormal{mod}\ M$ attains this bound, and we also have $W-(X_1,X_2)-Y$ form a Markov chain\footnote{In this example, instead of writing $Y\triangleq W$, we used $Y\triangleq X_1+X_2\ \textnormal{mod}\ M$ to emphasize on the fact that the privacy mapping does not have access to $W$ in general.}.
\end{example}


\section{The optimal synergistic disclosure mapping}
\label{sec:3}

This section presents a practical method for computing the optimal latent feature disclosure strategy/mapping under perfect sample privacy. In what follows, we assume that $\mbox{nul}(\mathbf{P})\neq 0$, since otherwise we have from Proposition \ref{Prop1} that $I_s=0$, making the result trivial.

\subsection{General solution}

Before stating the main result of this section in Theorem \ref{theo1}, some essential preliminaries are needed as follows. Consider the singular value decomposition (SVD) of $\mathbf{P}$, which gives $\mathbf{P}=\mathbf{U}\mathbf{\Sigma}\mathbf{V}^T$ with the matrix of right eigenvectors being
\begin{equation}\label{col}
\mathbf{V}=\begin{bmatrix}
\mathbf{v}_1&\mathbf{v}_2&\dots&\mathbf{v}_{|\hat{\mathcal{X}}|}
\end{bmatrix}_{|\hat{\mathcal{X}}|\times|\hat{\mathcal{X}}|}.
\end{equation}
By assuming (without loss of generality) that the singular values are arranged in a descending order, only the first $\text{rank}(\mathbf{P})$ singular values are non-zero. Therefore, it is direct to check that the null space of $\mathbf{P}$ is given by
\begin{equation}\label{nulls}
\mbox{Null}(\mathbf{P})=\mbox{Span}\{\mathbf{v}_{\mbox{rank}(\mathbf{P})+1},\mathbf{v}_{\mbox{rank}(\mathbf{P})+2},\ldots,\mathbf{v}_{|\hat{\mathcal{X}}|}\}.
\end{equation}
Let $\mathbf{A}\triangleq\begin{bmatrix}\mathbf{v}_1&\mathbf{v}_2&\dots&\mathbf{v}_{\mbox{rank}(\mathbf{P})}\end{bmatrix}^T$ which, due to the orthogonality of the columns of $\mathbf{V}$, has the useful property $\text{Null}(\mathbf{P}) = \text{Null}(\mathbf{A})$. From Lemma~\ref{lemma2}, having $X_i\independent Y,\ \forall i\in[n]$ is equivalent to
\begin{equation}\label{nul}
\mathbf{A}(\mathbf{p}_{X^n}-\mathbf{p}_{X^n|y})=\mathbf{0},\ \forall y\in\mathcal{Y}.
\end{equation}

Let $\mathbb{S}$ be defined as
\begin{equation}\label{poly}
\mathbb{S}\triangleq\bigg\{\mathbf{t}\in\mathbb{R}^{|\hat{\mathcal{X}}|}\bigg|\mathbf{A}\mathbf{t}=\mathbf{A}\mathbf{p}_{X^n}\ ,\ \mathbf{t}\geq 0\bigg\},
\end{equation}
which is a convex polytope in $\mathcal{P}(\hat{\mathcal{X}})$, since it can be written as the intersection of a finite number of half-spaces in $\mathcal{P}(\hat{\mathcal{X}})$.

In the Markov chain $W-X^n-Y$ with $p_{Y|X^n}\in \mathcal{A}$, one can see from \eqref{nul} that $\mathbf{p}_{X^n|y}\in\mathbb{S},\ \forall y\in\mathcal{Y}$. On the other hand, for any $p_{X^n,Y}$ for which $\mathbf{p}_{X^n|y}\in\mathbb{S},\ \forall y\in\mathcal{Y}$, it is guaranteed that if one uses the corresponding mapping $p_{Y|X^n}$ to build a Markov chain $W-X^n-Y$, then the condition $X_i\independent Y,\forall i \in[n]$ holds. The above arguments prove the following equivalence:
\begin{equation}\label{siz}
W-X^n-Y,\ p_{Y|X^n}\in\mathcal{A}  \Longleftrightarrow \mathbf{p}_{X^n|y}\in\mathbb{S},\ \forall y\in\mathcal{Y}.
\end{equation}
\begin{proposition}
The supremum in (\ref{def}) is attained, and hence, it is a maximum. Furthermore, it is sufficient to have $|\mathcal{Y}|\leq \textnormal{nul}(\mathbf{P})+1$. 
\end{proposition}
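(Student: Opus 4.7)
The plan is to parameterize admissible mappings via the equivalence in~\eqref{siz}, apply the Carath\'eodory--Fenchel--Eggleston theorem to bound $|\mathcal{Y}|$, and then deduce attainment from compactness and continuity. Using~\eqref{siz}, I would describe any admissible mapping by a pair $(p_Y,\{\mathbf{p}_{X^n|y}\}_{y\in\mathcal{Y}})$ with each $\mathbf{p}_{X^n|y}\in\mathbb{S}$ and $\sum_{y\in\mathcal{Y}} p_Y(y)\mathbf{p}_{X^n|y}=\mathbf{p}_{X^n}$, and rewrite the objective as
\begin{equation*}
I(W;Y)\;=\;H(W)\;-\;\sum_{y\in\mathcal{Y}} p_Y(y)\,h(\mathbf{p}_{X^n|y}),\qquad h(\mathbf{p})\triangleq H(\mathbf{P}_{W|X^n}\mathbf{p}),
\end{equation*}
which depends on $(p_Y,\{\mathbf{p}_{X^n|y}\})$ only through the measure $\mu\triangleq\sum_y p_Y(y)\,\delta_{\mathbf{p}_{X^n|y}}$ on $\mathbb{S}$, and is linear in $\mu$.

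The cardinality reduction is then an application of the Fenchel--Eggleston refinement of Carath\'eodory's theorem. Since $\mathbb{S}$ lies in the affine subspace $\{\mathbf{t}:\mathbf{A}\mathbf{t}=\mathbf{A}\mathbf{p}_{X^n}\}$, whose dimension is $|\hat{\mathcal{X}}|-\mbox{rank}(\mathbf{P})=\mbox{nul}(\mathbf{P})$, the graph $\mathcal{Q}\triangleq\{(\mathbf{p},h(\mathbf{p})):\mathbf{p}\in\mathbb{S}\}$ is a connected set (as a continuous image of the convex set $\mathbb{S}$) embedded in an affine manifold of dimension $\mbox{nul}(\mathbf{P})+1$. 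Any admissible mapping produces a point $(\mathbf{p}_{X^n},H(W|Y))\in\mbox{conv}(\mathcal{Q})$, and conversely every element of $\mbox{conv}(\mathcal{Q})$ with first coordinate equal to $\mathbf{p}_{X^n}$ corresponds to an admissible mapping. Fenchel--Eggleston then supplies a representation of $(\mathbf{p}_{X^n},H(W|Y))$ as a convex combination of at most $\mbox{nul}(\mathbf{P})+1$ elements of $\mathcal{Q}$, yielding an admissible mapping with $|\mathcal{Y}|\leq\mbox{nul}(\mathbf{P})+1$ that achieves the same value of $I(W;Y)$.

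Having reduced to $|\mathcal{Y}|=\mbox{nul}(\mathbf{P})+1$ (padding with zero-probability symbols if necessary), the set of admissible tuples $(p_Y(y),\mathbf{p}_{X^n|y})_{y\in\mathcal{Y}}$ satisfying $\mathbf{A}(\mathbf{p}_{X^n|y}-\mathbf{p}_{X^n})=\mathbf{0}$, componentwise nonnegativity, $\sum_y p_Y(y)=1$, and $\sum_y p_Y(y)\mathbf{p}_{X^n|y}=\mathbf{p}_{X^n}$ is closed and bounded in a finite-dimensional Euclidean space, hence compact; since $I(W;Y)$ is a continuous function of these parameters under the convention $0\log 0=0$, Weierstrass' theorem produces a maximizer, so the supremum in~\eqref{def} is attained. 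The main obstacle I anticipate is the cardinality step: obtaining the sharp bound $\mbox{nul}(\mathbf{P})+1$, rather than the $\mbox{nul}(\mathbf{P})+2$ furnished by bare Carath\'eodory, requires exploiting the connectedness of $\mathcal{Q}$, which is inherited from the convexity of $\mathbb{S}$ together with the continuity of $h$.
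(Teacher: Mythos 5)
Your argument is correct and follows essentially the same route as the paper's own proof: the paper likewise invokes the support lemma (Fenchel--Eggleston--Carath\'eodory) applied to the $\textnormal{nul}(\mathbf{P})$ free coordinates of $\mathbf{p}_{X^n|y}$ together with the functional $H(\mathbf{P}_{W|X^n}\mathbf{p}_{X^n|y})$, and then deduces attainment from the compactness of $\mathbb{S}$ and the continuity of $I(W;Y)$. Your explicit remarks on why connectedness of the graph is needed to obtain the sharp bound $\textnormal{nul}(\mathbf{P})+1$ rather than $\textnormal{nul}(\mathbf{P})+2$, and your spelled-out Weierstrass step, only make explicit what the paper leaves implicit.
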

\begin{proof}
The proof is provided in Appendix \ref{app1}. Later, in Corollary \ref{cor2}, it is shown that it is necessary to have $|\mathcal{Y}|\geq \left \lceil \frac{|\hat{\mathcal{X}}|}{\mbox{rank}(\mathbf{P})}\right\rceil$. 
\end{proof}
\begin{theorem}\label{theo1}
The maximizer in (\ref{def}), i.e., the optimal mapping $p^*_{Y|X^n}$, can be obtained as the solution to a standard LP.
\end{theorem}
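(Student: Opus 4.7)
The plan is to recast the optimization (\ref{def}) as a standard LP by parameterizing admissible mappings through the vertices of the polytope $\mathbb{S}$. By the equivalence (\ref{siz}), any feasible $p_{Y|X^n}\in\mathcal{A}$ is encoded by a collection $\{\mathbf{p}_{X^n|y}\}_{y\in\mathcal{Y}}\subset\mathbb{S}$ together with a marginal $p_Y$ such that $\sum_y p_Y(y)\mathbf{p}_{X^n|y}=\mathbf{p}_{X^n}$. Because $W-X^n-Y$ gives $\mathbf{p}_{W|y}=\mathbf{P}_{W|X^n}\mathbf{p}_{X^n|y}$, the objective rewrites as
\begin{equation*}
I(W;Y)=H(W)-\sum_{y}p_Y(y)\,H\!\bigl(\mathbf{P}_{W|X^n}\mathbf{p}_{X^n|y}\bigr),
\end{equation*}
so maximizing $I(W;Y)$ is equivalent to minimizing the right-hand weighted sum over conditionals in $\mathbb{S}$ consistent with $\mathbf{p}_{X^n}$.

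Since $\mathbb{S}$ is a compact convex polytope, it has finitely many vertices $\mathbf{s}_1,\ldots,\mathbf{s}_K$. The key reduction I would prove is that the optimum can be realized with each $\mathbf{p}_{X^n|y}$ being one of these vertices. The argument is a standard splitting/refinement: whenever some $\mathbf{p}_{X^n|y_0}$ is a non-trivial convex combination $\sum_k\beta_k\mathbf{s}_k$, replace the symbol $y_0$ by a family $\{(y_0,k)\}_k$ with probabilities $p_Y(y_0)\beta_k$ and conditionals $\mathbf{s}_k$. The resulting $\tilde Y$ preserves the $X^n$-marginal and the Markov chain $W-X^n-\tilde Y$, still lies in $\mathcal{A}$ (each new conditional is in $\mathbb{S}$), and by concavity of Shannon entropy along the linear map $\mathbf{p}\mapsto\mathbf{P}_{W|X^n}\mathbf{p}$,
\begin{equation*}
H\!\bigl(\mathbf{P}_{W|X^n}\mathbf{p}_{X^n|y_0}\bigr)\;\geq\;\sum_k\beta_k\,H\!\bigl(\mathbf{P}_{W|X^n}\mathbf{s}_k\bigr),
\end{equation*}
so $I(W;\tilde Y)\geq I(W;Y)$. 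Since the preceding proposition guarantees that the supremum is attained, it is attained at a vertex-supported mapping.

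Writing $\alpha_k\triangleq p_Y(k)$ and $c_k\triangleq H(\mathbf{P}_{W|X^n}\mathbf{s}_k)$ (constants that depend only on $p_{W,X^n}$ and on the geometry of $\mathbb{S}$), the remaining problem is the standard LP
\begin{equation*}
\min_{\boldsymbol{\alpha}\geq\mathbf{0}}\;\sum_{k=1}^{K}\alpha_k c_k \quad\text{s.t.}\quad \sum_{k=1}^{K}\alpha_k\mathbf{s}_k=\mathbf{p}_{X^n},
\end{equation*}
the normalization $\sum_k\alpha_k=1$ being implied by premultiplying the equality by $\mathbf{1}_{|\hat{\mathcal{X}}|}^T$ since every $\mathbf{s}_k\in\mathcal{P}(\hat{\mathcal{X}})$. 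Given an optimizer $\boldsymbol{\alpha}^*$, the optimal mapping is recovered by Bayes' rule, $p^*(y=k\,|\,x^n)=\alpha_k^*\,s_k(x^n)/p_{X^n}(x^n)$ for $x^n\in\hat{\mathcal{X}}$, and $I_s=H(W)-\sum_k\alpha_k^* c_k$. The main technical obstacle is the vertex-reduction step: one must verify carefully that splitting $y_0$ preserves the joint distribution and that concavity (not linearity) of $H$ is what makes refinement weakly improve the objective; a secondary caveat is that enumerating the vertex set $\{\mathbf{s}_k\}$ may be combinatorially expensive, but the theorem only asserts LP-reducibility, not polynomial-time tractability.
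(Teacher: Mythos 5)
Your proposal is correct and follows essentially the same route as the paper: rewrite $I(W;Y)=H(W)-\sum_y p_Y(y)H(\mathbf{P}_{W|X^n}\mathbf{p}_{X^n|y})$, use concavity of entropy to reduce the conditionals $\mathbf{p}_{X^n|y}$ to extreme points of $\mathbb{S}$ (your explicit symbol-splitting construction is just a more detailed rendering of the paper's Lemma~\ref{P2}), and then optimize the weights over those vertices as a standard LP with the constraint $\sum_k\alpha_k\mathbf{s}_k=\mathbf{p}_{X^n}$, recovering $p^*_{Y|X^n}$ by Bayes' rule. The only content the paper adds beyond your argument is the concrete enumeration of the extreme points as basic feasible solutions via subsets of linearly independent columns of $\mathbf{A}$, which you correctly note is not needed for the reducibility claim itself.
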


\begin{proof}
We can express \eqref{def} as
\begin{align}
I_\text{s}=&\;H(W) - \min_{\substack{p_{Y|X^n}\in\mathcal{A}:\\W-X^n-Y}} H(W|Y)\\
=&\;H(W)-\min_{
\substack{p_Y(\cdot), \mathbf{p}_{X^n|y}\in\mathbb{S},\ \forall y\in\mathcal{Y}: \\  \sum_y p_Y(y)\mathbf{p}_{X^n|y}=\mathbf{p}_{X^n}}
}
\sum_{y} p_Y(y)H\left(\mathbf{P}_{W|X^n}\mathbf{p}_{X^n|y}\right),
\label{min}
\end{align}
where, since the minimization is over $\mathbf{p}_{X^n|y}$ rather than $p_{Y|X^n}$, the constraint $\sum_y p_Y(y)\mathbf{p}_{X^n|y}=\mathbf{p}_{X^n}$ has been added to preserve the distribution $\mathbf{p}_{X^n}$ specified in $p_{W,X^n}$.

\begin{lemma}\label{P2}
When minimizing $H(W|Y)$ over $\mathbf{p}_{X^n|y}\in\mathbb{S}$ in (\ref{min}), it is sufficient to consider only the extreme points of $\mathbb{S}$. 
\end{lemma}
\begin{proof}
Let $\mathbf{p}$ be an arbitrary point in $\mathbb{S}$. Note that the set $\mathbb{S}$ is an bounded $d$-dimensional convex subset of $\mathbb{R}^{|\hat{\mathcal{X}}|}$, where $d\leq (|\hat{\mathcal{X}}|-1)$. Therefore, any point in $\mathbb{S}$ can be written as a convex combination of at most $|\hat{\mathcal{X}}|$ extreme points of $\mathbb{S}$. Hence, $\mathbf{p}$ can be written as $\mathbf{p}=\sum_{i=1}^{|\hat{\mathcal{X}}|}\alpha_i\mathbf{p}_i$,
where $\alpha_i\geq 0\, (\forall i\in[|\hat{\mathcal{X}}|]),$ with $\sum_{i=1}^{|\hat{\mathcal{X}}|}\alpha_i=1$, and $\mathbf{p}_i \ (\forall i\in[|\hat{\mathcal{X}}|])$ are the extreme points of $\mathbb{S}$ with $\mathbf{p}_i\neq\mathbf{p}_j$ ($i\neq j$).
Due to the concavity of the entropy, one has that
\begin{equation}\label{strict}
H(\mathbf{P}_{W|X^n}\mathbf{p})\geq\sum_{i=1}^{|\hat{\mathcal{X}}|}\alpha_iH(\mathbf{P}_{W|X^n}\mathbf{p}_i).
\end{equation}
Therefore, from (\ref{strict}), it is sufficient to consider only the extreme points of $\mathbb{S}$ in the minimization.
\end{proof}

Using Lemma \ref{P2}, the optimization in (\ref{min}) can be solved in two steps: a first step in which the extreme points of set $\mathbb{S}$ are identified, followed by a second step where proper weights over these extreme points are obtained to minimize the objective function.

For the first step, we first note that the extreme points of $\mathbb{S}$ are the corresponding \textit{basic feasible solutions} (c.f. \cite{LP1}, \cite{LP2}) of the polytope in standard form
\begin{equation*}
\bigg\{\mathbf{t}\in\mathbb{R}^{|\hat{\mathcal{X}}|}\bigg|\mathbf{A}\mathbf{t}=\mathbf{b}\ ,\ \mathbf{t}\geq 0\bigg\},
\end{equation*}
with $\mathbf{b}=\mathbf{A}\mathbf{p}_{X^n}$. A standard procedure to find the extreme points of $\mathbb{S}$ is as follows~\cite[Sec. 2.3]{LP1}. Pick a set $\mathcal{B}\subset[|\hat{\mathcal{X}}|]$ of indices that correspond to $\mbox{rank}(\mathbf{P})$ linearly independent columns of matrix $\mathbf{A}$. Let $\mathbf{A}_{\mathcal{B}}$ be a $\mbox{rank}(\mathbf{P})\times\mbox{rank}(\mathbf{P})$ matrix whose columns are the columns of $\mathbf{A}$ indexed by the indices in $\mathcal{B}$. Also, for any $\mathbf x\in\mathbb{S}$, let $\tilde{\mathbf x}=\begin{bmatrix}\mathbf{x}_{\mathcal{B}}^T&\mathbf{x}_{\mathcal{N}}^T\end{bmatrix}^T$, where $\mathbf{x}_{\mathcal{B}}$ and $\mathbf{x}_{\mathcal{N}}$ are $\mbox{rank}(\mathbf{P})$-dimensional and $\mbox{nul}(\mathbf{P})$-dimensional vectors whose elements are the elements of $\mathbf{x}$ indexed by the indices in $\mathcal{B}$ and $[|\mathcal{X}|]\backslash\mathcal{B}$, respectively. For any basic feasible solution $\mathbf{x}^*$, there exists a set $\mathcal{B}\subset[|\mathcal{X}|]$ of indices that correspond to a set of linearly independent columns of $\mathbf{A}$, such that the corresponding vector of $\mathbf{x}^*$, i.e. $\tilde{\mathbf{x}}^*=\begin{bmatrix}{\mathbf{x}^*_\mathcal{B}}^T&{\mathbf{x}^*_\mathcal{N}}^T\end{bmatrix}^T$, satisfies the following
\begin{equation*}
\mathbf{x}_\mathcal{N}^*=\mathbf{0},\ \ \ \mathbf{x}_\mathcal{B}^*=\mathbf{A}_\mathcal{B}^{-1}\mathbf{b},\ \ \ \mathbf{x}_\mathcal{B}^*\geq 0.
\end{equation*}
On the other hand, for any set $\mathcal{B}\subset[|\mathcal{X}|]$ of indices that correspond to a set of linearly independent columns of $\mathbf{A}$, 
if $\mathbf{A}_\mathcal{B}^{-1}\mathbf{b}\geq 0$, then $\begin{bmatrix}
\mathbf{A}_\mathcal{B}^{-1}\mathbf{b}\\\mathbf{0}_\mathcal{N}
\end{bmatrix}$ is the corresponding vector of a basic feasible solution.
Hence, the extreme points of $\mathbb{S}$ are obtained as mentioned above, and their number is upper bounded by $|\hat{\mathcal{X}}|\choose {\mbox{rank}(\mathbf{P})}$. The general procedure of finding the extreme points of $\mathbb{S}$ is shown as a pseudocode in Algorithm~\ref{alg1}.
\begin{algorithm}\label{alg1}
\caption{Finding the extreme points of $\mathbb{S}$}
\begin{algorithmic}[1]
\Function{FindExtremePoints}{$\mathbf{p}_{X^n}$}
\State $\mathbf{P}$ =  \texttt{BuildMatrix}$(\mathbf{p}_{X^n})$
\State $\mathbf{U},\mathbf{\Sigma},[\mathbf{v}_1,\dots,\mathbf{v}_{|\hat{\mathcal{X}}|}]^T = \text{SVD}(\mathbf{P})$
\State $\mathbf{A} = [\mathbf{v}_1,\dots,\mathbf{v}_{\text{rank}(\mathbf{\Sigma})}]^T$
\State $\mathbf{b} = \mathbf{A}\mathbf{p}_{X^n}$
\State $K=0$
\State $\mathcal{B}_1,\dots,\mathcal{B}_J$ = Subsets of $[|\hat{\mathcal{X}}|]$ with cardinality $\text{rank}(\mathbf{\Sigma})$  
\For{$j=1,\dots,J$}
\If{$\mathbf{A}^{-1}_{\mathcal{B}_j} \mathbf{b} \geq 0$}
\State $K=K+1$
\State $\mathbf{p}_K = [ \mathbf{A}^{-1}_{\mathcal{B}_j} \mathbf{b}, \mathbf{0}_\mathcal{N} ]^T$
\EndIf
\EndFor
\State \Return $\mathbf{p}_1,\dots,\mathbf{p}_K$
\EndFunction
\end{algorithmic}
\end{algorithm}

For the second step, assume that the extreme points of $\mathbb{S}$, found in the first step, are denoted by $\mathbf{p}_1,\mathbf{p}_2,\ldots,\mathbf{p}_K$. Then, computing (\ref{min}) is equivalent to solving
\begin{align}
H(W)-&\min_{\mathbf{u}\geq 0}\ \begin{bmatrix}
H(\mathbf{P}_{W|X^n}\mathbf{p}_1)&\dots&H(\mathbf{P}_{W|X^n}\mathbf{p}_K)
\end{bmatrix}\cdot\mathbf{u}\nonumber\\
&\ \ \ \mbox{s.t. }\begin{bmatrix}
\mathbf{p}_1&\mathbf{p}_2&\dots&\mathbf{p}_K
\end{bmatrix}\mathbf{u}=\mathbf{p}_{X^n},\label{LP}
\end{align}
where $\mathbf{u}$ is a $K$-dimensional weight vector, and it can be verified that the constraint $\mathbf{1}_K^T\cdot \mathbf{u}=1$ is satisfied if the constraint in (\ref{LP}) is met. The problem in (\ref{LP}) is a standard LP.
\end{proof}
\begin{corollary}\label{cor2}
In the evaluation of (\ref{def}), it is necessary to have $|\mathcal{Y}|\geq \left \lceil \frac{|\hat{\mathcal{X}}|}{\textnormal{rank}(\mathbf{P})}\right\rceil$.
\end{corollary}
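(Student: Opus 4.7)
The plan is to combine two facts established in the proof of Theorem~\ref{theo1}: (i) by Lemma~\ref{P2}, the supremum in (\ref{def}) is attained with each $\mathbf{p}_{X^n|y}$ being an extreme point of $\mathbb{S}$; and (ii) each such extreme point is a basic feasible solution, and therefore has at most $\mbox{rank}(\mathbf{P})$ nonzero entries (since in the characterization $\tilde{\mathbf{x}}^* = [{\mathbf{x}^*_\mathcal{B}}^T\ {\mathbf{x}^*_\mathcal{N}}^T]^T$ one has $\mathbf{x}^*_\mathcal{N} = \mathbf{0}$, which forces at least $\mbox{nul}(\mathbf{P})$ entries to vanish).

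Next, I would use the fact that $\mathbf{p}_{X^n}$ lies in the \emph{interior} of $\mathcal{P}(\hat{\mathcal{X}})$, so every one of its $|\hat{\mathcal{X}}|$ entries is strictly positive. Writing the marginalization constraint as
\begin{equation*}
\mathbf{p}_{X^n} = \sum_{y\in\mathcal{Y}} p_Y(y)\,\mathbf{p}_{X^n|y},
\end{equation*}
and noting that only values of $y$ with $p_Y(y)>0$ contribute, the union of the supports of the conditional vectors $\{\mathbf{p}_{X^n|y}\}_{y\in\mathcal{Y}}$ must cover every index of $\hat{\mathcal{X}}$; otherwise some coordinate of the right-hand side would be zero while the corresponding coordinate of $\mathbf{p}_{X^n}$ is strictly positive.

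A simple counting argument then concludes the proof: since each $\mathbf{p}_{X^n|y}$ has at most $\mbox{rank}(\mathbf{P})$ nonzero components, covering all $|\hat{\mathcal{X}}|$ indices requires at least
\begin{equation*}
|\mathcal{Y}| \;\geq\; \left\lceil \frac{|\hat{\mathcal{X}}|}{\mbox{rank}(\mathbf{P})}\right\rceil
\end{equation*}
distinct values of $y$, as claimed. The main (and only mildly subtle) point is justifying that the \emph{union-of-supports} argument applies: this needs the combination of Lemma~\ref{P2} (which lets us restrict to extreme points) with the fact that $\mathbf{p}_{X^n}$ is interior to the simplex, ruling out any cancellation or vanishing coordinates on the right-hand side of the marginalization equation.
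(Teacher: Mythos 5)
Your proof is correct and follows essentially the same route as the paper's: the extreme points of $\mathbb{S}$ have at most $\textnormal{rank}(\mathbf{P})$ nonzero entries, so writing the fully supported vector $\mathbf{p}_{X^n}$ as a convex combination of them requires at least $\left\lceil |\hat{\mathcal{X}}|/\textnormal{rank}(\mathbf{P})\right\rceil$ of them. You merely make explicit the union-of-supports argument (using that $\mathbf{p}_{X^n}$ lies in the interior of $\mathcal{P}(\hat{\mathcal{X}})$, so no coordinate can vanish on the right-hand side) that the paper leaves implicit.
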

\begin{proof}
From the procedure of finding the extreme points of $\mathbb{S}$, it is observed that these points have at most $\textnormal{rank}(\mathbf{P})$ non-zero elements. Therefore, in order to write the $|\hat{\mathcal{X}}|$-dimensional probability vector $\mathbf{p}_{X^n}$ as a convex combination of the extreme points of $\mathbb{S}$, at least $\left\lceil \frac{|\hat{\mathcal{X}}|}{\mbox{rank}(\mathbf{P})}\right\rceil$ points are needed, which results in $|\mathcal{Y}|\geq\left\lceil \frac{|\hat{\mathcal{X}}|}{\mbox{rank}(\mathbf{P})}\right\rceil$.
\end{proof}
\begin{corollary}\label{cor1}
For a given dataset $X^n$, we can write
\begin{align}
\min_{\substack{p_{Y|X^n}\in\mathcal{A}}}H(X^n|Y)&\leq\log(\textnormal{rank}(\mathbf{P}))\label{uni1}\\
&\leq \log\bigg(\min\bigg\{\sum_{i=1}^n|\mathcal{X}_i|-n+1,|\hat{\mathcal{X}}|\bigg\}\bigg)\label{uni2}.
\end{align}
\begin{proof}
In (\ref{uni1}), we have used the fact that i) it is sufficient to consider those $\mathbf{p}_{X^n|y}$ that belong to the set of extreme points of $\mathbb{S}$, ii) these extreme points have at most $\textnormal{rank}(\mathbf{P})$ non-zero elements, and iii) entropy is maximized by the uniform distribution. The upper bound in (\ref{uni2}) follows from the fact that the rows of $\mathbf{P}$ are linearly dependent, since we have $\mathbf{1}_{|\mathcal{X}_i|}^T\cdot\mathbf{P}_{X_i|X^n}=\mathbf{1}_{|\hat{\mathcal{X}}|}^T,\ \forall i\in[n]$, which means that there are at most $\sum_{i=1}^n|\mathcal{X}_i|-(n-1)$ linearly independent rows in $\mathbf{P}$.
\end{proof}
 
\end{corollary}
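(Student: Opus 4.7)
The plan is to leverage the structural result established in the proof of Theorem~\ref{theo1}: the optimization over $p_{Y|X^n} \in \mathcal{A}$ (equivalently, over $\mathbf{p}_{X^n|y} \in \mathbb{S}$) can be restricted to conditional distributions whose columns $\mathbf{p}_{X^n|y}$ are extreme points of $\mathbb{S}$. This is exactly the same principle that drove the LP reduction, except that now the objective is $H(X^n|Y)$ instead of $H(W|Y)$; the concavity of entropy in $\mathbf{p}_{X^n|y}$ still yields the same reduction.

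For the first inequality, I would argue as follows. Each extreme point of $\mathbb{S}$ is a basic feasible solution of $\{\mathbf{t} \geq 0 : \mathbf{A}\mathbf{t} = \mathbf{b}\}$ with $\mathbf{b} = \mathbf{A}\mathbf{p}_{X^n}$, and from the construction in Theorem~\ref{theo1}, any such point has at most $\mathrm{rank}(\mathbf{A}) = \mathrm{rank}(\mathbf{P})$ nonzero entries. Therefore, for each $y \in \mathcal{Y}$, the support of $p_{X^n|Y=y}$ has size at most $\mathrm{rank}(\mathbf{P})$, so by the standard maximum-entropy bound on discrete distributions with restricted support, $H(X^n|Y=y) \leq \log(\mathrm{rank}(\mathbf{P}))$. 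Averaging over $y$ gives $H(X^n|Y) \leq \log(\mathrm{rank}(\mathbf{P}))$, which is the bound in \eqref{uni1}.

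For the second inequality, I would bound $\mathrm{rank}(\mathbf{P})$ in two ways. Trivially, $\mathrm{rank}(\mathbf{P}) \leq |\hat{\mathcal{X}}|$ since that is the number of columns. For the other half of the minimum, observe that each block $\mathbf{P}_{X_i|X^n}$ is a stochastic matrix in which each column sums to one, i.e.\ $\mathbf{1}_{|\mathcal{X}_i|}^T \cdot \mathbf{P}_{X_i|X^n} = \mathbf{1}_{|\hat{\mathcal{X}}|}^T$ for every $i \in [n]$. This gives $n$ identical row-sum identities among the $G = \sum_{i=1}^n |\mathcal{X}_i|$ rows of $\mathbf{P}$, hence $n-1$ independent linear dependencies, so that $\mathrm{rank}(\mathbf{P}) \leq G - (n-1) = \sum_{i=1}^n |\mathcal{X}_i| - n + 1$. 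Combining both bounds yields \eqref{uni2}.

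I do not expect significant obstacles: the first inequality is essentially a recycling of the extreme-point argument already used for $H(W|Y)$, and the second is a direct linear-algebraic observation about the block-stochastic structure of $\mathbf{P}$. The only subtle point worth double-checking is that the extreme-point reduction really does transfer from $H(W|Y)$ to $H(X^n|Y)$, but since $H(X^n|Y=y)$ is concave in $\mathbf{p}_{X^n|y}$ just as $H(\mathbf{P}_{W|X^n}\mathbf{p}_{X^n|y})$ is, the same Jensen-based argument in Lemma~\ref{P2} applies verbatim.
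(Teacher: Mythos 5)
Your proposal is correct and follows essentially the same route as the paper: restrict to conditionals supported on extreme points of $\mathbb{S}$, bound each $H(X^n|Y=y)$ by $\log(\textnormal{rank}(\mathbf{P}))$ via the support-size/maximum-entropy argument, and then bound $\textnormal{rank}(\mathbf{P})$ using both the column count $|\hat{\mathcal{X}}|$ and the $n-1$ independent row dependencies coming from $\mathbf{1}_{|\mathcal{X}_i|}^T\cdot\mathbf{P}_{X_i|X^n}=\mathbf{1}_{|\hat{\mathcal{X}}|}^T$. Your added remark that the extreme-point reduction transfers from $H(W|Y)$ to $H(X^n|Y)$ by the same concavity argument is a valid and worthwhile clarification, but does not change the substance of the proof.
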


Following the proof of Theorem~\ref{theo1}, Algorithm~\ref{alg2} provides a summary of how to compute the optimal disclosure mapping, using as inputs $p_{W,X^n}$. Its procedure is illustrated in example~\ref{Ex1}. Although it serves it purpose, the performance of Algorithm~\ref{alg2} scales poorly with the dataset size $n$. Suboptimal procedures to build perfectly-private mappings are discussed in Section~\ref{sec:heu}. 
\begin{algorithm}\label{alg2}
\caption{Building the optimal disclosure mapping $p^*_{Y|X^n}$}
\begin{algorithmic}[1]
\Function{FindOptimalMapping}{$\mathbf{P}_{W|X^n},\mathbf{p}_{X^n}$}
\State $\mathbf{p}_1,\dots, \mathbf{p}_K$ = \texttt{FindExtremePoints}$(\mathbf{p}_{X^n})$
\For{$k=1,\dots,K$}
\State $c_k = H(\mathbf{P}_{W|X^n}\mathbf{p}_k)$
\EndFor
\State Find $\mathbf{u}^* = \text{Argmin} \sum_{k=1}^K u_k c_k$ s.t. $[\mathbf{p}_1,\dots,\mathbf{p}_K]\mathbf{u} = \mathbf{p}_{X^n}$ and $\mathbf{u}\geq 0$
\State $L=0$
\For{$k=1,\dots,K$}
\If{$u_k > 0$}
\State $L=L+1$
\State $p(Y=L) = u_k$
\State $\mathbf{p}_{X^n|Y=L} = \mathbf{p}_k$
\EndIf
\EndFor
\State $\mathbf{p}_Y = [p(Y=1),\dots,p(Y=L)]$
\State $\mathbf{P}_{X^n|Y} = [\mathbf{p}_{X^n|Y=1},\dots,\mathbf{p}_{X^n|Y=L}]$
\State $\mathbf{P}_{Y|X^n} = \text{diag}(\mathbf{p}_Y) \cdot \mathbf{P}_{X^n|Y}^T \cdot \text{diag}(\mathbf{p}_{X^n})^{-1}$
\State \Return $\mathbf{P}_{Y|X^n}$
\EndFunction
\end{algorithmic}
\end{algorithm}
\begin{example}\label{Ex1}
Let $W\sim\mbox{Bern}(\frac{1}{2})$ be the random variable that the user wishes to share with an analyst, and assume that the user has data samples denoted by $X_1$ and $X_2$, which are, respectively, the observations of $W$ through a binary symmetric channel with crossover probability $\alpha$, i.e., BSC($\alpha$), and a binary erasure channel with erasure probability $e$, i.e., BEC($e$). Figure \ref{F1} provides an illustrative representation of this setting. Set $\alpha=\frac{2}{3}$, and $e = \frac{1}{2}$, which results in $\mathbf{p}_{X^2}=\frac{1}{12}\begin{bmatrix}
1&3&2&2&3&1
\end{bmatrix}^T$, and 
\begin{equation}
\mathbf{P}_{W|X^2}=\begin{bmatrix}
1&\frac{2}{3}&0&1&\frac{1}{3}&0\\
0&\frac{1}{3}&1&0&\frac{2}{3}&1
\end{bmatrix}.
\end{equation}
Matrix $\mathbf{P}$ in (\ref{matp}) is given by
\begin{equation*}
\mathbf{P}=\begin{bmatrix}
1&1&1&0&0&0\\0&0&0&1&1&1\\1&0&0&1&0&0\\0&1&0&0&1&0\\0&0&1&0&0&1
\end{bmatrix},
\end{equation*}
and by obtaining an SVD of $\mathbf{P}$, we obtain matrix $\mathbf{A}$\footnote{Note that $\mathbf{A}$ is not unique.} as 
\begin{equation*}
\footnotesize
\mathbf{A}=
\frac{1}{10^{4}}\begin{bmatrix}
4082&4082&4082&4082&4082&4082\\
4082&4082&4082&-4082&-4082&-4082\\
-4677&5270&-593&-4677&5270&-593\\
-3385&-2385&5743&-3385&-2358&5743
\end{bmatrix}.
\end{equation*}
There are at most 15 ways of choosing 4 linearly independent columns of $\mathbf{A}$. From $\mathbf{x}_{\mathcal{B}}=\mathbf{A}_{\mathcal{B}}^{-1}\mathbf{A}\mathbf{p}_{X^2}$, and the condition $\mathbf{x}_{\mathcal{B}}\geq 0$, we obtain the extreme points of $\mathbb{S}$ as
\begin{align*}
\mathbf{p}_{1}=\begin{bmatrix}
\frac{1}{4}\\0\\\frac{1}{4}\\0\\\frac{1}{2}\\0
\end{bmatrix}, 
\mathbf{p}_{2}=\begin{bmatrix}
0\\ \frac{1}{2}\\0\\\frac{1}{4}\\0\\\frac{1}{4}
\end{bmatrix},
\mathbf{p}_{3}=\begin{bmatrix}
\frac{1}{4}\\\frac{1}{4}\\0\\0\\\frac{1}{4}\\\frac{1}{4}
\end{bmatrix},
\mathbf{p}_{4}=\begin{bmatrix}
0\\\frac{1}{4}\\\frac{1}{4}\\\frac{1}{4}\\\frac{1}{4}\\0
\end{bmatrix}.
\end{align*}
Finally, the LP is given by
\begin{align}
&\min_{\mathbf{u}\geq 0}\ \begin{bmatrix}
H(\mathbf{P}_{W|X^2}\mathbf{p}_1)&\ldots&H(\mathbf{P}_{W|X^2}\mathbf{p}_4)
\end{bmatrix}\cdot\mathbf{u}=0.9866\mbox{ bits}\nonumber\\
&\ \ \ \mbox{s.t. }\begin{bmatrix}
\mathbf{p}_1&\mathbf{p}_2&\mathbf{p}_3&\mathbf{p}_4
\end{bmatrix}\mathbf{u}=\mathbf{p}_{X^2},
\end{align}
where $\mathbf{u}^*=\begin{bmatrix}
\frac{1}{3}&\frac{1}{3}&0&\frac{1}{3}
\end{bmatrix}^T$. Therefore, the maximum information that can be shared with an analyst about $W$, while preserving the privacy of the observations, is $I_s=0.0134$ bits, which is achieved by the following synergistic disclosure strategy
\begin{equation}
\mathbf{P}^*_{Y|X^2}=\begin{bmatrix}
1&0&\frac{1}{2}&0&\frac{2}{3}&0\\
0&\frac{2}{3}&0&\frac{1}{2}&0&1\\
0&\frac{1}{3}&\frac{1}{2}&\frac{1}{2}&\frac{1}{3}&0
\end{bmatrix}.
\end{equation}
\end{example}

    \begin{figure}[!t]
    \begin{center}
    \begin{tikzpicture}
      \node[dspnodeopen, minimum width=4pt, dsp/label=above] (X1_1) {\large{$X_1$}};    
      \node[dspnodeopen, minimum width=4pt, below=1.5cm of X1_1, dsp/label=above] (X1_0) {};
      \node[dspnodeopen, minimum width=4pt, right=2.4cm of X1_1, dsp/label=above] (W_1) {\large{$W$}};    
      \node[dspnodeopen, minimum width=4pt, below=1.5cm of W_1, dsp/label=above] (W_0) {};
      \node[dspnodeopen, minimum width=4pt, below right=0.7cm and 2.4cm of W_1, dsp/label=above] (X2_0) {};
      \node[dspnodeopen, minimum width=4pt, above=1.2cm of X2_0, dsp/label=above] (X2_1) {\large{$X_2$}};
      \node[dspnodeopen, minimum width=4pt, below=1.2cm of X2_0, dsp/label=above] (X2_a) {};
    \draw[line width=1.2pt] (X1_1) to node[midway, above]{\small{$1-\alpha$}} (W_1);
    \draw[line width=1.2pt] (X1_1) to node[near end, above]{\small{$\alpha$}} (W_0);
    \draw[line width=1.2pt] (X1_0) to node[near end, below]{\small{$\alpha$}} (W_1);
    \draw[line width=1.2pt] (X1_0) to node[midway, below]{\small{$1-\alpha$}} (W_0);
    \draw[line width=1.2pt] (W_1) to node[midway, above]{\small{$1-e$}} (X2_1);
    \draw[line width=1.2pt] (W_1) to node[near start, below]{\small{$e$}} (X2_0);
    \draw[line width=1.2pt] (W_0) to node[near start, above]{\small{$e$}} (X2_0);
    \draw[line width=1.2pt] (W_0) to node[midway, below]{\small{$1-e$}} (X2_a);
    \end{tikzpicture}
    \caption{Example \ref{Ex1}, where $X_1$ and $X_2$ are observations of $W$ through a $BSC(\alpha)$ and a $BEC(e)$, respectively.} \label{F1}
    \end{center}
    \end{figure}
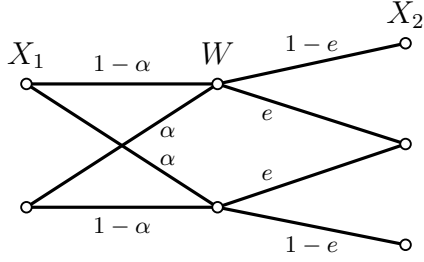


\subsection{Two binary samples}
\label{tworand}

To illustrate the above results, in what follows, we consider the case where two binary (noisy) observations $X_1,X_2$ of an underlying phenomenon $W$ are available. As before, the goal is to maximally inform an analyst about $W$, while preserving the privacy of both observations.

Consider the tuple $(W,X_1,X_2)$ distributed according to a given joint distribution 
$p_{W,X_1,X_2} = p_{X_1,X_2} p_{W|X_1,X_2} $. In this setting, no condition is imposed on the conditional $p_{W|X_1,X_2}$. Without loss of generality, $p_{X_1,X_2}$ is parametrized as
\begin{equation}\label{px}
\mathbf{p}_{X^2}=\begin{bmatrix}\alpha-r&r&(\beta-\alpha)+r&(1-\beta)-r \end{bmatrix}^T,
\end{equation}
where $\alpha,\beta \in (0,1)$ are degrees of freedom that determine the marginals , i.e., $X_1\sim\mbox{Bern}(\alpha)$ and $X_2\sim\mbox{Bern}(\beta)$, while $r\in[0,R]$ with $R\triangleq \min\{\alpha,1-\beta\}$ determines the interdependency between $X_1$ and $X_2$. In particular, $X_1\independent X_2$, if and only if $r=\alpha(1-\beta)$. 

If $r\in(0,R)$\footnote{For the uninteresting cases where $r\in\{0,R\}$, we have $|\hat{\mathcal{X}}|<4$ and $\mbox{nul}(\mathbf{P})=0$. Consequently, from Proposition~\ref{Prop1},  we get $I_s=0$.}, we have $\hat{\mathcal{X}}=\{(0,0),(0,1),(1,0),(1,1)\}$, and correspondingly one finds that
\begin{equation*}
\mathbf{P}=
\begin{bmatrix}
1&1&0&0\\0&0&1&1\\1&0&1&0\\0&1&0&1
\end{bmatrix}.
\end{equation*}
A direct calculation shows that $\text{Null}(\mathbf{P})$ is spanned by the single vector $\mathbf{n}=\begin{bmatrix}1&-1&-1&1\end{bmatrix}^T$. As the null space of $\mathbf{P}$ is one-dimensional, one can check that $\mathbb{S}$ has only two extreme points given by $\mathbf{a}_1 = \mathbf{p}_{X^2} - (R-r)\mathbf{n}$ and $\mathbf{a}_2 = \mathbf{p}_{X^2} +r\mathbf{n}$ (see Figure~\ref{F2}). Note that the original distribution can be recovered as a convex combination of these two extreme points, i.e.,
\begin{equation}\label{recover}
\mathbf{p}_{X^2} = \frac{r}{R}\mathbf{a}_1 + \frac{R-r}{R}\mathbf{a}_2.
\end{equation}
Therefore, using \eqref{min}, $I_\text{s}$ can be computed as
\begin{align}
I_\text{s} &= H(W) - \frac{r}{R}H(\mathbf{P}_{W|X^2}\mathbf{a}_1) 
- \frac{R-r}{R}H(\mathbf{P}_{W|X^2}\mathbf{a}_2)\nonumber\\
&=H(\mathbf{p}_W)- \frac{r}{R}H\left(\mathbf{p}_W-(R-r)\mathbf{P}_{W|X^2}\mathbf{n}\right)\nonumber\\
&\ \ \ -\frac{R-r}{R}H\left(\mathbf{p}_W+r\mathbf{P}_{W|X^2}\mathbf{n}\right). \label{sol}
\end{align}
From the last expression, it is direct to verify that, $I_\text{s} >0$ if and only if $\mathbf{n}\not\in\mbox{Null}(\mathbf{P}_{W|X^2})$. 
\begin{figure}
\begin{centering}
\includegraphics[scale=0.275]{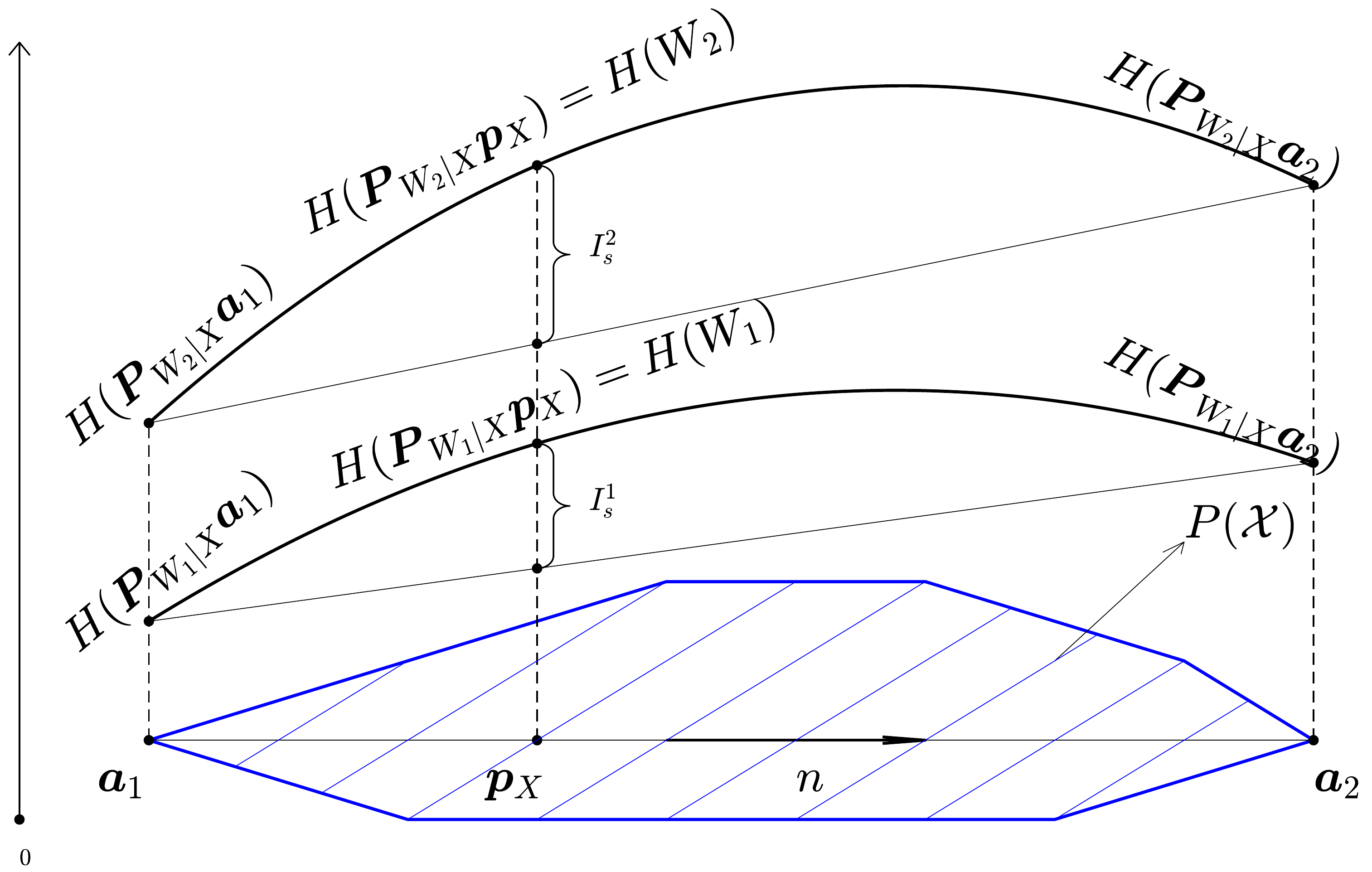}
\caption{Diagram of private information disclosure for two tuples $(W_1,X_1,X_2)$ and $(W_2,X_1,X_2)$, where $(X_1,X_2)$ are binary and distributed according to $\mathbf{p}_X$ as given in \eqref{px}, and $p_{W_1|X}\neq p_{W_1|X}$. While their private disclosure capacities, i.e., $I_s^i,\ i=1,2$, are different, their optimal synergistic disclosure strategies are the same, as regardless of the tuples, we have $\mathbf{p}_X = \frac{r}{R}\mathbf{a}_1 + \frac{R-r}{R}\mathbf{a}_2.$}\label{F2}
\par\end{centering}
\vspace{0mm}
\end{figure}
Finally, the optimal mapping $\mathbf{P}^*_{Y|X^2}$ is derived as follows. Considering \eqref{recover}, let $\mathcal{Y}\triangleq\{y_1,y_2\}$, and fix $p_Y(y_1)=\frac{r}{R}$ and $\mathbf{p}_{X^2|y_i}=\mathbf{a}_i, i=1,2$. Using these, a direct calculation results in the following optimal mapping 
\begin{equation}\label{ma}
\mathbf{P}^*_{Y|X^2}=\begin{bmatrix}\frac{r(\alpha-R)}{R(\alpha-r)}&1&\frac{r(\beta-\alpha+R)}{R(\beta-\alpha+r)}&\frac{r(1-\beta-R)}{R(1-\beta-r)}\\\frac{\alpha(R-r)}{R(\alpha-r)}&0&\frac{(\beta-\alpha)(R-r)}{R(\beta-\alpha+r)}&\frac{(1-\beta)(R-r)}{R(1-\beta-r)}\end{bmatrix}.
\end{equation}
It is important to note that, although the disclosure
capacity in (\ref{sol}) depends on the choice of $\mathbf{P}_{W|X^2}$,
the optimal synergistic disclosure strategy in (\ref{ma}) is only a functional of $\mathbf{p}_{X^2}$ (or equivalently, $\alpha,\beta,r$), and does not depend on $\mathbf{P}_{W|X^2}$. This observation is a special case of the following proposition.

\begin{proposition}\label{P4}
For the tuple $(W,X^n)$, in which $|\hat{\mathcal{X}}|=\sum_{i=1}^n|{\mathcal{X}_i}|-n +2$, the optimal synergistic disclosure strategy, i.e., $\mathbf{P}^*_{Y|X^n}$, does not depend on $p_{W|X^n}$.
\end{proposition}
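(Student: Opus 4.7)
The plan is to exploit the hypothesis $|\hat{\mathcal{X}}|=\sum_{i=1}^n|\mathcal{X}_i|-n+2$ to show that the polytope $\mathbb{S}$ in (\ref{poly}) reduces to a line segment with exactly two extreme points, and that the representation of $\mathbf{p}_{X^n}$ as a convex combination of those two points is therefore forced to be unique. Since the LP in (\ref{LP}) then admits only one feasible weight vector $\mathbf{u}^*$ regardless of the objective, the resulting optimal mapping cannot depend on the cost coefficients $H(\mathbf{P}_{W|X^n}\mathbf{p}_k)$, and hence cannot depend on $p_{W|X^n}$.

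First I would combine the hypothesis with the rank bound from Corollary \ref{cor1}, which yields $\text{rank}(\mathbf{P})\leq\sum_{i=1}^n|\mathcal{X}_i|-n+1=|\hat{\mathcal{X}}|-1$ and hence $\text{nul}(\mathbf{P})\geq 1$. For the matching upper bound on the nullity, I would argue using the hypergraph interpretation of Remark~1 that under this tight support condition the only linear dependencies among the rows of $\mathbf{P}$ are the $n-1$ block-sum redundancies $\mathbf{1}_{|\mathcal{X}_i|}^T\cdot\mathbf{P}_{X_i|X^n}=\mathbf{1}_{|\hat{\mathcal{X}}|}^T$, so $\text{rank}(\mathbf{P})$ saturates the above bound and $\text{nul}(\mathbf{P})=1$. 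Granting this, the affine set $\{\mathbf{t}:\mathbf{A}\mathbf{t}=\mathbf{A}\mathbf{p}_{X^n}\}$ is a single line through $\mathbf{p}_{X^n}$, and so $\mathbb{S}$ is a one-dimensional line segment with exactly two extreme points $\mathbf{a}_1,\mathbf{a}_2$, obtained as the intersections of that line with the boundary of $\mathcal{P}(\hat{\mathcal{X}})$. Both endpoints depend only on $\mathbf{p}_{X^n}$ and on the generator of $\text{Null}(\mathbf{P})$ (unique up to sign), and in particular neither involves $p_{W|X^n}$.

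Next, since $\mathbf{p}_{X^n}$ lies in the interior of $\mathcal{P}(\hat{\mathcal{X}})$, it sits strictly between $\mathbf{a}_1$ and $\mathbf{a}_2$ on this segment, so there is a \emph{unique} $\lambda^*\in(0,1)$ with $\mathbf{p}_{X^n}=\lambda^*\mathbf{a}_1+(1-\lambda^*)\mathbf{a}_2$. By Lemma~\ref{P2} the LP in (\ref{LP}) may be restricted to the two columns $[\mathbf{a}_1\ \mathbf{a}_2]$, and then the equality constraint $[\mathbf{a}_1\ \mathbf{a}_2]\mathbf{u}=\mathbf{p}_{X^n}$ with $\mathbf{u}\geq 0$ has the single solution $\mathbf{u}^*=(\lambda^*,1-\lambda^*)^T$, independently of the cost vector. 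Feeding $\mathbf{u}^*$, $\mathbf{a}_1$ and $\mathbf{a}_2$ into Algorithm~\ref{alg2} then produces $\mathbf{P}^*_{Y|X^n}$ as a function of $\mathbf{p}_{X^n}$ alone, establishing the proposition.

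The main obstacle is the saturation of the rank bound, i.e.\ showing $\text{nul}(\mathbf{P})$ is exactly~$1$ rather than merely $\geq 1$: Corollary~\ref{cor1} delivers only one direction, and the reverse requires a combinatorial argument ruling out any extra dependencies among the rows of the hypergraph incidence matrix beyond the $n-1$ block-sum ones. If this step failed and $\text{nul}(\mathbf{P})\geq 2$, then $\mathbb{S}$ would be multi-dimensional and typically admit multiple decompositions of $\mathbf{p}_{X^n}$, in which case the cost coefficients would break ties between extreme points and the conclusion could fail; hence the structural claim about row dependencies in $\mathbf{P}$ is where the heart of the proof lies.
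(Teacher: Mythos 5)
Your proposal follows the same route as the paper's proof: under the stated cardinality condition one argues $\textnormal{nul}(\mathbf{P})=1$, so that $\mathbb{S}$ is a one-dimensional segment with exactly two extreme points, the convex decomposition of $\mathbf{p}_{X^n}$ over them is unique, and the LP in (\ref{LP}) is therefore determined by its equality constraint alone, independently of $p_{W|X^n}$. The step you single out as the main obstacle --- upgrading $\textnormal{nul}(\mathbf{P})\geq 1$ (which Corollary~\ref{cor1} gives) to $\textnormal{nul}(\mathbf{P})=1$ --- is precisely the step the paper asserts without argument, so you are not missing anything the paper supplies; your instinct that this requires ruling out row dependencies of $\mathbf{P}$ beyond the $n-1$ block-sum ones (in effect a connectivity condition on the support hypergraph of Remark~1) is sound, and the paper simply takes it for granted.
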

\begin{proof}
This follows from the fact that in this setting $\textnormal{nul}(\mathbf{P})=1$, and as a result, $\mathbb{S}$ has only two extreme points\footnote{In this case, the optimal mapping conveys at most one bit about $W$, since we have $|\mathcal{Y}|=2$.}. Therefore, the mere condition of preserving $\mathbf{p}_{X^n}$ suffices to define the probability masses of these two extreme points. Hence, the LP is solved by its constraint, not being affected by the choice of $W$. 
\end{proof}

This result implies that the same strategy/mapping can provide an optimal service in addressing any possible query over the data, as given by a specific $p_{W|X^n}$. In other words, optimal processing of the data can be done in the absence of any knowledge about the query. However, this does not hold in general.

\section{Asymptotic performance on large datasets}\label{sec:large_datasets}

In this section we analyse datasets that are composed by noisy measurements $X_1,\dots,X_n$ of a variable of interest $W$, and focus on their asymptotic properties. For the sake of tractability, we focus in the case where the noise that affects each measurement is conditionally independent and identically distributed given $W$. In the sequel, Section~\ref{sec:prel2} introduces tools that are later used in our analysis, which is outlined in Section~\ref{sec:IIDobs}.

\subsection{Preliminaries}\label{sec:prel2}
For a pair of random variables $(X,W)\in \mathcal X\times\mathcal W$, with finite alphabets, 
following \cite{Shahab1}, we define\footnote{In \cite{Shahab1}, the authors name $C_X(W)$ \textit{private information about $X$ carried by $W$}.}
\begin{equation}\label{pr}
C_X(W)\triangleq \min_{\substack{U:X-U-W,\\H(U|W)=0}}H(U).
\end{equation}
Since $H(U|W)=0$ implies that $U$ is a deterministic function of $W$, (\ref{pr}) means that among all the functions of $W$ that make $X$ and $W$ conditionally independent, we want to find the one with the lowest entropy. 
It can be verified that 
\begin{equation*}
I(X;W)\leq C_X(W)\leq H(W),
\end{equation*}
where the first inequality is due to the data processing inequality applied on the Markov chain $X-U-W$, i.e., $I(U;W)\geq I(X;W)$, and the second inequality is a direct result of the fact that $U=W$ satisfies the constraints in (\ref{pr}).

Let $T^{\mathcal{X}}:\mathcal{W}\to\mathcal{P}(\mathcal{X})$ be a mapping from $\mathcal{W}$ to the probability simplex on $\mathcal{X}$ defined by $w\to p_{X|W}(\cdot|w)$. It is shown in \cite[Theorem 3]{Shahab1} that the minimizer in (\ref{pr}) is $U^*=T^{\mathcal{X}}(W)$; furthermore, it is proved in \cite[Lemma 5]{Shahab1} that $C_X(W)=H(W)$ if and only if there do not exist $w_1,w_2\in\mathcal{W}$ such that $p_{X|W}(\cdot|w_1)=p_{X|W}(\cdot|w_2)$. In the sequel, for a given pmf $p_{W,X}$, we denote $U^*$ by $\tilde{W}$, and hence, we have $H(\tilde{W})=C_X(W)$. Moreover, $W-\tilde{W}-X$ and $\tilde{W}-W-X$ are Markov chains. Figure \ref{fig:w_tilde} provides an example of $\tilde{W}$ for a given $p_{W,X}$.
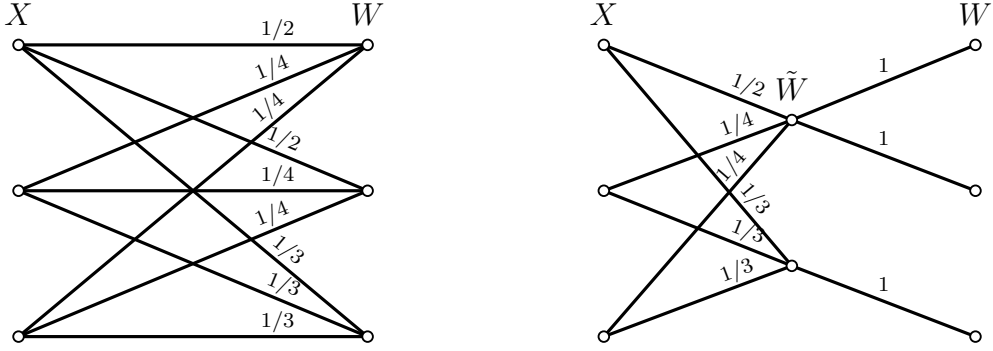
\begin{figure}[!t]\label{fig:w_tilde}
\begin{center}
\begin{tikzpicture}
  \node[dspnodeopen, minimum width=4pt, dsp/label=above] (X1_1) {\large{$X$}};    
  \node[dspnodeopen, minimum width=4pt, below=1.8cm of X1_1, dsp/label=above] (X1_0) {};
  \node[dspnodeopen, minimum width=4pt, below=1.8cm of X1_0, dsp/label=above] (X1_m1) {};
  \node[dspnodeopen, minimum width=4pt, right=4.5cm of X1_1, dsp/label=above] (W_1) {\large{$W$}};    
  \node[dspnodeopen, minimum width=4pt, below=1.8cm of W_1, dsp/label=above] (W_0) {};
  \node[dspnodeopen, minimum width=4pt, below=1.8cm of W_0, dsp/label=above] (W_m1) {};
  \node[dspnodeopen, minimum width=4pt, right=3 of W_1, dsp/label=above] (X2_1) {\large{$X$}};    
  \node[dspnodeopen, minimum width=4pt, below=1.8cm of X2_1, dsp/label=above] (X2_0) {};
  \node[dspnodeopen, minimum width=4pt, below=1.8cm of X2_0, dsp/label=above] (X2_m1) {};
  \node[dspnodeopen, minimum width=4pt, below right=0.9cm and 2.4cm of X2_1, dsp/label=above] (W1_1) {\large{$\tilde{W}$}};
  \node[dspnodeopen, minimum width=4pt, below=1.8cm of W1_1, dsp/label=above] (W1_0) {};
  \node[dspnodeopen, minimum width=4pt, right=4.8cm of X2_1, dsp/label=above] (W2_1) {\large{$W$}};    
  \node[dspnodeopen, minimum width=4pt, below=1.8cm of W2_1, dsp/label=above] (W2_0) {};
  \node[dspnodeopen, minimum width=4pt, below=1.8cm of W2_0, dsp/label=above] (W2_m1) {};
\draw[line width=1.2pt] (X1_1) to node[near end, inner sep=1pt, above]{\footnotesize{$1/2$}} (W_1);
\draw[line width=1.2pt] (X1_0) to node[near end, inner sep=1pt, above, sloped]{\footnotesize{$1/4$}} (W_1);
\draw[line width=1.2pt] (X1_m1) to node[near end, inner sep=1pt, above, sloped]{\footnotesize{$1/4$}} (W_1);
\draw[line width=1.2pt] (X1_1) to node[near end, above, inner sep=1pt, sloped]{\footnotesize{$1/2$}} (W_0);
\draw[line width=1.2pt] (X1_0) to node[near end, above, inner sep=1pt, sloped]{\footnotesize{$1/4$}} (W_0);
\draw[line width=1.2pt] (X1_m1) to node[near end, above, inner sep=1pt, sloped]{\footnotesize{$1/4$}} (W_0);
\draw[line width=1.2pt] (X1_1) to node[near end, above, inner sep=1pt, sloped]{\footnotesize{$1/3$}} (W_m1);
\draw[line width=1.2pt] (X1_0) to node[near end, above, inner sep=1pt, sloped]{\footnotesize{$1/3$}} (W_m1);
\draw[line width=1.2pt] (X1_m1) to node[near end, above, inner sep=1pt, sloped]{\footnotesize{$1/3$}} (W_m1);
\draw[line width=1.2pt] (X2_1) to node[near end, above, inner sep=1pt, sloped]{\footnotesize{$1/2$}} (W1_1);
\draw[line width=1.2pt] (X2_0) to node[near end, above, inner sep=1pt, sloped]{\footnotesize{$1/4$}} (W1_1);
\draw[line width=1.2pt] (X2_m1) to node[near end, above, inner sep=1pt, sloped]{\footnotesize{$1/4$}} (W1_1);
\draw[line width=1.2pt] (X2_1) to node[near end, above, inner sep=1pt, sloped]{\footnotesize{$1/3$}} (W1_0);
\draw[line width=1.2pt] (X2_0) to node[near end, above, inner sep=1pt, sloped]{\footnotesize{$1/3$}} (W1_0);
\draw[line width=1.2pt] (X2_m1) to node[near end, above, inner sep=1pt, sloped]{\footnotesize{$1/3$}} (W1_0);
\draw[line width=1.2pt] (W1_1) to node[midway, above]{\footnotesize{$1$}} (W2_1);
\draw[line width=1.2pt] (W1_1) to node[midway, above]{\footnotesize{$1$}} (W2_0);
\draw[line width=1.2pt] (W1_0) to node[midway, above]{\footnotesize{$1$}} (W2_m1);
\end{tikzpicture}
\caption{The right figure provides a graphical representation of $\tilde{W}$ for a given joint pmf $p_{W,X}$ shown on the left figure. The probabilities on theses figures correspond to the transition from right to left. Note that we have $\tilde{W}$ as a deterministic function of $W$ in the Markov chain $X-\tilde{W}-W$.} \label{fig:w_tilde}
\end{center}
\end{figure}

To conclude this subsection, we present the notion of typical sequences. Assume that $u^n$ is an $n$-sequence whose elements are drawn from an arbitrary set denoted by $\mathcal{U}$. The \textit{type} of $u^n$ is defined as
\begin{equation}\label{typedef}
\pi(u|u^n)\triangleq\frac{|\{i|u_i=u\}|}{n},\ \forall u\in\mathcal{U}.
\end{equation}
Then, for a fixed pmf $q_U(\cdot)$ on $\mathcal{U}$, and $\epsilon\in(0,1)$, define the $\epsilon$-typical set as\footnote{Here, we pick the notion $\mathcal{T}_{\epsilon}^n(q_U(\cdot))$ over $\mathcal{T}_{\epsilon}^n(U)$ as in \cite{Elgamal}, to emphasize on the generating distribution. This is useful in the sequel when considering conditional pmfs as the underlying generator. } 
\begin{equation}\label{typic}
\mathcal{T}_{\epsilon}^n(q_U(\cdot))\triangleq\bigg\{u^n\bigg|\ |\pi(u|u^n)-q_U(u)|\leq\epsilon q_U(u),\forall u\in\mathcal{U}\bigg\}. 
\end{equation}

\subsection{Asymptotic analysis}\label{sec:IIDobs}

Let $W$ be a variable of interest that is distributed according to $ p_W$ ($|\mathcal{W}|<\infty$), and consider a dataset $X^n$ where $X_i$'s are i.i.d. conditioned on $W$ according to $p_{X|W}$ ($|\mathcal{X}|<\infty$). In other words, $p_{X^n|W}(x^n|w)=\prod_{i=1}^np_{X|W}(x_i|w),\ \forall x^n\in\mathcal{X}^n,\forall w\in\mathcal{W},\ \forall n\geq 1$. In the sequel we use $X$ (without subscript or superscript) to denote a generic sample that follows $p_{X|W}$. This model corresponds to dataset of noisy observations ($X_i$'s) of an underlying phenomenon ($W$), where the observational noise is i.i.d. 

Prior to investigating the privacy-preserving data disclosure, we characterise the total information contained in the observations about the underlying phenomenon that can be disclosed when there are no privacy constraints.
\begin{theorem}\label{th4}
We have
\begin{equation}\label{PrResult}
\lim_{n\to\infty}I(W;X^n)=C_X(W),
\end{equation}
where $C_X(W)$ is defined in (\ref{pr}).
\end{theorem}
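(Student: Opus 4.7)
The plan is to introduce the auxiliary variable $\tilde{W}$ defined in Section~\ref{sec:prel2}, where $\tilde{W}=T^{\mathcal{X}}(W)$ with $H(\tilde{W})=C_X(W)$, and exploit the Markov chains $X-\tilde{W}-W$ and $\tilde{W}-W-X$ to first collapse $I(W;X^n)$ onto $I(\tilde{W};X^n)$, and then show this latter quantity converges to $H(\tilde{W})$ as the observational noise is averaged out.

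The first step is to prove the identity $I(W;X^n)=I(\tilde{W};X^n)$ for every $n$. Since $\tilde{W}$ is a deterministic function of $W$, the chain rule yields
\begin{equation*}
I(W;X^n) = I(W,\tilde{W};X^n) = I(\tilde{W};X^n) + I(W;X^n\mid\tilde{W}).
\end{equation*}
Since $p_{X|W}$ factors through $\tilde{W}$ and the $X_i$'s are conditionally i.i.d. given $W$ (and thus given $\tilde{W}$), the Markov chain $X^n-\tilde{W}-W$ holds, which kills the residual term. This immediately gives the upper bound $I(W;X^n)\leq H(\tilde{W})=C_X(W)$ uniformly in $n$.

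For the matching lower bound I would write $I(\tilde{W};X^n)=H(\tilde{W})-H(\tilde{W}\mid X^n)$ and show that $H(\tilde{W}\mid X^n)\to 0$ via a typicality/Fano argument. The key observation is that, by the very construction of $\tilde{W}$, distinct realizations $\tilde{w}_1\neq\tilde{w}_2$ correspond to distinct conditional distributions $p_{X|\tilde{W}}(\cdot\mid\tilde{w}_1)\neq p_{X|\tilde{W}}(\cdot\mid\tilde{w}_2)$. Hence there exists $\epsilon_0>0$ such that, for all $\epsilon<\epsilon_0$, the typical sets $\mathcal{T}_\epsilon^n(p_{X|\tilde{W}}(\cdot\mid\tilde{w}))$ are pairwise disjoint across the (finitely many) values of $\tilde{w}$. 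Conditioned on $\tilde{W}=\tilde{w}$, the $X_i$'s are i.i.d.\ draws from $p_{X|\tilde{W}}(\cdot\mid\tilde{w})$, so by the law of large numbers the probability that $X^n$ falls outside $\mathcal{T}_\epsilon^n(p_{X|\tilde{W}}(\cdot\mid\tilde{w}))$ tends to zero. Therefore the estimator that maps $X^n$ to the unique $\tilde{w}$ whose typical set contains it (if any) recovers $\tilde{W}$ with vanishing error probability $P_e^{(n)}\to 0$.

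Finally, Fano's inequality yields $H(\tilde{W}\mid X^n)\leq h_b(P_e^{(n)})+P_e^{(n)}\log(|\mathcal{W}|-1)\to 0$, since $\tilde{W}$ takes values in a set of cardinality at most $|\mathcal{W}|$. Combining this with the identity from the first step gives
\begin{equation*}
\lim_{n\to\infty} I(W;X^n) \;=\; \lim_{n\to\infty}\bigl(H(\tilde{W})-H(\tilde{W}\mid X^n)\bigr) \;=\; H(\tilde{W}) \;=\; C_X(W),
\end{equation*}
as required. The main conceptual subtlety — rather than a computational obstacle — is recognizing that $C_X(W)$ is the right limit: classical intuition might suggest $H(W)$, but when distinct values of $W$ induce identical channels to $X$, no amount of observations can resolve them, and the quotient variable $\tilde{W}$ captures exactly the information that is asymptotically identifiable from the i.i.d.\ measurements.
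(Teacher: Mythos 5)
Your proposal is correct and follows essentially the same route as the paper: the identity $I(W;X^n)=I(\tilde{W};X^n)$ via the Markov chain $X^n-\tilde{W}-W$ for the converse, and a robust-typicality decoder with disjoint typical sets plus the law of large numbers and Fano's inequality for achievability. The only difference is cosmetic: the paper explicitly constructs the separating $\epsilon$ via the sets $B^{i,j}$ and the quantities $\epsilon^{i,j}$, whereas you assert the existence of a suitable $\epsilon_0$ directly, which is a valid shortcut given that the conditional laws $p_{X|\tilde{W}}(\cdot\mid\tilde{w})$ are pairwise distinct by construction of $\tilde{W}$.
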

\begin{proof}
From the definition of $\tilde{W}$, it can be verified that $W-\tilde{W}-X^n$, and $\tilde{W}-W-X^n$, where $X_i$'s are also i.i.d. conditioned on $\tilde{W}$. 
For the converse, we have
\begin{align}
I(W;X^n)
&=I(\tilde{W};X^n)\nonumber\\
&\leq H(\tilde{W})\nonumber\\
&=C_X(W).\label{convers}
\end{align}
The achievability is as follows. 
We have $p_{X|\tilde{W}}(\cdot|i)\neq p_{X|\tilde{W}}(\cdot|j), \forall i,j\in\tilde{\mathcal{W}}\ (i\neq j)$, which follows from the definition of $\tilde{W}$. As a result, for a fixed $i,j\in\tilde{\mathcal{W}}$ ($i\neq j$), there exists $x^{i,j}\in\mathcal{X}$, such that $p_{X|\tilde{W}}(x^{i,j}|i)>p_{X|\tilde{W}}(x^{i,j}|j)$.
Let $B^{i,j} \triangleq \big\{x\in\mathcal{X}\big| p_{X|\tilde{W}}(x|i)>p_{X|\tilde{W}}(x|j) \big\}$ , $\forall i,j\in\tilde{\mathcal{W}}$ ($i\neq j$). Define
\begin{equation} \epsilon^{i,j}\triangleq\min_{x\in B^{i,j}}\frac{p_{X|\tilde{W}}(x|i)-p_{X|\tilde{W}}(x|j)}{p_{X|\tilde{W}}(x|i)+p_{X|\tilde{W}}(x|j)}
\qquad \forall i,j\in\tilde{\mathcal{W}} \text{ with } i\neq j.
\end{equation}
Hence, we have $\epsilon^{i,j}>0$.
Also, let $\epsilon\triangleq\min_{\substack{i,j\in\tilde{\mathcal{W}}\\i\neq j}}\epsilon^{i,j}$, which is positive, since it is the minimum over a finite set of positive elements. It can be verified that from this choice of $\epsilon$, the $\epsilon$-typical sets corresponding to the pmfs $p_{X|\tilde{W}}(\cdot|i)$ ($\forall i\in\tilde{\mathcal{W}}$) are disjoint, i.e., $\mathcal{T}_{\epsilon}^n\bigg(p_{X|\tilde{W}}(\cdot|i)\bigg)\cap\mathcal{T}_{\epsilon}^n\bigg(p_{X|\tilde{W}}(\cdot|j)\bigg)=\emptyset, \forall i,j\in\tilde{\mathcal{W}}, i\neq j.$ Let $L_n:\mathcal{X}^n\to\tilde{\mathcal{W}}\cup\{e\}$ be defined as
\begin{equation}\label{ldef}
L_n(x^n)\triangleq\left\{\begin{array}{cc}i&\textnormal{if }x^n\in\mathcal{T}_{\epsilon}^n\bigg(p_{X|\tilde{W}}(\cdot|i)\bigg), \mbox{ for some }i\in\tilde{\mathcal{W}}\\e&\mbox{o.w.}
\end{array}\right..
\end{equation}
Let $p^{(n)}_e=\mbox{Pr}\{\tilde{W}\neq L_n\}$ denote the error probability in the Markov chain $W-X^n-L_n$. We can write
\begin{align*}
\lim_{n\to\infty}p^{(n)}_e&=\lim_{n\to\infty}\sum_{i\in\tilde{\mathcal{W}}}p_{\tilde{W}}(i)\mbox{Pr}\{L_n\neq i|\tilde{W}=i\}\\
&=\lim_{n\to\infty}\sum_{i\in\tilde{\mathcal{W}}}p_{\tilde{W}}(i)\mbox{Pr}\bigg\{X^n\not\in\mathcal{T}_{\epsilon}^n\bigg(p_{X|\tilde{W}}(\cdot|i)\bigg)\bigg|\tilde{W}=i\bigg\}\\
&=0,
\end{align*}
where the last step follows from the law of large numbers (LLN), since conditioned on $\{\tilde{W}=i\}$, $X_i$'s are i.i.d. according to $p_{X|\tilde{W}}(\cdot|i)$. Therefore, from the data processing and Fano's inequality
\begin{align}
I(\tilde{W};X^n) &\geq I(\tilde{W};L_n)\\
&=H(\tilde{W})-H(\tilde{W}|L_n)\\
&\geq H(\tilde{W})-H(p^{(n)}_e)-p^{(n)}_e\log|\tilde{\mathcal{W}}|,
\end{align}
which results in $\lim_{n\to\infty}I(\tilde{W};X^n)\geq H(\tilde{W})=C_X(W)$.
\end{proof}
\begin{remark}
The proof of achievability relies on the notion of \textit{robust typicality} \cite{Elgamal}, which is used to distinguish between different conditional pmfs of the form $p_{X|\tilde{W}}(\cdot|i)$. It is important to note that the notion of weak typicality does not suffice for this purpose. In other words, assume that the $\epsilon$-typical set for a given pmf $q_U(\cdot)$ is defined as
\begin{equation}\label{weak}
\mathcal{A}_{\epsilon}^n(q_U(\cdot))\triangleq\bigg\{u^n\bigg|\ |-\frac{1}{n}\log q_{U^n}(u^n)-H(U)|\leq\epsilon\bigg\}, 
\end{equation}
where $q_{U^n}(u^n)=\prod_{i=1}^n q_U(u_i),\forall u^n\in\mathcal{U}^n$. Let us focus in the case that $X$ and $W$ are binary and the transition from $W$ to $X$ follows a BSC($\alpha$) for $\alpha\in(0,\frac{1}{2})$. In this case $\tilde{W}=W$, and from Theorem \ref{th4}, $W$ can be inferred precisely from an infinite number of i.i.d. observations $X_i$'s. However, it can be verified that nothing can be inferred about $W$ if one uses (\ref{weak}) instead of (\ref{typic}) in the proof of achievability, since $\mathcal{A}_{\epsilon}^n\bigg(p_{X|W}(\cdot|w_1)\bigg)=\mathcal{A}_{\epsilon}^n\bigg(p_{X|W}(\cdot|w_2)\bigg),\forall \epsilon>0$.  The key difference is that While using (\ref{typic}) enables us to distinguish between different conditional pmfs of the form $p_{X|\tilde{W}}(\cdot|i)$, using (\ref{weak}) aims at doing the same task only through their corresponding conditional entropies, i.e., $H(X|\tilde{W}=i)$.  
\end{remark}
\begin{remark}
Equation (\ref{PrResult}) can be used instead of (\ref{pr}) as the definition of $C_X(W)$. In other words, for a pair $(W,X)\sim p_{W,X}$ one can define
\begin{equation*}
   C_X(W)\triangleq\lim_{n\to\infty}I(W;X^n),
\end{equation*}
where $p_{X^n|W}(x^n|w)=\prod_{i=1}^np_{X|W}(x_i|w)$; afterwards, (\ref{pr}) follows.
\end{remark}

In what follows, the asymptotic behaviour of the synergistic disclosure capacity $I_\text{s}(W;X^n)$ is investigated as the number of data samples grows. To this end, we make use of the following definition.
\begin{definition}\label{pf}
For a given pmf $p_{W,X}$, and $\tilde{W}$ as defined in Section~\ref{sec:prel2}, define
\begin{align}
    C_1(\alpha)&\triangleq \max_{\substack{p_{Y|\tilde{W}}:\\X-\tilde{W}-Y\\I(X;Y)\leq \alpha}}I(\tilde{W};Y)\label{C1}\\
    C_2(\alpha)&\triangleq \max_{\substack{p_{Y|\tilde{W},X}:\\I(X;Y)\leq\alpha}}I(\tilde{W};Y),\label{C2}
\end{align}
where $\alpha\in[0,I(X;\tilde{W})]$.
\end{definition}
The above definitions capture the utility-privacy trade-off in a hypothetical scenario, in which the curator discloses information about $\tilde{W}$ while preserving the privacy of $X$. To this end, in the case of $C_1(\cdot)$, the curator has access only to $\tilde{W}$, which is similar to the \textit{output perturbation} model in \cite{Ishwar}. In the case of $C_2(\cdot)$, the curator has the extra advantage of observing $X$, which is similar to the \textit{full data observation} model in \cite{Ishwar}.
\begin{theorem}\label{th5}
The synergistic disclosure capacity, i.e., $I_s(W,X^n)$, converges as $n$ grows. Moreover, we have that
\begin{equation}
    \lim_{n\to\infty}I_s(W,X^n)=C_1(0),
\end{equation}
where $C_1(\cdot)$ is defined in \eqref{C1}.
\end{theorem}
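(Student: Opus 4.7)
The plan is to sandwich $I_s(W,X^n)$ between matching achievability and converse bounds. As a first step, for any feasible $Y$ I would establish $I(W;Y)=I(\tilde W;Y)$ via the Markov chain $W - \tilde W - X^n - Y$, which holds because $\tilde W$ is a deterministic function of $W$ with $X\independent W\mid\tilde W$ (by the definition of $\tilde W$ in Section~\ref{sec:prel2}), the $X_i$'s are i.i.d.\ given $\tilde W$, and $Y$ is a function of $X^n$; a chain-rule argument then yields $I(W;Y\mid\tilde W)=0$. Thus it suffices to analyse $\sup I(\tilde W;Y)$ over the same feasible set.

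For achievability ($\liminf_n I_s \geq C_1(0)$), let $p^*(Y\mid\tilde W)$ attain $C_1(0)$ and use the disclosure mapping
\[
p(Y\mid X^n) \;=\; \sum_{\tilde w} p^*(Y\mid\tilde w)\,p(\tilde w\mid X^n),
\]
i.e., first draw $\tilde W^{*}$ from the Bayesian posterior of $\tilde W$ given $X^n$ and then $Y\sim p^*(\cdot\mid\tilde W^{*})$. A direct computation in the resulting augmented joint shows that $X_i\independent Y\mid\tilde W^{*}$ and that $(\tilde W^{*},Y)$ carries exactly the $C_1(0)$ law, so marginalising $\tilde W^{*}$ gives $p(X_i,Y)=p(X_i)p(Y)$ for every $i$ and every $n$. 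By the typicality argument behind Theorem~\ref{th4}, $p(\tilde w\mid X^n)\to\delta_{\tilde W}$ in probability as $n\to\infty$, so $\tilde W^{*}\to\tilde W$, the channel from $\tilde W$ to $Y$ induced by this construction tends to $p^*$, and $I(\tilde W;Y)\to C_1(0)$.

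For the converse ($\limsup_n I_s \leq C_1(0)$), to any feasible $Y_n$ I would associate the induced channel $\tilde p_n(y\mid\tilde w)\triangleq \mathbb E_{X^n\mid\tilde w}[p(y\mid X^n)]$. The $(\tilde W,Y_n)$ marginal is preserved, hence $I(\tilde W;Y_n)=I_{\tilde p_n}(\tilde W;Y_n)$; and setting $\alpha_n\triangleq I_{\tilde p_n}(X;Y_n)$, the channel $\tilde p_n$ is admissible in the definition of $C_1(\alpha_n)$, giving $I(\tilde W;Y_n)\leq C_1(\alpha_n)$. A standard maximum-theorem argument establishes continuity of $C_1$ at $0$, so the converse will follow once $\alpha_n\to 0$ along any sequence approaching the supremum.

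The main obstacle is exactly this last step: $X_i\independent Y_n$ in the original joint does not automatically yield $X\independent Y_n$ in the fresh-sample joint induced by $\tilde p_n$. I would bridge the gap by using the i.i.d.\ structure to obtain, via the chain rule,
\[
\sum_{i=1}^n I(X_i;Y_n\mid\tilde W) \;\leq\; I(X^n;Y_n\mid\tilde W) \;\leq\; H(Y_n\mid\tilde W),
\]
so that some index $i^{*}(n)$ achieves $I(X_{i^{*}};Y_n\mid\tilde W)\leq H(Y_n\mid\tilde W)/n$. Combined with $X_{i^{*}}\independent Y_n$ and Pinsker's inequality, this bounds the total variation between $p(X_{i^{*}},Y_n)$ and $\tilde p_n(X,Y_n)$ by $O\!\bigl(\sqrt{H(Y_n\mid\tilde W)/n}\bigr)$, so $\alpha_n\to 0$ provided $H(Y_n\mid\tilde W)=o(n)$. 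Controlling the alphabet size of $Y_n$ in the supremum (for which $|\mathcal Y_n|\leq|\tilde{\mathcal W}|$ is already sufficient to attain $C_1(0)$) is the delicate technical point needed to close the argument.
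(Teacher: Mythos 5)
Your overall architecture matches the paper's: reduce to $\tilde W$ via $I(W;Y)=I(\tilde W;Y)$, lower-bound $\liminf_n I_s$ by $C_1(0)$, upper-bound $I_s(W,X^n)$ by $C_1(\alpha_n)$ with $\alpha_n$ the fresh-sample leakage $I(X;Y_n)$ in the chain $X-\tilde W-X^n-Y_n$, and close with continuity of $C_1$ at $0$. Your achievability step is correct and in fact cleaner than the paper's: the composition $p(y|x^n)=\sum_{\tilde w}p^*(y|\tilde w)p(\tilde w|x^n)$ satisfies $p(y|x_i)=\sum_{\tilde w}p^*(y|\tilde w)p_{\tilde W|X}(\tilde w|x_i)=p(y)$ \emph{exactly} for every $i$ and every $n$, because the $C_1(0)$-optimal channel already kills the single-sample posterior; posterior concentration (the same disjoint-typical-sets argument as in Theorem~\ref{th4}) then drives the induced channel to $p^*$. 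The paper instead obtains achievability through the relaxed quantity $J(W,X^n)$ of \eqref{jdef} and Lemma~\ref{lem2}, which requires a lengthier total-variation estimate; your route avoids that.

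The genuine gap is in the converse, exactly where you flag it, and it is not a removable technicality as written. Your mechanism for $\alpha_n\to 0$ hinges on $\min_i I(X_{i};Y_n|\tilde W)\leq H(Y_n|\tilde W)/n$ being small, i.e.\ on $H(Y_n|\tilde W)=o(n)$. But the supremum defining $I_s$ ranges over mappings whose output alphabet is in general exponentially large in $n$: Corollary~\ref{cor2} shows that attaining the optimum \emph{requires} $|\mathcal Y|\geq\lceil|\hat{\mathcal X}|/\mathrm{rank}(\mathbf P)\rceil$, which here is of order $|\mathcal X|^{n}/(n|\mathcal X|)$, and nothing in the constraints prevents $H(Y_n|\tilde W)=\Theta(n)$ (indeed for self-disclosure $H(Y_n)\geq H(X^n)-\log(n|\mathcal X|)$). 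So the bound $H(Y_n|\tilde W)/n$ need not vanish, and your converse does not close. A repair would require first showing that one may restrict to outputs with $H(Y_n|\tilde W)=o(n)$ without loss — e.g.\ by merging output symbols whose posteriors $p_{\tilde W|y}$ fall in the same cell of an $\epsilon$-net of $\mathcal P(\tilde{\mathcal W})$ (post-processing preserves both the Markov chain and $Y\independent X_i$, and costs only $O(\epsilon\log(1/\epsilon))$ in $I(\tilde W;Y)$) — but you would have to carry this out. The paper's Lemma~\ref{indconverge} (Appendix~C) sidesteps the issue entirely: it never touches $H(Y_n)$ or $|\mathcal Y_n|$, but instead uses the identity $\pi(x|x^n)=\frac1n\sum_{i}p_{X_i|X^n}(x|x^n)$ together with robust typicality to show $p_{X,Y}\approx\frac1n\sum_i p_{X_i,Y}=p_Xp_Y$ in total variation, from which $I(X;Y_n)\to0$ follows for the finite alphabet $\mathcal X$. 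The remaining ingredients of your converse (admissibility of the induced channel $\tilde p_n$ in $C_1(\alpha_n)$, concavity/continuity of $C_1$ at $0$, and the passage from vanishing total variation to vanishing mutual information over the finite $\mathcal X$) are sound and coincide with the paper's Lemmas~\ref{lem1} and the concluding sandwich.
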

\begin{proof}
At first, it is not clear that $I_s(W,X^n)$ converges with $n$; on the one hand, having more data samples helps conveying some information about $W$, while on the other hand, it adds to the constraints of perfect sample privacy. In fact, $I_s$ can have a non-monotonous dependency on $n$, as shown in the example provided in Table~\ref{table:1}.
 \begin{table}[h!]
  \begin{center}
    \caption{Non-monotonic $I_s$ for $W\sim$Bernoulli($1/3$) and samples generated via a BSC with crossover probability $0.1$.}
    \vspace{0.2cm}
    \label{tab:table1}
    \begin{tabular}{|c|c c c|} 
      \hline
      $n$ & $2$ & $3$ & $4$ \\ \hline
      $I_s$ & $8.34\times 10^{-3}$ & $4.88 \times 10^{-2}$ & $4.47\times 10^{-2}$ \\ \hline
    \end{tabular}\label{table:1}
  \end{center}
\end{table}

 In spite of this, one can see that as $n$ grows, a better estimate of $\tilde{W}$ becomes available at the input of the privacy mapping, i.e, $L_n$ as defined in (\ref{ldef}). Hence, as $n$ increases, one can expect that $I_s$ gets closer to
\begin{equation}\label{jdef}
    J(W,X^n)\triangleq\max_{\substack{p_{Y|\tilde{W},X^n}:\\Y\independent X_i\ \forall i\in[n]}} I(\tilde{W};Y),
\end{equation}
which is formally stated in the following Lemma\footnote{Note that from $W-\tilde{W}-Y$ and $\tilde{W}-W-Y$, we have $I(W;Y)=I(\tilde{W};Y).$}.

\begin{lemma}\label{lem2}
We have
\begin{equation}\label{asymp1}
    \lim_{n\to\infty}J(W,X^n)-I_s(W,X^n)=0.
\end{equation}
\end{lemma}
\begin{proof}
The proof is provided in Appendix \ref{app1}.
\end{proof}
At this stage, the convergence of $I_s$ can be proved as follows. It can be verified that $J$ is a non-increasing function of $n$, since by increasing $n$, the number of privacy constraints increases, while in contrast to the case of $I_s$, it does not improve the knowledge of the curator about $\tilde{W}$, which is already available at the input of the privacy mapping. Hence, being a bounded function, it converges. This settles the convergence of $I_s$ to $\lim_{n\to\infty} J$.

In order to obtain the limit, we proceed as follows. Consider the Markov chain $X-W-X^n-Y$, in which\footnote{َAs mentioned earlier, $X$ is a generic random variable generated in the sense that $(X,X_1,X_2,\ldots,X_n)$ are i.i.d. conditioned on $W$ according to $p_{X|W}$.} $Y\independent X_i,\ \forall i\in[n]$. The following Lemma states that requiring perfect sample privacy over a large available dataset guarantees \textit{almost} perfect sample privacy for those data samples that are not available at the input of the privacy mapping.

\begin{lemma}\label{indconverge}
In the Markov chain $X-W-X^n-Y$, where $Y\independent X_i,\forall i\in[n]$, we have
\begin{equation}
    \lim_{n\to\infty}I(X;Y)=0.
\end{equation}
\end{lemma}
\begin{proof}
The proof is provided in Appendix \ref{app2}.
\end{proof}

The last stage needed in the proof is provided in the following Lemma.
\begin{lemma}\label{lem1}
$C_1(\alpha)$ is continuous in $\alpha$.
\end{lemma}
\begin{proof}
The proof is provided in Appendix \ref{app123}.
\end{proof}

From Lemma \ref{indconverge}, we can write $I(X;Y)=\theta_n$, in which, $\lim_{n\to\infty}\theta_n=0$. Therefore, from the definition of $C_1(\cdot)$, we have
\begin{equation}
    I_s(W,X^n)\leq C_1(\theta_n).
\end{equation}
It is also evident that $J(W,X^n)\geq C_1(0)$, since the maximizer in $C_1(0)$ can be regarded as a suboptimal mapping in $J$. Therefore, we can write
\begin{equation}
    C_1(0)-\big(J(W,X^n)-I_s(W,X^n)\big)\leq I_s(W,X^n)\leq C_1(\theta_n).
\end{equation}
Finally, from Lemma \ref{lem1} and \ref{lem2}, we have
\begin{equation}
    \lim_{n\to\infty}I_s(W,X^n)=C_1(0).
\end{equation}
This completes the proof.
\end{proof}
\begin{remark}\label{rem5}
The quantity $J(W,X^n)$, defined in (\ref{jdef}), serves as a bridge between the full data observation and output perturbation models. In other words,
\begin{equation}
    C_2(0)=J(W,X^1)\geq J(W,X^2)\geq \ldots\geq J(W,X^\infty)=C_1(0),
\end{equation}
where $J(W,X^\infty)\triangleq\lim_{n\to\infty}J(W,X^n).$
\end{remark}

\begin{proposition}\label{prop5}
We have
\begin{equation}
   (C_X(W)-\log|\mathcal{X}|)^+\leq C_1(0)\leq C_2(0) \leq H(\tilde{W}|X)\leq H(W|X), 
\end{equation}
where $(x)^+\triangleq\max\{0,x\}$.
\begin{proof}
The fact that $C_1(0)\leq C_2(0)$ is immediate. For the last two inequalities, we proceed as follows. In $X-\tilde{W}-Y$ with $X\independent Y$, we have
\begin{align}
  I(\tilde{W};Y)&=I(\tilde{W},X;Y)-I(X;Y|\tilde{W}) \nonumber\\
   &=I(\tilde{W};Y|X)-I(X;Y|\tilde{W})\nonumber\\
  &= H(\tilde{W}|X)-H(\tilde{W}|Y,X)-I(X;Y|\tilde{W})\label{strictornot0}\\
  &\leq H(\tilde{W}|X)\nonumber\\
  &\leq H(W|X),\label{strictornot4}
\end{align}
where (\ref{strictornot4}) follows from having $I(W;X)=I(\tilde{W};X)$, and $H(\tilde{W})\leq H(W)$.

For the first inequality, we note that as in the proof of Lemma \ref{lemma2}, having $Y\independent X$ in the Markov chain $X-\tilde{W}-Y$ is equivalent to having $\mathbf{p}_{\tilde{W}|y}\in\textnormal{Null}(\mathbf{P}_{X|\tilde{W}}),\ \forall y\in\mathcal{Y}$. As a result, the evaluation of $C_1(0)$ reduces to the minimization of $H(\tilde{W}|Y)$ over $\{\mathbf{p}_{\tilde{W}|y}\in\textnormal{Null}(\mathbf{P}_{X|\tilde{W}})\}$ such that the marginal pmf of $\tilde{W}$ is preserved. Similarly to the proof of Corollary \ref{cor1}, we have the upper bound of $\log \left(\textnormal{rank}(\mathbf{P}_{X|\tilde{W}})\right)$ on the minimum value of $H(\tilde{W}|Y)$, such that $X\independent Y$, which in turn is upper bounded by $\log|\mathcal{X}|$. By noting that $H(\tilde{W})=C_X(W)$, the proof of the first inequality is complete.
\end{proof}
\end{proposition}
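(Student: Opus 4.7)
The plan is to prove the chain of four inequalities from right to left, since the three rightmost are short applications of information-theoretic identities while the leftmost requires importing the extreme-point analysis developed in the proof of Theorem~\ref{theo1}.

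For the easy three: $C_1(0)\leq C_2(0)$ is immediate by feasibility inclusion, since any $p_{Y|\tilde W}$ admissible for $C_1(0)$ can be regarded as a $p_{Y|\tilde W, X}$ that ignores $X$ and still satisfies $I(X;Y)=0$. For $C_2(0)\leq H(\tilde W|X)$, I would use the chain rule and the constraint $X\independent Y$ to write $I(\tilde W;Y)=I(\tilde W,X;Y)-I(X;Y|\tilde W)=I(\tilde W;Y|X)-I(X;Y|\tilde W)\leq H(\tilde W|X)$, where the last drop uses the nonnegativity of conditional mutual information. Finally, for $H(\tilde W|X)\leq H(W|X)$, I would invoke the fact that $\tilde W$ is a deterministic function of $W$ (by the construction in Section~\ref{sec:prel2} and Theorem~\ref{th4}), hence $H(W,\tilde W|X)=H(W|X)$ and the claim follows from $H(\tilde W|X)\leq H(W,\tilde W|X)$.

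The main inequality $(C_X(W)-\log|\mathcal X|)^+\leq C_1(0)$ is the one that requires real work. Since $H(\tilde W)=C_X(W)$, I would recast the problem as upper-bounding the minimum of $H(\tilde W|Y)$ over mappings $p_{Y|\tilde W}$ inducing a Markov chain $X-\tilde W-Y$ with $X\independent Y$. Applying the argument behind Lemma~\ref{lemma2} to this smaller chain, the independence constraint becomes $(\mathbf p_{\tilde W}-\mathbf p_{\tilde W|y})\in\textnormal{Null}(\mathbf P_{X|\tilde W})$ for all $y\in\mathcal Y$, which cuts out a convex polytope of feasible $\mathbf p_{\tilde W|y}$'s fully analogous to $\mathbb S$ in \eqref{poly}. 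Replaying the derivation of Theorem~\ref{theo1} and Corollary~\ref{cor1} on this polytope, its extreme points have at most $\textnormal{rank}(\mathbf P_{X|\tilde W})$ nonzero entries, and by concavity of entropy it suffices to consider those extremes when minimising $H(\tilde W|Y)$. The maximum-entropy inequality then yields $\min H(\tilde W|Y)\leq\log\textnormal{rank}(\mathbf P_{X|\tilde W})\leq\log|\mathcal X|$, so $C_1(0)\geq C_X(W)-\log|\mathcal X|$. The outer $(\cdot)^+$ is picked up from the trivial lower bound $C_1(0)\geq 0$ (take $Y$ constant).

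The main obstacle is justifying that the extreme-point / basic-feasible-solution machinery built around $\mathbf P$ transfers verbatim when it is replaced by $\mathbf P_{X|\tilde W}$; in particular one must check that the analogue of \eqref{siz} still holds, i.e., that any element of the new polytope is realizable as a valid conditional distribution $\mathbf p_{\tilde W|y}$ compatible with some $p_{Y|\tilde W}$ preserving $\mathbf p_{\tilde W}$. This is structural rather than delicate, since the only ingredients used are the rank bound on extreme points and concavity of entropy, both of which are blind to whether the matrix defining the null-space constraint comes from \eqref{matp} or from the conditional $\mathbf P_{X|\tilde W}$; nonetheless it is the step that must be spelled out carefully to ensure no privacy-preservation argument is lost in translation.
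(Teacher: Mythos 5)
Your proposal is correct and follows essentially the same route as the paper: feasibility inclusion for $C_1(0)\leq C_2(0)$, the chain-rule/independence computation for $C_2(0)\leq H(\tilde W|X)$, and the transfer of the extreme-point/rank argument of Corollary~\ref{cor1} to the polytope cut out by $\textnormal{Null}(\mathbf{P}_{X|\tilde W})$ for the leftmost bound. The only (immaterial) deviation is that you justify $H(\tilde W|X)\leq H(W|X)$ directly from $\tilde W$ being a deterministic function of $W$, whereas the paper writes it via $I(\tilde W;X)=I(W;X)$ and $H(\tilde W)\leq H(W)$; both rest on the same fact.
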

\begin{corollary}
If $W$ and $X$ can be written as $W=(W',V)$ and $X=(X',V)$, in which $W'$ and $X'$ are conditionally independent given $V$, then we have $I_s(W,X^n)=0, \forall n\geq 1$.
\begin{proof}
It can be readily verified that in this case, we have $\tilde{W}=V$, and hence, $H(\tilde{W}|X)=0.$ From Proposition \ref{prop5}, and Remark \ref{rem5}, we have $I_s(W,X^n)\leq J(W,X^n)\leq C_2(0)=0,\forall n\geq 1$.
\end{proof}
\end{corollary}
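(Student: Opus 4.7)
The plan is to reduce the corollary to Proposition~\ref{prop5} and Remark~\ref{rem5} by establishing that, under the stated structural assumption on $(W,X)$, the auxiliary variable $\tilde{W}$ is a deterministic function of $X$, so that $H(\tilde{W}|X)=0$.

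First I would unpack the conditional-independence hypothesis at the level of conditional distributions. With $W=(W',V)$ and $X=(X',V)$ and $W'\independent X'\,|\,V$, a direct computation shows that for every $(w',v)$ in the support of $W$,
\begin{equation*}
p_{X|W}(\cdot\,|\,w',v) \;=\; p_{X',V|W',V}(\cdot\,|\,w',v) \;=\; \delta_{V=v}\cdot p_{X'|V}(\cdot\,|\,v),
\end{equation*}
so the conditional pmf of $X$ given $W$ depends on $W$ only through the $V$-coordinate. By the characterization recalled before \eqref{pr} (namely, $\tilde W = T^{\mathcal X}(W)$ is the map $w\mapsto p_{X|W}(\cdot|w)$), this means $\tilde W$ is a deterministic function of $V$. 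Since $X=(X',V)$ trivially determines $V$, we conclude that $\tilde W$ is a deterministic function of $X$, i.e.\ $H(\tilde W\,|\,X)=0$.

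Next I would chain the inequalities. Proposition~\ref{prop5} gives $C_2(0)\le H(\tilde W\,|\,X)$, hence $C_2(0)=0$. Remark~\ref{rem5} then yields $J(W,X^n)\le C_2(0)=0$ for every $n\ge 1$, so $J(W,X^n)=0$. Finally, comparing definitions, any mapping $p_{Y|X^n}\in\mathcal A$ with $W-X^n-Y$ can be viewed as a mapping $p_{Y|\tilde W,X^n}$ that happens not to depend on $\tilde W$, and on such a Markov chain $I(W;Y)=I(\tilde W;Y)$ (because $W-\tilde W-X^n-Y$). Therefore $I_s(W,X^n)\le J(W,X^n)=0$, which gives the claim.

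I do not expect a real obstacle here; the only substantive point is the identification $\tilde W=g(V)$ for some deterministic $g$, which follows cleanly from the factorization of $p_{X|W}$ through $V$. Everything else is just citing Proposition~\ref{prop5}, Remark~\ref{rem5}, and the trivial inclusion of admissible mappings for $I_s$ inside those for $J$.
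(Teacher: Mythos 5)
Your proposal is correct and follows essentially the same route as the paper: identify $\tilde W$ as a function of $V$ (hence of $X$) so that $H(\tilde W|X)=0$, then chain $I_s(W,X^n)\leq J(W,X^n)\leq C_2(0)\leq H(\tilde W|X)=0$ via Proposition~\ref{prop5} and Remark~\ref{rem5}. The only difference is that you spell out the ``readily verified'' identification of $\tilde W$ and the inclusion $I_s\leq J$ in more detail, which is a faithful elaboration rather than a different argument.
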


\section{Synergistic self-disclosure}\label{sec:synergistic_selfdisc}

In some cases there is no clear latent variable of interest, and the goal of the dataset owner is just to disclose as much of the dataset as possible while keeping the privacy constrains. This section studies this case for large datasets.

\subsection{Definitions and fundamental properties}

\begin{definition}
The synergistic self-disclosure capacity is defined as
\begin{equation}\label{def2}
\hat{I}_\text{s}(X^n) \triangleq \sup_{p_{Y|X^n}\in \mathcal{A}} I(X^n;Y).
\end{equation}
Similarly, the synergistic self-disclosure efficiency is defined by $\hat{\eta}(X^n) = \hat{I}_\text{s}(X^n) / H(X^n)$.
\end{definition}

One interesting property of $\hat{I}_\text{s}(X^n)$ is that $\hat{I}_\text{s}(X^n) \geq I_\text{s}(W,X^n)$ for any latent feature $W$, being this a direct consequence of the data processing inequality applied to $W-X^n-Y$. We now provide a simple upper bound for the synergistic self-disclosure efficiency.

\begin{lemma}\label{lemma:self-disclosure_bound}
The following upper bound holds:
\begin{equation*}
  \hat{\eta}(X^n) \leq 1 - \frac{ \max_{j\in [n]} H(X_j)  }{  H(X^n) }
\end{equation*}
\end{lemma}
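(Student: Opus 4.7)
The plan is to show that the synergistic self-disclosure capacity is bounded as $\hat{I}_s(X^n)\leq H(X^n)-\max_{j\in[n]} H(X_j)$, from which the claim follows by dividing both sides by $H(X^n)$.

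The most direct route is to invoke Proposition~\ref{pr:upper_bound} with the choice $W=X^n$. Indeed, $X^n$ trivially forms a Markov chain $X^n-X^n-Y$ and a mapping $p_{Y|X^n}\in\mathcal{A}$ still belongs to $\mathcal{A}$; also, $\hat{I}_s(X^n)$ coincides with the sup in (\ref{def}) for this choice, since there is no extra Markov constraint to impose. Substituting $W=X^n$ into (\ref{uppbound}) gives
\begin{equation*}
\hat{I}_s(X^n) \leq \min_{j\in[n]} I(X^n;X_{-j}^n\mid X_j) = \min_{j\in[n]} H(X_{-j}^n\mid X_j) = \min_{j\in[n]}\bigl(H(X^n) - H(X_j)\bigr),
\end{equation*}
where the first equality uses $H(X^n\mid X^n, X_j) = 0$ and the second uses the chain rule on $H(X^n) = H(X_j) + H(X_{-j}^n\mid X_j)$. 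Moving the minimum inside yields $\hat{I}_s(X^n)\leq H(X^n) - \max_{j\in[n]} H(X_j)$.

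For completeness, I would also note that this bound has a very short self-contained derivation that does not require invoking Proposition~\ref{pr:upper_bound}: for any admissible $Y$ and any $j\in[n]$, $H(X^n\mid Y)\geq H(X_j\mid Y) = H(X_j)$, where the inequality is monotonicity of entropy under conditioning on additional variables (or equivalently, the fact that $X_j$ is a function of $X^n$), and the equality comes from $Y\independent X_j$. Therefore $I(X^n;Y) = H(X^n) - H(X^n\mid Y)\leq H(X^n) - H(X_j)$, and taking the minimum over $j$ and the supremum over $Y\in\mathcal{A}$ gives the same bound. Dividing by $H(X^n)$ completes the proof.

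I do not foresee any real obstacle here: the bound is an immediate corollary of Proposition~\ref{pr:upper_bound}, and the only subtlety is recognizing that the perfect sample privacy constraint $Y\independent X_j$ allows us to exactly replace $H(X_j\mid Y)$ by $H(X_j)$ in the chain-rule expansion.
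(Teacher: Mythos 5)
Your proposal is correct and follows exactly the paper's route: the paper proves this lemma by setting $W=X^n$ in Proposition~\ref{pr:upper_bound}, and your computation $I(X^n;X_{-j}^n\mid X_j)=H(X^n)-H(X_j)$ together with the division by $H(X^n)$ fills in the details the paper leaves implicit. The short self-contained derivation you add at the end is a valid bonus but not needed.
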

\begin{proof}
This follows directly from Proposition~\ref{pr:upper_bound}, by setting $W=X^n$. 
\end{proof}

The previous lemma shows that $\hat{\eta}(X^n) < 1$ for any finite dataset, i.e., finite $n$. Hence, one might wonder if there are cases in which $\hat{\eta}\to 1$ as $n$ grows. Our next result shows that, remarkably, this happens whenever the entropy rate of the process, denoted by $H(\mathcal{X})$, exists and is non-zero. 
\begin{theorem}\label{th2}
Consider a stochastic process $\{X_i\}_{i\geq 1}$, with $|\mathcal{X}_i|\leq M<\infty,\ \forall i\geq 1$. If the entropy rate of this process exists, then
\begin{equation}\label{asres1}
\lim_{n\to\infty} \frac{\hat{I}_\text{s}(X^n)}{n}  = H(\mathcal{X}),
\end{equation}
where $H(\mathcal{X})$ denotes the entropy rate of the stochastic process $\{X_i\}_{i\geq 1}$. Furthermore, if $H(\mathcal{X})\neq 0$, we have
\begin{equation}
\lim_{n\to\infty} \hat{\eta}(X^n) = 1.
\end{equation}
\end{theorem}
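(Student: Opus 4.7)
The plan is to prove \eqref{asres1} by sandwiching $\hat{I}_s(X^n)/n$ between matching upper and lower bounds that both converge to $H(\mathcal{X})$; the efficiency claim is then a one-line corollary.

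The upper bound is immediate: any admissible $Y$ satisfies $I(X^n;Y) \le H(X^n)$, hence $\hat{I}_s(X^n)/n \le H(X^n)/n$, which converges to $H(\mathcal{X})$ by the assumed existence of the entropy rate.

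The lower bound is the substantive step, and it is delivered essentially for free by Corollary \ref{cor1}. That corollary guarantees the existence of a perfectly-private mapping $p_{Y|X^n}\in\mathcal{A}$ with
\begin{equation*}
H(X^n\mid Y)\;\le\;\log\bigl(\mathrm{rank}(\mathbf{P})\bigr)\;\le\;\log\!\Bigl(\sum_{i=1}^n |\mathcal{X}_i| - n + 1\Bigr)\;\le\;\log(nM),
\end{equation*}
where the bound $|\mathcal{X}_i|\le M$ is used in the last step. Rewriting $\hat{I}_s(X^n)=H(X^n)-\min_{p_{Y|X^n}\in\mathcal{A}} H(X^n\mid Y)$ yields $\hat{I}_s(X^n)\ge H(X^n)-\log(nM)$. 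Dividing by $n$, the correction term $\log(nM)/n$ vanishes, so $\liminf_{n\to\infty}\hat{I}_s(X^n)/n\ge H(\mathcal{X})$, matching the upper bound and establishing \eqref{asres1}.

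For the second claim, assume $H(\mathcal{X})\neq 0$. Then both the numerator and denominator of
\begin{equation*}
\hat{\eta}(X^n)\;=\;\frac{\hat{I}_s(X^n)/n}{H(X^n)/n}
\end{equation*}
converge to the same strictly positive limit $H(\mathcal{X})$, so $\hat{\eta}(X^n)\to 1$. The conceptual content is simply that the perfect sample privacy constraints can remove at most $\log(nM)$ bits from the achievable $H(X^n)-H(X^n\mid Y)$, a loss that is sublinear in $n$ and therefore invisible on the entropy-rate scale; no serious technical obstacle arises, since the structural bound from Corollary \ref{cor1} already has exactly the right asymptotic order.
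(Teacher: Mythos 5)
Your proposal is correct and follows essentially the same route as the paper: the paper also sandwiches $\hat{I}_\text{s}(X^n)/n$ between $H(X^n)/n - \log(nM)/n$ and $H(X^n)/n$ using the bound $\min_{p_{Y|X^n}\in\mathcal{A}} H(X^n|Y)\leq \log(nM)$ from Corollary~\ref{cor1}, and then obtains the efficiency claim by dividing numerator and denominator by $n$. No substantive difference.
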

\begin{proof}
From Corollary \ref{cor1}, we have
\begin{equation}
\min_{p_{Y|X^n} \in \mathcal{A}}\!\!\! H(X^n|Y) \leq \log(nM),
\end{equation}
which results in
\begin{equation}
\frac{H(X^n)}{n}-\frac{\log(nM)}{n} \leq \frac{\hat{I}_\text{s}(X^n)}{n} 
\leq 
\frac{H(X^n)}{n}\label{lowb}.
\end{equation}
Taking the limit $n\to\infty$ proves (\ref{asres1}). Hence, when $H(\mathcal{X})\neq 0$, we can write
\begin{equation}
\lim_{n\to\infty} \hat{\eta}(X^n) = 
\lim_{n\to\infty} \frac{\frac{\hat{I}_\text{s}(X^n)}{n}}{\frac{H(X^n)}{n}} =\frac{H(\mathcal{X})}{H(\mathcal{X})} = 1.
\end{equation}
%
%
\end{proof}
Theorem \ref{th2} signifies the fact that the constraints of perfect sample privacy, i.e., $Y\independent X_i,\ \forall i\in[n]$, result in no asymptotic loss of the disclosure.

\begin{corollary}
Assume that instead of the perfect sample privacy constraint, i.e., $Y\independent X_i,\ \forall i\in[n]$, a more restrictive constraint is used, such as $Y\independent (X_i,X_{i+1}),\ \forall i\in[n-1]$, and in general, $Y\independent (X_i,\ldots,X_{i+k-1}),\ \forall i\in[n-k+1]$ for a fixed (i.e., not scaling with $n$) positive integer $k$. The results of Theorem \ref{th2} still hold under these conditions.
\begin{proof}
Having $Y\independent (X_i,\ldots,X_{i+k-1}),\ \forall i\in[n-k+1]$, the number of rows of matrix $\mathbf{P}$ is at most $M^k(n-k+1)$, which is an upper bound on its rank. Since in the evaluation of self-disclosure capacity, the extreme points of $\mathbb{S}$ have at most $M^k(n-k+1)$ non-zero elements, and $\frac{\log(M^k(n-k+1))}{n}\to 0$ as $n\to\infty$, the proof of Theorem \ref{th2} remains unaltered.
\end{proof}
\end{corollary}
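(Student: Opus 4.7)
The plan is to mimic the proof of Theorem~\ref{th2} almost verbatim, replacing the role of the single-sample marginals $p_{X_i|X^n}$ by block marginals $p_{X_i,\dots,X_{i+k-1}|X^n}$, and to check that the resulting rank bound still grows sub-linearly in $n$ once $k$ is held fixed.

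First, I would generalize Lemma~\ref{lemma2} in the obvious way: by the same argument applied to the Markov chain $(X_i,\dots,X_{i+k-1})-X^n-Y$ for each $i\in[n-k+1]$, the joint block-independence requirement $Y\independent(X_i,\dots,X_{i+k-1})$ for all $i$ is equivalent to $(\mathbf{p}_{X^n}-\mathbf{p}_{X^n|y})\in\text{Null}(\tilde{\mathbf P})$ for all $y\in\mathcal Y$, where $\tilde{\mathbf P}$ is obtained by stacking the $|\mathcal X_i\times\cdots\times\mathcal X_{i+k-1}|\times|\hat{\mathcal X}|$ binary matrices $\mathbf P_{X_i^{i+k-1}|X^n}$ on top of each other for $i=1,\dots,n-k+1$. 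Each such block contributes at most $M^k$ rows, giving a total of at most $M^k(n-k+1)$ rows in $\tilde{\mathbf P}$, hence $\text{rank}(\tilde{\mathbf P})\leq M^k(n-k+1)$.

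Next, I would rerun the chain of arguments leading to Corollary~\ref{cor1}, but with $\tilde{\mathbf P}$ in place of $\mathbf P$. The polytope $\tilde{\mathbb S}=\{\mathbf t\geq\mathbf 0:\tilde{\mathbf A}\mathbf t=\tilde{\mathbf A}\mathbf p_{X^n}\}$ (with $\tilde{\mathbf A}$ extracted from an SVD of $\tilde{\mathbf P}$) still has extreme points with at most $\text{rank}(\tilde{\mathbf P})$ non-zero entries, so any feasible $\mathbf p_{X^n|y}$ can be written as a convex combination of such sparse extreme points, and the uniform-distribution bound yields
\begin{equation*}
\min_{p_{Y|X^n}\in\mathcal A'}H(X^n|Y)\leq\log(\text{rank}(\tilde{\mathbf P}))\leq\log(M^k(n-k+1)),
\end{equation*}
where $\mathcal A'$ is the set of mappings satisfying the new block-independence constraint.

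Finally, I would sandwich $\hat I_\text{s}(X^n)/n$ exactly as in \eqref{lowb}:
\begin{equation*}
\frac{H(X^n)}{n}-\frac{\log(M^k(n-k+1))}{n}\leq\frac{\hat I_\text{s}(X^n)}{n}\leq\frac{H(X^n)}{n}.
\end{equation*}
Because $k$ is fixed (independent of $n$), $M^k$ is a constant and $\log(M^k(n-k+1))/n\to 0$; together with the existence of the entropy rate $H(\mathcal X)$, taking $n\to\infty$ gives $\lim_n\hat I_\text{s}(X^n)/n=H(\mathcal X)$ and, when $H(\mathcal X)\neq 0$, $\lim_n\hat\eta(X^n)=1$. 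The only point requiring any care is confirming that the sparsity bound on extreme points of $\tilde{\mathbb S}$ still works when $\tilde{\mathbf P}$ is taller than $\mathbf P$; this is automatic from the standard basic-feasible-solution characterization used in the proof of Theorem~\ref{theo1}, so no genuine obstacle arises, and the assumption that $k$ does not scale with $n$ is exactly what prevents the penalty term $\log(M^k(n-k+1))/n$ from surviving the limit.
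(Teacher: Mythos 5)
Your proposal is correct and follows essentially the same route as the paper: it stacks the block marginal matrices to bound the rank by $M^k(n-k+1)$, invokes the sparsity of the extreme points of the resulting polytope as in Corollary~\ref{cor1}, and then repeats the sandwich argument of Theorem~\ref{th2} with the vanishing penalty term $\log(M^k(n-k+1))/n$. The only difference is that you spell out the intermediate generalizations of Lemma~\ref{lemma2} and Corollary~\ref{cor1}, which the paper leaves implicit.
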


The next example illustrates how the efficiency of synergistic disclosure can converge to 1 as $n$ grows even with a more stringent privacy constraint compared to perfect sample privacy.

\begin{example}\label{ex:4}
Let $X^n$ be a dataset of i.i.d. r.v.'s that are uniformly distributed over $[M]$ for some positive integer $M$. For a fixed $k\in[n]$, let $L^{n-k+1}$ be a sequence of r.v.'s which are generated according to $L_i\triangleq Q+X_i+X_{i+1}+\ldots+X_{i+k-1}(\textnormal{mod}\ M),\ \forall i\in[n-k+1]$, where $Q$ is an auxiliary r.v. which is uniform on $[M]$, and independent of the dataset $X^n$. One can see that $L_1,\ldots,L_{n-k+1}$ are also i.i.d. and uniform on $[M]$. Let $Y\triangleq L^{n-k+1}$. One can verify that the conditional pmf of $Y$ conditioned on $(X_i,\dots,X_{i+k-1})$ is not affected by any realization of the tuple for all $i\in[n-k+1]$, which guarantees the strengthened privacy constraint that $Y\independent (X_i,\dots,X_{i+k-1}),\ \forall i\in[n-k+1]$. Finally, one can check that
\begin{align}
\lim_{n\rightarrow\infty} \eta(X^n) &\geq \lim_{n\rightarrow\infty} \frac{I(X^n;Y)}{H(X^n)}\nonumber\\
&=\lim_{n\rightarrow\infty} \frac{I(X^n;Y|Q)-I(X^n;Q|Y)}{H(X^n)}\label{indQ}\\
&\geq \lim_{n\rightarrow\infty} \frac{H(Y|Q)-\log M}{H(X^n)}\label{recovY1}\\
&\geq \lim_{n\rightarrow\infty} \frac{H(Y|Q,X_1^{k-1})-\log M}{H(X^n)}\label{recovY2}\\
&= \lim_{n\rightarrow\infty} \frac{H(X_k^n)-\log M}{H(X^n)}\label{recovY3}\\
&= \lim_{n\rightarrow\infty} \frac{(n-k+1)M-\log M}{nM}\nonumber\\
&= 1,\label{recovY4}
\end{align}
where (\ref{indQ}) follows from having $Q\independent X^n$; (\ref{recovY1}) follows from the fact that $Y$ is a deterministic function of the tuple $(Q,X^n)$, and $I(X^n;Q|Y)\leq H(Q)=\log M$; (\ref{recovY1}) is from the fact that conditioning does not increase entropy, and also we set $X_1^0\triangleq \emptyset$; (\ref{recovY3}) is due to the fact that conditioned on $(Q,X_1^{k-1})$, $Y$ and $X_k^n$ have a one-to-one correspondence. Finally, in (\ref{recovY4}), we assume that $k$ satisfies the constraint $\lim_{n\to\infty} \frac{k}{n}=0$.   
\end{example}

\subsection{Self-disclosure of continuous variables}\label{sec:cont}

Here we study the self-disclosure properties of small datasets composed of two continuous variables $X_1,X_2$. 

\begin{theorem}\label{th6}
Let $X_1,X_2$ be two independent and continuous random variables with $\mathcal{X}_1,\mathcal{X}_2\subset\mathbb{R}$. We have
\begin{equation}
\sup_{\substack{Y|X_1,X_2:\\Y\independent X_1,\ Y\independent X_2}}I(X_1,X_2;Y)=\infty
\end{equation}
\end{theorem}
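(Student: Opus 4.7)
The plan is to construct, for each $n\in\mathbb{N}$, a discrete mapping $Y^{(n)}$ of $(X_1,X_2)$ with $Y^{(n)}\independent X_1$ and $Y^{(n)}\independent X_2$ but $I(X_1,X_2;Y^{(n)})\geq n\log 2$, which forces the supremum to diverge. The construction generalizes the XOR trick for two fair coins (recalled in the introduction of the paper) to infinitely many independent bits that can be extracted from continuous variables via the probability integral transform.

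The first step is a reduction to the case where $X_1,X_2$ are independent and uniform on $[0,1]$. Since each $X_j$ has a continuous distribution, $U_j\triangleq F_{X_j}(X_j)$ is uniform on $[0,1]$, and $U_1\independent U_2$ by the independence of $X_1,X_2$. I will build $Y^{(n)}$ as a deterministic function of $(U_1,U_2)$, hence of $(X_1,X_2)$. The equivalence $Y^{(n)}\independent X_j \Leftrightarrow Y^{(n)}\independent U_j$ in this setting follows because the conditional distribution of $Y^{(n)}$ given $X_j=x_j$ depends on $x_j$ only through $F_{X_j}(x_j)=u_j$, the other coordinate remaining uniform and independent. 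The second step is to exploit the dyadic expansion $U_j=\sum_{i\geq 1}2^{-i}B_{j,i}$: for uniform $U_j$, the bits $\{B_{j,i}\}_{i\geq 1}$ are i.i.d.\ Bernoulli($1/2$), and the full family $\{B_{j,i}\}_{j\in\{1,2\},\,i\geq 1}$ consists of mutually independent fair bits. Define $Y_i\triangleq B_{1,i}\oplus B_{2,i}$ and $Y^{(n)}\triangleq(Y_1,\dots,Y_n)$.

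The third step verifies the required properties. Each $Y_i$ is Bernoulli($1/2$) (XOR of an independent fair bit), and the $Y_i$'s are jointly independent since they depend on disjoint pairs of underlying bits; hence $H(Y^{(n)})=n\log 2$. Conditioning on any realization of $U_1$, i.e.\ on the bits $\{B_{1,i}\}_i$, the variables $Y_i=b_{1,i}\oplus B_{2,i}$ remain i.i.d.\ Bernoulli($1/2$) because $\{B_{2,i}\}_i$ is unaffected and uniform, so $Y^{(n)}\independent U_1$; symmetrically $Y^{(n)}\independent U_2$. Since $Y^{(n)}$ is a deterministic function of $(U_1,U_2)$, one gets $I(U_1,U_2;Y^{(n)})=H(Y^{(n)})=n\log 2$, and by the reduction $I(X_1,X_2;Y^{(n)})\geq n\log 2$. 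Letting $n\to\infty$ completes the argument. The only delicate point is the reduction step when $F_{X_j}$ is merely continuous (not strictly increasing, hence not invertible); this is circumvented by building $Y^{(n)}$ from the uniforms $U_j$ rather than from $X_j$ directly, which makes the equivalence of the two independence conditions automatic.
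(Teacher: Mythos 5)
Your proof is correct and follows essentially the same route as the paper: both constructions quantize each $X_j$ into $K$ equiprobable cells (your dyadic bits of $F_{X_j}(X_j)$ are exactly such a partition with $K=2^n$) and then apply a group sum --- bitwise XOR in your case, addition modulo $K$ in the paper's --- to obtain a uniform, deterministic $Y$ that is independent of each marginal and carries $\log K$ bits. The choice of group and the explicit use of the probability integral transform are presentational differences only.
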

\begin{proof}
Let $K$ be an arbitrary positive integer. Partition $\mathcal{X}_1$ into $K$ disjoint intervals $\mathcal{I}_i,\ (i\in[K])$ with equal probabilities, i.e., $\mbox{Pr}\{X_1\in\mathcal{I}_i\}=\frac{1}{K},\forall i\in[K]$. Similarly, partition $\mathcal{X}_2$ into $K$ disjoint intervals $\mathcal{J}_i,\ (i\in[K])$ with equal probabilities. Let $\hat{Y}$ be a deterministic function of $(X_1,X_2)$ defined as
\begin{equation}
    \hat{Y} = (i+j)(\mbox{mod }K)+1 ,\ \mbox{if } (x_1,x_2)\in\mathcal{I}_i\times\mathcal{J}_j,\ \forall i,j\in[K].
\end{equation}
It is easy to verify that $\hat{Y}$ is uniformly distributed over $[K]$. Also, the distribution of $\hat{Y}$ is the same after observing any realization of $X_1$ (or $X_2$); hence, we have $\hat{Y}\independent X_1,\ \hat{Y}\independent X_2$. By definition,
\begin{align}
 \sup_{\substack{Y|X_1,X_2:\\Y\independent X_1,\ Y\independent X_2}}I(X_1,X_2;Y)&\geq I(X_1,X_2;\hat{Y})=\log K,
\end{align}
where we have used the fact that $H(\hat{Y}|X_1,X_2)=0$, since $\hat{Y}$ is a deterministic function of $(X_1,X_2)$. Letting $K\to\infty$ completes the proof.
\end{proof}

\subsection{Heuristic approaches}\label{sec:heu}

In the method proposed in Section~\ref{sec:3}, the complexity of the computations required for building the optimal synergistic mapping grows exponentially with the size of the dataset. The main bottleneck of Algorithm~\ref{alg2} is the exhaustive search over groups of columns of $\mathbf{A}$ that is needed to find all the extreme points of $\mathbb{S}$. A straightforward solution to this issue is to only explore a fixed number of groups of columns, to be chosen randomly. Although this approach generates a mapping that satisfies perfect sample privacy, numerical evaluations show that it performance tends to zero if the chosen number of evaluated groups of columns is bounded. Therefore, more ingenious heuristic methods for building suboptimal mappings are needed.

It is worth to note that the search of extreme points of $\mathbb{S}$ can become computationally expensive due to two reasons: either the dataset has a large number of samples, or their alphabet is big. In the sequel, Section~\ref{sec:heu1} presents a procedure that addresses the first issue, while Section~\ref{sec:heu2} takes care of the second. Please note that, although both procedures are presented for scenarios where the datases are composed of independent samples, it is straightforward to generalize them to datasets composed of groups of samples that are independent of other groups, and can be jointly processed.

\subsubsection{Partial processing method}
\label{sec:heu1}

Let us assume that $X^n$ is composed by independent samples. Let us generate mappings of the form $p_{Y_{\text{par},j}|X_j,X_{j+1}}$ for $j\in[n-1]$ according to 
\begin{equation}
    p_{Y_{\text{par},j}|X_j,X_{j+1}} = \argmax_{\substack{p_{Y|X_j,X_{j+1}}: \\Y\independent X_j, Y\independent X_{j+1}}} I(X_j,X_{j+1};Y), \label{eq:heu_parwise}
\end{equation}
Note that the complexity of building $Y_\text{par}^{n-1}=(Y_{\text{par},1},\dots,Y_{\text{par},n-1})$ scales linearly with the size of the dataset. The next lemma shows that $Y_\text{par}^{n-1}$ guarantees perfect sample privacy for $X^n$.
\begin{lemma}\label{lemma:par}
With the above construction, $X_k\independent Y_\text{par}^{n-1}$ for all $k\in [n]$. Moreover, the performance of this strategy is 
\begin{equation}
I(Y_\text{par}^{n-1};X^n)  = \sum_{j=1}^{n-1}  I(Y_{\text{par},j};X_j | X_{j+1}).
\end{equation}
\begin{proof}
See Appendix~\ref{app:heuristic_par}.
\end{proof}
\end{lemma}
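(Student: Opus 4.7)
The plan is to exploit the telescoping structure induced by the pairwise independence constraints in \eqref{eq:heu_parwise} combined with the independence of the samples $X_i$. I would first assume that the individual mappings $p_{Y_{\text{par},j}|X_j,X_{j+1}}$ use mutually independent randomization, so that $p_{Y_{\text{par}}^{n-1}|X^n}(y^{n-1}|x^n)=\prod_{j=1}^{n-1} p_{Y_{\text{par},j}|X_j,X_{j+1}}(y_j\,|\,x_j,x_{j+1})$. This is the natural realization of the scheme and is what makes all the subsequent marginalizations separable.

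For the privacy claim, I would first establish the auxiliary identity $p_{Y_{\text{par}}^{n-1}}(y^{n-1})=\prod_{j=1}^{n-1}p_{Y_{\text{par},j}}(y_j)$ by induction on $n$: summing $x_n$ against the last factor $p(y_{n-1}|x_{n-1},x_n)$ yields $p(y_{n-1}|x_{n-1})=p(y_{n-1})$ by $Y_{\text{par},n-1}\independent X_{n-1}$, and the rest follows from the inductive hypothesis. Next, to prove $X_k\independent Y_{\text{par}}^{n-1}$, I would fix $X_k=x_k$ and split the chain at index $k$. The right subchain is collapsed by iteratively summing $x_n, x_{n-1},\ldots,x_{k+1}$; the last step uses $\sum_{x_{k+1}} p(x_{k+1})p(y_k|x_k,x_{k+1}) = p(y_k|x_k) = p(y_k)$ via $Y_{\text{par},k}\independent X_k$. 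The left subchain is handled symmetrically, with the final step using $\sum_{x_{k-1}} p(x_{k-1})p(y_{k-1}|x_{k-1},x_k) = p(y_{k-1}|x_k) = p(y_{k-1})$ via $Y_{\text{par},k-1}\independent X_k$. Multiplying both halves gives $p(y^{n-1}|x_k)=\prod_{j=1}^{n-1} p(y_j) = p(y^{n-1})$.

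For the performance claim, I would decompose $I(Y_{\text{par}}^{n-1};X^n) = H(Y_{\text{par}}^{n-1}) - H(Y_{\text{par}}^{n-1}|X^n)$. The mutual independence of the $Y_{\text{par},j}$'s from the auxiliary identity gives $H(Y_{\text{par}}^{n-1})=\sum_{j} H(Y_{\text{par},j})$, and the independent-randomization assumption combined with the fact that $Y_{\text{par},j}$ depends on $X^n$ only through $(X_j,X_{j+1})$ yields $H(Y_{\text{par}}^{n-1}|X^n) = \sum_j H(Y_{\text{par},j}|X_j,X_{j+1})$, producing $\sum_j I(Y_{\text{par},j}; X_j, X_{j+1})$. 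Finally, since $Y_{\text{par},j}\independent X_{j+1}$, the chain rule collapses $I(Y_{\text{par},j};X_j,X_{j+1}) = I(Y_{\text{par},j};X_j|X_{j+1})$, yielding the stated expression.

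The main obstacle I anticipate is the bookkeeping in the telescoping computation when the index $k$ is strictly interior: both constraints $Y_{\text{par},j}\independent X_j$ and $Y_{\text{par},j}\independent X_{j+1}$ are essential, as the summations collapse in opposite directions across the fixed $X_k=x_k$, and each direction consumes a different one of the two independence hypotheses. The independence of the samples $X_i$ is what allows the priors to factorize cleanly inside the product, so that the chain decouples at the fixed index; without it, the argument breaks down and the resulting $Y_{\text{par}}^{n-1}$ could leak information about some $X_k$.
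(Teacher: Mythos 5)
Your proof is correct, and it takes a somewhat different route from the paper's. The paper argues at the level of mutual-information identities: for privacy it reduces $I(X_k;Y_\text{par}^{n-1})$ to $I(X_k;Y_{\text{par},k-1},Y_{\text{par},k})$ and kills that term using $I(X_k;Y_{\text{par},k})=I(X_k;Y_{\text{par},k-1})=0$ together with the Markov chain $Y_{\text{par},k-1}-X_k-Y_{\text{par},k}$; for the performance it applies the chain rule to $I(X^n;Y_\text{par}^{n-1})=\sum_j I(X^n;Y_{\text{par},j}|Y_\text{par}^{j-1})$ and disposes of the conditioning via $I(Y_{\text{par},j};Y_\text{par}^{j-1})\leq I(Y_{\text{par},j};X_j)=0$. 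You instead work directly with the distributions: you first establish the stronger structural fact that $Y_{\text{par},1},\dots,Y_{\text{par},n-1}$ are \emph{mutually} independent, then show $p(y^{n-1}|x_k)=\prod_j p(y_j)$ by collapsing the two halves of the chain on either side of the fixed $x_k$, and finally obtain the performance from $H(Y_\text{par}^{n-1})-H(Y_\text{par}^{n-1}|X^n)$ using that mutual independence plus the conditional independence given $X^n$. Both arguments rest on the same two ingredients --- the product form of $p_{X^n,Y_\text{par}^{n-1}}$ (independent randomization across the pairwise mappings, which the paper also assumes in its Eq.~\eqref{eq:joint_structure}) and the use of \emph{both} constraints $Y_{\text{par},j}\independent X_j$ and $Y_{\text{par},j}\independent X_{j+1}$ --- and your final collapse $I(Y_{\text{par},j};X_j,X_{j+1})=I(Y_{\text{par},j};X_j|X_{j+1})$ is identical to the paper's. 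What your version buys is a more elementary, self-contained derivation that makes the full joint independence of the outputs explicit (the paper only obtains it implicitly through the vanishing prefix informations); what the paper's version buys is brevity, since it never needs the explicit product form of the output marginal. Your closing remark that the independence of the $X_i$'s is what decouples the chain at the fixed index is exactly the right diagnosis of where the argument would fail for correlated samples.
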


\begin{example}
Consider the case where $X^n$ are i.i.d. samples. Then, the asymptotic disclosure efficiency for this mapping, denoted by $\eta_\text{par}$, is found to be monotonically increasing with limit given by
\begin{equation}
\lim_{n\to\infty} \hat{\eta}_\text{par}(X^n) = \lim_{n\to\infty} \frac{(n-1) I(Y_{\text{par},1};X_1|X_2)}{n H(X_1)} = \frac{I(Y_{\text{par},1};X_1|X_2)}{H(X_1)}.\nonumber 
\end{equation}
The performance of this disclosure mapping for the case of i.i.d. Bernoulli samples is illustrated in Figures~\ref{fig:heu_disclosure} and Figure~\ref{fig:heu_efficiency}. The asymptotic efficiency is maximal only for the case of $\mathbb{P}\{X_1=1\}=1/2$.
\end{example}

\begin{figure}[h]
  \centering
\begin{tikzpicture}[scale=0.6]
  \begin{axis}
[
 mark size=0.7pt,
 legend pos=north west,
 cycle list name = nano line style,
 xlabel={Dataset distribution $\mathbb{P}\{ X_k=1\}$},
 xmax = 0.5,
 xmin =0,
 ylabel={Disclosure ($I_\text{s}$)},
 ymin =0,
 ymax = 3,
 grid=both,
 minor grid style={gray!25},
 major grid style={gray!25},
 width=0.7\linewidth,
 height=0.55\linewidth,
 tick label style={/pgf/number format/fixed},
 no marks
] 
\addplot table[x=q, y = Optimal, col sep=comma]{./heu_data_1.csv} ; 
\addlegendentry{Optimal mapping} ; 
\addplot table[x=q, y = Loc, col sep=comma]{./heu_data_1.csv} ; 
\addlegendentry{Partial-processing method} ; 
\addplot table[x=q, y = Pre, col sep=comma]{./heu_data_1.csv} ;
\addlegendentry{Pre-processing method} ;
\end{axis}
\end{tikzpicture}
\caption{Performance of two heuristic approaches versus the optimal scheme, for the case of $n=4$ i.i.d. Bernoulli data samples with parameter $\text{Pr}\{ X_k=1\}$ represented on the x-axis.}
\label{fig:heu_disclosure}
\end{figure}
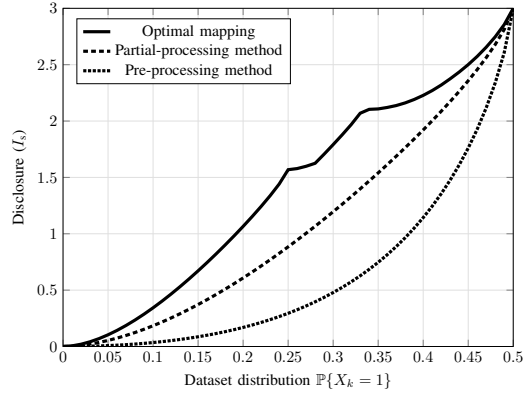

\begin{figure}[h]
  \centering
\begin{tikzpicture}[scale=0.6]
  \begin{axis}
[
 mark size=0.7pt,
 legend pos=south east,
 cycle list name = nano line style,
 xlabel={Dataset distribution ($\mathbb{P}\{ X_k=1\}$)},
 xmax = 0.5,
 xmin =0,
 ylabel={Asymptotic disclosure efficiency},
 ymin =0,
 ymax = 1.05,
 grid=both,
 minor grid style={gray!25},
 major grid style={gray!25},
 width=0.7\linewidth,
 height=0.55\linewidth,
 tick label style={/pgf/number format/fixed},
 no marks
] 
\addplot table[x=Pw, y = Optimal, col sep=comma]{./heu_data_3.csv} ; 
\addlegendentry{Optimal mapping} ; 
\addplot table[x=Pw, y = Loc, col sep=comma]{./heu_data_3.csv} ; 
\addlegendentry{Local-processing method} ; 
\addplot table[x=Pw, y = Pre, col sep=comma]{./heu_data_3.csv} ;
\addlegendentry{Pre-processing method} ;
\end{axis}
\end{tikzpicture}
\caption{Disclosure efficiency for heuristic approaches v/s optimal efficiency -- given by Theorem~\ref{th2} -- for asymptotically large datasets of binary i.i.d. samples.}
\label{fig:heu_efficiency}
\end{figure}
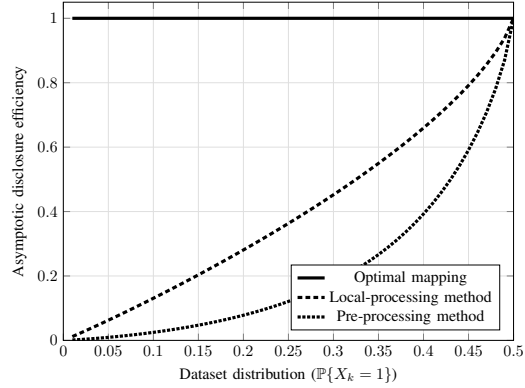

This approach can be further generalized as follows: for a given $k\in[n-1]$, build mappings $p_{Y_{\text{par}(k),j}|X_j,\dots,X_{j+k-1}}$ for $j\in[n-k+1]$ given by
\begin{equation}
    p_{Y_{\text{par}(k),j}|X_j,\dots,X_{j+k-1}} = 
    \argmax_{\substack{p_{Y|X_j,\dots,X_{j+k-1}}: \\
    Y\independent X_i, i=j,\dots,j+k-1}} I(X_j,\dots,X_{j+k-1};Y). \nonumber
\end{equation}
Note that $\eqref{eq:heu_parwise}$ correspond to the case of $k=2$. Following a proof entirely analogous to the one of Lemma~\ref{lemma:par}, one can show that $Y^{n-k+1}_{\text{par}(k)} = (Y_{\text{par}(k),1},\dots,Y_{\text{par}(k),n-k+1})$ satisfies perfect sample privacy for all $k\in[n-1]$. Interestingly, schemes with large values of $k$ attain high disclosure efficiency, at the cost of incurring in more expensive computations for calculating the corresponding mappings. Therefore, $k$ can be tuned in order to balance efficiency and computational complexity.

\subsubsection{Pre-processing of independent datasets}
\label{sec:heu2}
Another approach for building sub-optimal disclosure mappings is to perform a pre-processing stage over the dataset, in order to make it reach a distribution for which the optimal disclosure strategy is already known. Furthermore, this pre-processing must be carried out in a way that does not violate the privacy constraints.

Assume that $X^n$ is a dataset of independent variables with the same alphabet, i.e., $X_i\in\mathcal{X}(=[|\mathcal{X}|]),\forall i\in[n]$. If all $X_i$'s are uniformly distributed, then the optimal solution is $Y=L^{n-1}$, in which $L_i=X_i+X_{i+1} (\textnormal{ mod } |\mathcal{X}|)$; however, if the marginal distributions, i.e., those of $X_i$'s, are not uniform, then the aforementioned disclosure mapping does not satisfy perfect sample privacy anymore, and the optimal solution is obtained via the procedure explained in Section \ref{sec:3}. One sub-optimal solution here is to first pass each $X_i$ through a uniformizer, i.e., a pre-processing mapping, denoted by $p_{S_i|X_i}$, such that $S_i$ is uniform over $\mathcal{X}$. Then, the optimal synergistic disclosure mapping for i.i.d. uniformly distributed data samples can be applied to the new dataset $S^n$. In this context, we denote the output by $Y_\text{pre}$. Finally, the fact that $X_i-S_i-Y_{\text{pre}}$ form a Markov chain in conjunction with $Y_\text{pre}\independent S_i$ proves $Y_{\text{pre}}\independent X_i$. This procedure is illustrated in the next example.

\begin{example}
Assume that $X_i$'s are i.i.d $\text{Bern}(q)$ with $q\in (0,1/2]$. Following the previous discussion, the uniformizer, i.e., $p_{S_j|X_j}$, is a $Z$ channel with crossover probability $\beta = (0.5-q)/(1-q)$, as shown in Figure~\ref{fig:pre_heuristic}. With this construction, we have a new dataset composed of i.i.d. $\text{Bern}(\frac{1}{2})$ samples $S_i$. Finally, we set $Y_{\textnormal{pre}}=[S_1\oplus S_2,\ S_2\oplus S_3, \ldots ,S_{n-1}\oplus S_n]^T$. The performance of this strategy is obtained as follows.
\begin{align}
  I(Y_\text{pre};X^n)&= \sum_{k=1}^{n-1} I(S_k\oplus S_{k+1};X_k,X_{k+1})\nonumber\\
  &=\sum_{k=1}^{n-1} H(S_k\oplus S_{k+1})-H(S_k\oplus S_{k+1}|X_k,X_{k+1})\nonumber\\
  &=\sum_{k=1}^{n-1} \bigg(1-\sum_{(x_k,x_{k+1})\in\{0,1\}^2}H(S_k\oplus S_{k+1}|X_k=x_k,X_{k+1}=x_{k+1})\bigg)\nonumber\\
  &=(n-1) \Big[ 1 - 2 q (1-q) h_b( \beta) - (1-q)^2 h_b\big( 2\beta(1-\beta)\big) \Big]\label{la},
\end{align}
where $h_b(p) \triangleq - p \log p - (1-p) \log (1-p)$ denotes the binary entropy function, and the last step follows from the fact that $H(S_k\oplus S_{k+1}|X_k=0,X_{k+1}=1)=H(S_k\oplus S_{k+1}|X_k=1,X_{k+1}=0)=h_b(\beta), H(S_k\oplus S_{k+1}|X_k=0,X_{k+1}=0)=h\big( 2\beta(1-\beta)\big)$, and $H(S_k\oplus S_{k+1}|X_k=1,X_{k+1}=1)=0$. Finally, the disclosure efficiency of this method, $\hat{\eta}_\text{pre}(X^n)$, grows monotonically with $n$, with the asymptotic disclosure efficiency being as
\begin{equation}
\lim_{n\to\infty} \hat{\eta}_\text{pre}(X^n) =  \frac{1}{{h(q)}} \Big[ 1 - 2 q (1-q) h( \beta) - (1-q)^2 h\big( 2\beta(1-\beta)\Big] . \nonumber
\end{equation}
The performance of this algorithm is shown in Figure~\ref{fig:heu_disclosure} and \ref{fig:heu_efficiency} for the case of $n=4$.
\end{example}
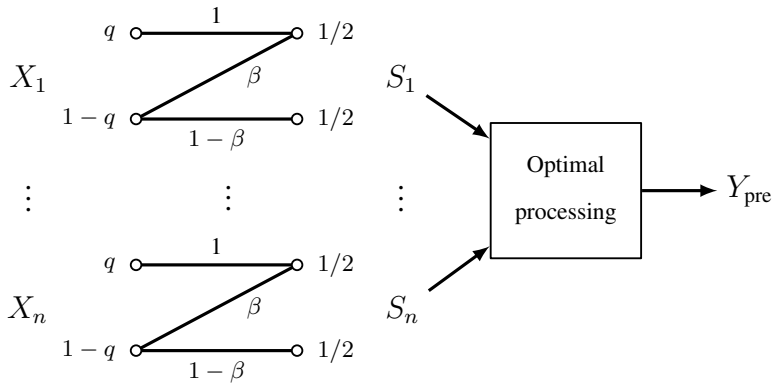
\begin{figure}\label{fig:pre_heuristic}
\begin{center}
\begin{tikzpicture}
  \node[dspnodeopen, minimum width=4pt, dsp/label=left] (X1_1) {\small{$q$}};    
  \node[dspnodeopen, minimum width=4pt, below=1cm of X1_1, dsp/label=left] (X1_0) {\small{$1-q$}};
  \node[dspnodeopen, minimum width=4pt, right=2cm of X1_1, dsp/label=right] (S1_1) {\small{$1/2$}};    
  \node[dspnodeopen, minimum width=4pt, below=1cm of S1_1, dsp/label=right] (S1_0) {\small{$1/2$}};
  \node[below left=0.25cm and 1cm of X1_1](X1){\large{$X_1$}};
  \node[below right=0.25cm and 1cm of S1_1](S1){\large{$S_1$}};
  \node[below=0.8cm of X1](dots1){\large{$\vdots$}};
  \node[below right=0.5cm and 1cm of X1_0](dots2){\large{$\vdots$}};
  \node[below=0.8cm of S1](dots3){\large{$\vdots$}};
  \node[dspnodeopen, minimum width=4pt, below=1.8cm of X1_0, dsp/label=left] (Xn_1) {\small{$q$}};    
  \node[dspnodeopen, minimum width=4pt, below=1cm of Xn_1, dsp/label=left] (Xn_0) {\small{$1-q$}};
  \node[dspnodeopen, minimum width=4pt, right=2cm of Xn_1, dsp/label=right] (Sn_1) {\small{$1/2$}};    
  \node[dspnodeopen, minimum width=4pt, below=1cm of Sn_1, dsp/label=right] (Sn_0) {\small{$1/2$}};
  \node[below left=0.25cm and 1cm of Xn_1](Xn){\large{$X_n$}};
  \node[below right=0.25cm and 1cm of Sn_1](Sn){\large{$S_n$}};
  \node[dspsquare, right=1cm of dots3, minimum size=1.8cm, text height=-0.4cm, text width=2cm, align=center](box){\small{Optimal processing}};
  \node[right=1cm of box](Y){\large{$Y_\text{pre}$}};
\draw[,line width=1.2pt] (X1_1) to node[midway, above]{\small{1}} (S1_1);
\draw[,line width=1.2pt] (X1_0) to node[near end, below]{\small{$\beta$}} (S1_1);
\draw[,line width=1.2pt] (X1_0) to node[midway, below]{\small{$1-\beta$}} (S1_0);
\draw[,line width=1.2pt] (Xn_1) to node[midway, above]{\small{1}} (Sn_1);
\draw[,line width=1.2pt] (Xn_0) to node[near end, below]{\small{$\beta$}} (Sn_1);
\draw[,line width=1.2pt] (Xn_0) to node[midway, below]{\small{$1-\beta$}} (Sn_0);
\draw[-latex,line width=1.2pt] (S1) to node{} (box);
\draw[-latex,line width=1.2pt] (Sn) to node{} (box);
\draw[-latex,line width=1.2pt] (box) to node{} (Y);
\end{tikzpicture}
\end{center}
\caption{Two-step process to generate a suboptimal disclosure mapping. Each sample is first pre-processed via a Z-channel with $\beta=(1/2-q)/(1-q)$; then, an optimal processing is performed over $S^n$.}
\end{figure}

When processing samples with small alphabets this approach is often less efficient than the one described in Section~\ref{sec:heu1}. However, the main strength of this approach is that it can be applied to datasets composed by samples with large alphabets, e..g. using the mapping outlined in Example~\ref{ex:4}.

\section{Conclusions}\label{sec:conclusions}

This work develops methods to enable synergistic data disclosure, which allow to make publicly available collective properties of a dataset while keeping the values of each data sample perfectly confidential. The coexistence of privacy and utility is attained by exploiting counter-intuitive properties of multivariate statistics, which allow a variable to be correlated with a random vector while being independent to each of it components. An algorithm has been presented to build an optimal synergistic disclosure mapping following standard LP techniques. Moreover, we developed closed-form expressions for the synergistic disclosure capacity in a number of scenarios. 

While perfect sample privacy could seem to be a restrictive ideal, our results show that in many scenarios there exist disclosure mappings whose efficiency tends asymptotically to one. This means that the amount of data that one needs to hide in order to guarantee perfect sample privacy becomes negligible for large datasets. This promising result -- which holds with remarkable generality -- shows that perfect sample privacy can be extremely efficient while providing strong privacy guarantees. 

When compared with differential privacy, both approaches share the property of being robust to post-processing (as any function of a perfect sample-private mapping keeps this property). 
An advantage of our approach is that, while differential privacy is known to be less efficient in cases of correlated data~\cite{liu2016} (although partial solutions to this issue have been proposed~\cite{liu2016,gehrke2011towards,li2013membership,kifer2014pufferfish}), our approach is well-suited to data with any underlying distribution. However, a limitation of our approach is that it requires knowledge of the statistics of the dataset and latent feature, which are unknown in many real scenarios. The estimation of unknown statistics can be approached by using well-established methods of Bayesian
inference~\cite{gelman2013bayesian} and machine learning~\cite{Bishop:2006:PRM:1162264}. It is, however, an important future step is to study how estimation errors could impact the privacy guarantees.



\appendices
\section{}\label{app1}
Let $\mathcal{Y}$ be an arbitrary set. Let $\mathbb{S}$ be the set of probability vectors defined in (\ref{poly}). Let $\mathcal{Q}$ denote an index set of $\mbox{rank}(\mathbf{P})$ linearly independent columns of $\mathbf{P}$. Hence, the columns corresponding to the index set $\mathcal{Q}^c=[|\hat{\mathcal{X}}|]\backslash \mathcal{Q}$ can be written as a linear combination of the columns indexed by $\mathcal{Q}$. Let $\pi:[\mbox{nul}(\mathbf{P})]\to\mathcal{Q}^c$ such that $\pi(i)<\pi(j)$ for $i<j,\forall i,j\in [\mbox{nul}(\mathbf{P})]$.  Let $\mathbf r :\mathbb{S}\to\mathbb{R}^{\mbox{nul}(\mathbf{P})+1}$ be a vector-valued mapping defined element-wise as
\begin{align}
r_{i}(\mathbf{p})&=\mathbf{p}(\pi(i)),\ \forall i\in[\mbox{nul}(\mathbf{P})]\nonumber\\
r_{\mbox{nul}(P)+1}(\mathbf{p})&=H(\mathbf{P}_{W|X}\mathbf{p}),\label{cara}
\end{align}
where $\mathbf{p}(\pi(i))$ denotes the $\pi(i)$-th element of the probability vector $\mathbf{p}$. Since $\mathbb{S}$ is a closed and bounded subset of $\mathcal{P}(\hat{\mathcal{X}})$, it is compact. Also, $\mathbf{r}$ is a continuous mapping from $\mathbb{S}$ to $\mathbb{R}^{\mbox{nul}(P)+1}$. Therefore, from the support lemma \cite{Elgamal}, for every $Y\sim F(y)$ defined on $\mathcal{Y}$, there exists a random variable $Y'\sim p(y')$ with $|\mathcal{Y'}|\leq \mbox{nul}(\mathbf{P})+1$ and a collection of conditional probability vectors $\mathbf{p}_{X|y'}\in\mathbb{S}$ indexed by $y'\in\mathcal{Y}'$, such that
\begin{equation*}
\int_{\mathcal{Y}}r_i(\mathbf{p}_{X|y})dF(y)=\sum_{y'\in\mathcal{Y'}}r_i(\mathbf{p}_{X|y'})p(y'),\ i\in[\mbox{nul}(\mathbf{P})+1].
\end{equation*}
It can be verified that by knowing the marginals $\mathbf{p}_{X_i}, \forall i\in[n]$, and the $\mbox{nul}(\mathbf{P})$ elements of $\mathbf{p}_{X^n}$ corresponding to index set $\mathcal{Q}^c$, the remaining $|\hat{\mathcal{X}}|-\mbox{nul}(\mathbf{P})$ elements of $\mathbf{p}_{X^n}$ can be uniquely identified. Therefore, for an arbitrary $Y$ in $W-X^n-Y$, that satisfies $X_i\independent Y, \forall i\in[n]$, the terms $p_X(\cdot)$, and $I(W;Y)$ are preserved if $Y$ is replaced with $Y'$. So are the conditions of independence as $\mathbf{p}_{X|Y'}\in\mathbb{S}, \forall y'\in\mathcal{Y}'$. Since we can simply construct the Markov chain $W-X^n-Y'$, there is no loss of optimality in considering $|\mathcal{Y}|\leq\mbox{nul}(\mathbf{P})+1$.

The attainability of the supremum follows from the continuity of $I(W;Y)$ and the compactness of $\mathbb{S}$.


\section{}\label{app1}
Let $p^*_{Y|\tilde{W},X^n}$ denote the maximizer of (\ref{jdef}), which induces $p^*_{Y|X^n}$ as
\begin{equation*}
    p^*_{Y|X^n}(y|x^n)=\sum_{\tilde{w}}p(\tilde{w}|x^n)p^*_{Y|\tilde{W},X^n}(y|\tilde{w},x^n),\ \forall y,x^n.
\end{equation*}
It is evident that when $Y$ is generated by applying $p^*_{Y|X^n}$ to $X^n$, it satisfies the perfect sample privacy constraints, i.e., $Y\independent X_i, \forall i\in[n].$ Let $q(\tilde{w},x^n,y)\triangleq p(\tilde{w},x^n)p^*_{Y|X^n}(y|x^n)$, and $p^*(\tilde{w},x^n,y)\triangleq p(\tilde{w},x^n)p^*_{Y|\tilde{W},X^n}(y|\tilde{w},x^n)$.
We first show that
\begin{equation}\label{sad}
    \lim_{n\to\infty}d_{TV}(q,p^*)=\lim_{n\to\infty}\|q-p^*\|_1=0,
\end{equation}
where $d_{TV}$ denotes the total vatiation distance.
The proof is as follows.
\begin{align}
    d_{TV}(q,p^*)&=\sum_{\tilde{w},x^n,y}|q(\tilde{w},x^n,y)-p^*(\tilde{w},x^n,y)|\nonumber\\
    &=\sum_{\tilde{w},x^n,y}p(\tilde{w},x^n)|p^*(y|x^n)-p^*(y|\tilde{w},x^n)|\nonumber\\
    &=\sum_{\tilde{w}}p(\tilde{w})\sum_{x^n}p(x^n|\tilde{w})\sum_y|p^*(y|x^n)-p^*(y|\tilde{w},x^n)|\nonumber\\
    &=\sum_{\tilde{w}}p(\tilde{w})\sum_{x^n\in\mathcal{T}^n_{\epsilon}(p_{X|\tilde{W}}(\cdot|\tilde{w}))}p(x^n|\tilde{w})\sum_y|p^*(y|x^n)-p^*(y|\tilde{w},x^n)|\nonumber\\
    &\ \ \ + \sum_{\tilde{w}}p(\tilde{w})\sum_{x^n\not\in\mathcal{T}^n_{\epsilon}(p_{X|\tilde{W}}(\cdot|\tilde{w}))}p(x^n|\tilde{w})\sum_y|p^*(y|x^n)-p^*(y|\tilde{w},x^n)|\label{tv1}\\
    &=\sum_{\tilde{w}}p(\tilde{w})\sum_{x^n\in\mathcal{T}^n_{\epsilon}(p_{X|\tilde{W}}(\cdot|\tilde{w}))}p(x^n|\tilde{w})\sum_y|p^*(y|x^n)-p^*(y|\tilde{w},x^n)|\nonumber\\
    &\ \ \ + 2\sum_{\tilde{w}}p(\tilde{w})\sum_{x^n\not\in\mathcal{T}^n_{\epsilon}(p_{X|\tilde{W}}(\cdot|\tilde{w}))}p(x^n|\tilde{w})\label{tv2}\\
    &\leq \sum_{\tilde{w}}p(\tilde{w})\sum_{x^n\in\mathcal{T}^n_{\epsilon}(p_{X|\tilde{W}}(\cdot|\tilde{w}))}p(x^n|\tilde{w})\sum_y|p^*(y|x^n)-p^*(y|\tilde{w},x^n)|\nonumber\\
    &\ \ \ +2\textnormal{Pr}\{X^n\not\in\mathcal{T}^n_{\epsilon}(p_{X|\tilde{W}}(\cdot|\tilde{W}))\}\nonumber\\
     &=\sum_{\tilde{w}}p(\tilde{w})\sum_{x^n\in\mathcal{T}^n_{\epsilon}(p_{X|\tilde{W}}(\cdot|\tilde{w}))}p(x^n|\tilde{w})\sum_y\bigg|\sum_{j\in\tilde{\mathcal W}}p_{\tilde{W}|X^n}(j|x^n)\bigg(p^*_{Y|\tilde{W},X^n}(y|j,x^n)-p^*(y|\tilde{w},x^n)\bigg)\bigg|\nonumber\\
    &\ \ \ +2\textnormal{Pr}\{X^n\not\in\mathcal{T}^n_{\epsilon}(p_{X|\tilde{W}}(\cdot|\tilde{W}))\}\label{tv3}\\
    &\leq\sum_{\tilde{w}}p(\tilde{w})\sum_{x^n\in\mathcal{T}^n_{\epsilon}(p_{X|\tilde{W}}(\cdot|\tilde{w}))}p(x^n|\tilde{w})\sum_{j\neq\tilde{w}}p_{\tilde{W}|X^n}(j|x^n)\sum_y|p^*_{Y|\tilde{W},X^n}(y|j,x^n)-p^*(y|\tilde{w},x^n)|\nonumber\\
    &\ \ \ +2\textnormal{Pr}\{X^n\not\in\mathcal{T}^n_{\epsilon}(p_{X|\tilde{W}}(\cdot|\tilde{W}))\}\label{tv4}\\
    &\leq 2\sum_{\tilde{w}}p(\tilde{w})\sum_{x^n\in\mathcal{T}^n_{\epsilon}(p_{X|\tilde{W}}(\cdot|\tilde{w}))}p(x^n|\tilde{w})\sum_{j\neq\tilde{w}}p_{\tilde{W}|X^n}(j|x^n)+2\textnormal{Pr}\{X^n\not\in\mathcal{T}^n_{\epsilon}(p_{X|\tilde{W}}(\cdot|\tilde{W}))\}\label{tv5}\\
    &=2\sum_{\tilde{w}}p(\tilde{w})\sum_{x^n\in\mathcal{T}^n_{\epsilon}(p_{X|\tilde{W}}(\cdot|\tilde{w}))}\frac{p(x^n|\tilde{w})}{p(x^n)}\sum_{j\neq\tilde{w}}p_{\tilde{W},X^n}(j,x^n)+2\textnormal{Pr}\{X^n\not\in\mathcal{T}^n_{\epsilon}(p_{X|\tilde{W}}(\cdot|\tilde{W}))\}\nonumber\\
    &\leq2\sum_{\tilde{w}}p(\tilde{w})\sum_{x^n\in\mathcal{T}^n_{\epsilon}(p_{X|\tilde{W}}(\cdot|\tilde{w}))}\frac{1}{p(\tilde{w})}\sum_{j\neq\tilde{w}}p_{\tilde{W},X^n}(j,x^n)+2\textnormal{Pr}\{X^n\not\in\mathcal{T}^n_{\epsilon}(p_{X|\tilde{W}}(\cdot|\tilde{W}))\}\label{tv6}\\
    &=2\sum_{\tilde{w}}\sum_{j\neq\tilde{w}}\textnormal{Pr}\{X^n\in\mathcal{T}^n_{\epsilon}(p_{X|\tilde{W}}(\cdot|\tilde{w})),\tilde{W}=j\}+2\textnormal{Pr}\{X^n\not\in\mathcal{T}^n_{\epsilon}(p_{X|\tilde{W}}(\cdot|\tilde{W}))\}\\
    &\leq 2\sum_{\tilde{w}}\sum_{j\neq\tilde{w}}\textnormal{Pr}\{X^n\not\in\mathcal{T}^n_{\epsilon}(p_{X|\tilde{W}}(\cdot|j)),\tilde{W}=j\}+2\textnormal{Pr}\{X^n\not\in\mathcal{T}^n_{\epsilon}(p_{X|\tilde{W}}(\cdot|\tilde{W}))\}\label{tv7}\\
    &\leq 2\sum_{\tilde{w}}\sum_{j\in\mathcal{\tilde{W}}}\textnormal{Pr}\{X^n\not\in\mathcal{T}^n_{\epsilon}(p_{X|\tilde{W}}(\cdot|j)),\tilde{W}=j\}+2\textnormal{Pr}\{X^n\not\in\mathcal{T}^n_{\epsilon}(p_{X|\tilde{W}}(\cdot|\tilde{W}))\}\nonumber\\
    &= 2(|\mathcal{\tilde{W}}|+1)\textnormal{Pr}\{X^n\not\in\mathcal{T}^n_{\epsilon}(p_{X|\tilde{W}}(\cdot|\tilde{W}))\},\label{tv8}
\end{align}
where in (\ref{tv2}), we use the fact that for two pmfs $a(\cdot),b(\cdot)$,
\begin{equation}\label{tvbound}
  \sum_y|a(y)-b(y)|\leq 2.  
\end{equation}
(\ref{tv4}) results from the triangle inequality, and the condition $j\neq\tilde{w}$ comes from the fact that in (\ref{tv3}), the term $p^*_{Y|\tilde{W},X^n}(y|j,x^n)-p^*(y|\tilde{w},x^n)$ is zero when $j=\tilde{w}$. (\ref{tv5}) results from (\ref{tvbound}). (\ref{tv6}) follows from 
\begin{equation*}
    \frac{p(x^n|\tilde{w})}{p(x^n)}=\frac{p(x^n|\tilde{w})}{\sum_j p_{\tilde{W}}(j)p_{X^n|\tilde{W}}(x^n|j)}\leq \frac{1}{p(\tilde{w})}.
\end{equation*}
(\ref{tv7}) comes from the fact that $\epsilon$ is chosen in such a way that the typical sets, i.e., $\mathcal{T}^n_\epsilon(p_{X|\tilde{W}}(\cdot|\tilde{w})),\ \forall \tilde{w}$, are disjoint. Hence,
\begin{equation*}
   \textnormal{Pr}\{X^n\in\mathcal{T}^n_{\epsilon}(p_{X|\tilde{W}}(\cdot|\tilde{w}))\}\leq \textnormal{Pr}\{X^n\not\in\mathcal{T}^n_{\epsilon}(p_{X|\tilde{W}}(\cdot|j))\},\ \forall j,\tilde{w}\in\mathcal{\tilde{W}},j\neq \tilde{w}. 
\end{equation*}
Since $\tilde{W}$ is a deterministic function of $W$, we have $|\tilde{\mathcal{W}}|\leq|\mathcal{W}|(<\infty)$. Finally, by noting that as $n$ goes to infinity, (\ref{tv8}) tends to zero, the proof of (\ref{sad}) is complete.

Assume that the mapping $p^*_{Y|X^n}$ is applied to $X^n$, which, as mentioned earlier, satisfies the perfect sample privacy constraints. Hence, from the definition of $I_s$, we have\footnote{From the fact that $W-\tilde{W}-X^n$ and $\tilde{W}-W-X^n$ form a Markov chain, we have $I(W;Y)=I(\tilde{W};Y)$.}
\begin{equation}\label{sandwich}
    I(\tilde{W};Y)\leq I_s(W,X^n)\leq J(W,X^n).
\end{equation}
The claim in (\ref{asymp1}) is proved by showing $\lim_{n\to\infty} J(W,X^n)- I(\tilde{W};Y)=0.$ To this end, we have
\begin{align}
    \lim_{n\to\infty}\sum_{\tilde{w},y}|q(\tilde{w},y)-p^*(\tilde{w},y)|&=\lim_{n\to\infty}\sum_{\tilde{w},y}\bigg|\sum_{x^n}q(\tilde{w},x^n,y)-p^*(\tilde{w},x^n,y)\bigg|\nonumber\\
    &\leq \lim_{n\to\infty}\sum_{\tilde{w},x^n,y}|q(\tilde{w},x^n,y)-p^*(\tilde{w},x^n,y)|\label{tr1}\\
    &=0\label{tr2},
\end{align}
where (\ref{tr1}) follows from the triangle inequality, and (\ref{tr2}) from (\ref{sad}). Hence, the total variation distance between $q_{\tilde{W},Y}$ and $p^*_{\tilde{W},Y}$ vanishes\footnote{Alternatively, (\ref{tr2}) can be proved by considering a deterministic channel that outputs $(W,Y)$, when the tuple $(W,X^n,Y)$ is fed into it. By considering input 1 distributed according to $p^*_{W,X^n,Y}$ and input 2 distributed according to $q_{W,X^n,Y}$, we observe that the TV distance between their corresponding outputs, i.e., $d_{TV}(p^*_{W,Y},q_{W,Y})$, is not greater than $d_{TV}(p^*_{W,X^n,Y},q_{W,X^n,Y})$, which follows from the data processing inequality of f-divergences.} with $n$. Finally, by noting the continuity of mutual information $I(A;B)$ as a functional of $p_{A,B}$, we conclude that 
\begin{equation*}
 \lim_{n\to\infty} J(W,X^n)- I(\tilde{W};Y)=0,   
\end{equation*}
which, in conjunction with (\ref{sandwich}), proves (\ref{asymp1}).
\section{}\label{app2}
Let $\delta\in(0,\epsilon]$ be an arbitrary real number. First, we show that in the Markov chain $X-W-X^n-Y$, we have
\begin{equation}\label{type}
    (1-\delta)p_{X|X^n}(x|x^n)\leq\frac{\sum_{i=1}^np_{X_i|X^n}(x|x^n)}{n}\leq (1+\delta)p_{X|X^n}(x|x^n),\forall x\in\mathcal{X},\forall x^n\in\mathcal{T}_{\delta}^n\left(p_{X|W}(\cdot|w)\right),\forall w.
\end{equation}
This is proved by first noting that the type of a sequence, as defined in (\ref{typedef}), can be alternatively written as
\begin{equation}\label{typedef2}
\pi(x|x^n)=\frac{\sum_{i=1}^np_{X_i|X^n}(x|x^n)}{n},\ \forall x\in\mathcal{X},\forall x^n\in\mathcal{X}^n,
\end{equation}
since the conditional pmf serves as an indicator here. 
Therefore, from (\ref{typic}), we can write
\begin{equation*}
    (1-\delta)p_{X|W}(x|w)\leq\frac{\sum_{i=1}^np_{X_i|X^n}(x|x^n)}{n}\leq (1+\delta)p_{X|W}(x|w),\forall x\in\mathcal{X},\forall x^n\in\mathcal{T}_{\delta}^n\left(p_{X|W}(\cdot|w)\right),\forall w.
\end{equation*}
Moreover, since $\delta\leq \epsilon$, the $\delta$-typical sets $\mathcal{T}_{\delta}^n\left(p_{X|W}(\cdot|w)\right), \forall w$, are disjoint, which results in $p_{W|X^n}(j|x^n)=0, \forall x^n\in\mathcal{T}_{\delta}^n\left(p_{X|W}(\cdot|w)\right), \forall j\neq w,\forall w$. Hence,
\begin{equation*}
   p_{X|X^n}(\cdot|x^n)=\sum_{j} p_{X|W}(\cdot|j)p_{W|X^n}(j|x^n)=p_{X|W}(\cdot|w),\ \forall x^n\in\mathcal{T}_{\delta}^n\left(p_{X|W}(\cdot|w)\right), \forall w.
\end{equation*}
Therefore, (\ref{type}) is proved.

In what follows, we show that in the Markov chain $X-W-X^n-Y$ with $Y\independent X_i,\forall i\in[n]$, we have
\begin{equation}\label{tv0}
    \lim_{n\to\infty}d_{TV}(p_{X,Y},p_X\cdot p_Y)=0.
\end{equation}
Let $\mathbb{T}\triangleq \cup_w\mathcal{T}_{\delta}^n\left(p_{X|W}(\cdot|w)\right)$. We have
\begin{align}
    p_{X,Y}(x,y)&=\sum_{x^n\in\mathcal{X}^n}p_{X,X^n,Y}(x,x^n,y)\nonumber\\
    & = \sum_{x^n\in\mathbb{T}}p_{X,X^n,Y}(x,x^n,y)+\underbrace{\sum_{x^n\not\in\mathbb{T}}p_{X,X^n,Y}(x,x^n,y)}_{\gamma_n(x,y)}\label{gammadef}\\
    &=\sum_{x^n\in\mathbb{T}}p_{X|X^n}(x|x^n)p(x^n,y)+\gamma_n(x,y)\label{lasteq}.
\end{align}
From (\ref{type}), we have
\begin{equation}\label{lasteq2}
   \frac{1}{1+\delta}\frac{\sum_{i=1}^np_{X_i|X^n}(x|x^n)}{n}\leq p_{X|X^n}(x|x^n)\leq  \frac{1}{1-\delta}\frac{\sum_{i=1}^np_{X_i|X^n}(x|x^n)}{n},
\end{equation}
which is valid for $\forall x\in\mathcal{X},\forall x^n\in\mathcal{T}_{\delta}^n\left(p_{X|W}(\cdot|w)\right),\forall w$. Therefore, from (\ref{lasteq}), and (\ref{lasteq2}), we have
\begin{equation}\label{jam}
  \frac{\sum_{i=1}^n\sum_{x^n\in\mathbb{T}}p_{X_i|X^n}(x|x^n)p(x^n,y)}{n(1+\delta)}\leq p_{X,Y}(x,y)-\gamma_n(x,y)\leq  \frac{\sum_{i=1}^n\sum_{x^n\in\mathbb{T}}p_{X_i|X^n}(x|x^n)p(x^n,y)}{n(1-\delta)}.
\end{equation}
We can also write
\begin{align}
    p_{X_i,Y}(x,y)&=\sum_{x^n\in\mathcal{X}^n}p_{X_i,X^n,Y}(x,x^n,y)\nonumber\\
    & = \sum_{x^n\in\mathbb{T}}p_{X_i,X^n,Y}(x,x^n,y)+\underbrace{\sum_{x^n\not\in\mathbb{T}}p_{X_i,X^n,Y}(x,x^n,y)}_{\eta_n(x,y)}.\label{lasteq3}
\end{align}
Therefore, from (\ref{lasteq3}), we can write (\ref{jam}) as
\begin{equation*}
  \frac{\sum_{i=1}^n\left(p_{X_i,Y}(x,y)-\eta_n(x,y)\right)}{n(1+\delta)}\leq p_{X,Y}(x,y)-\gamma_n(x,y)\leq  \frac{\sum_{i=1}^n\left(p_{X_i,Y}(x,y)-\eta_n(x,y)\right)}{n(1-\delta)}, 
\end{equation*}
which further simplifies to
\begin{equation}\label{jam3}
  \frac{p_{X}(x)p_Y(y)-\eta_n(x,y)}{1+\delta}\leq p_{X,Y}(x,y)-\gamma_n(x,y)\leq  \frac{p_{X}(x)p_Y(y)-\eta_n(x,y)}{1-\delta}, 
\end{equation}
since the distribution of $(X_i,X^n,Y)$ is index-invariant, and $p_{X_i}(\cdot)=p_X(\cdot),\forall i\in[n]$.
 By adding $\gamma_n(x,y)-p_X(x)p_Y(Y)$ to (\ref{jam3}), we have
\begin{equation}\label{lasteq4}
    \gamma_n(x,y)-\frac{\eta_n(x,y)}{1+\delta}-\frac{\delta}{1+\delta}p_X(x)p_Y(y)\leq p_{X,Y}(x,y)-p_X(x)p_Y(y)\leq \gamma_n(x,y)-\frac{\eta_n(x,y)}{1-\delta}+\frac{\delta}{1-\delta}p_X(x)p_Y(y).
\end{equation}
From the definitions of $\gamma_n,\eta_n$, in (\ref{gammadef}) and (\ref{lasteq3}), respectively, we have
\begin{equation}
    \sum_{x,y}\gamma_n(x,y)=\sum_{x,y}\eta_n(x,y)=\textnormal{Pr}\{X^n\not\in\mathbb{T}\},
\end{equation}
which, from LLN, tends to zero with $n$. Hence, from (\ref{lasteq4}), we can write
\begin{equation}
   \lim_{n\to\infty} \sum_{x,y}\bigg|p_{X,Y}(x,y)-p_X(x)p_Y(y)\bigg|\leq\frac{\delta}{1-\delta}.
\end{equation}
Since $\delta\in(0,\epsilon]$ was chosen arbitrarily, we must have
\begin{equation}
   \lim_{n\to\infty}d_{TV}(p_{X,Y},p_x\cdot p_Y)=\lim_{n\to\infty} \sum_{x,y}\bigg|p_{X,Y}(x,y)-p_X(x)p_Y(y)\bigg|=0.
\end{equation}
This proves (\ref{tv0}), which states that in the Markov chain $X-W-X^n-Y$, where $Y\independent X_i,\forall i\in[n]$, the pair $(X,Y)$ moves towards independence with $n$. Finally, the continuity of mutual information enables us to conclude that 
\begin{equation}
   \lim_{n\to\infty}I(X;Y)=0.
\end{equation}
\section{}\label{app123}
We divide the proof of the continuity of $C_1(\cdot)$ into two parts: The continuity for $\alpha>0$, and the continuity at $\alpha=0$. Note that only the latter is used in this paper, but the general claim is proved in this appendix.

The first part follows from the concavity of $C_1(\cdot)$. Assume $p^1_{Y|W}$ is the maximizer in $C_1(\alpha_1)$, and $p^2_{Y|W}$ is that in $C_1(\alpha_2)$. Note that the mutual information terms involved in $C_1(\cdot)$ do not depend on the actual realizations of the random variables, but their mass probabilities. Hence, we can assume that $\mathcal{Y}_1\cap\mathcal{Y}_2=\emptyset$, where $\mathcal{Y}_i$ denotes the support of $Y$ induced by the mapping $p^i_{Y|W}, i\in[2].$ Let $Z$ be a binary random variable as a deterministic function of $Y$ that indicates whether it belongs to $\mathcal{Y}_1$ or $\mathcal{Y}_2$. Construct the mapping $p_{Y|W}=\lambda p^1_{Y|W}+ (1-\lambda)p^2_{Y|W}$ for $\lambda\in[0,1]$. Since the pmf of $Z$ does not change by conditioning on $W$, we have $Z\independent W$, and obviously, $Z\independent X$, due to the Markov chain $X-W-Y-Z$. As a result, using the aforementioned mapping $p_{Y|W}$, we have 
\begin{align*}
    I(X;Y)&=I(X;Y,Z)=I(X;Y|Z)=\lambda\alpha_1+(1-\lambda)\alpha_2\\
    I(W;Y)&=I(W;Y,Z)=I(W;Y|Z)=\lambda C_1(\alpha_1)+(1-\lambda)C_1(\alpha_2),
\end{align*}
which proves the concavity of $C_1(\cdot)$, and hence, its continuity over $\alpha>0$.

In order to show the continuity at zero, it is sufficient to consider a sequence of solutions of $C_1(\alpha_n)$, i.e., $\{p^n_{Y|W}\}$, where $\alpha_n$ tends to zero. From Pinsker inequality, we must have $\|p^n(x,y)-p(x)p^n(y)\|_1\to 0$, which happens if and only if at least one of the following statements holds for any $y\in\mathcal{Y}$: 1) $p^n_{W|Y}(\cdot|y)\to\textnormal{Null}(\mathbf{P}_{X|W})$, 2) $p^n_Y(y)\to 0$, which in any case, forces $C_1(\alpha_n)\to C_1(0)$.

\section{}\label{app:heuristic_par}

The joint distribution of $X^n$ and $Y_\text{par}^{n-1}$ can be written as
\begin{equation}\label{eq:joint_structure}
    p_{X^n,Y_\text{par}^{n-1}}(x^n,y_\text{par}^{n-1}) = \prod_{i=1}^n p_{X_i}(x_i) \prod_{j=1}^{n-1}p_{Y_{\text{par},j}|X_j,X_{j+1}}(y_{\text{par},j}|x_j,x_{j+1}).
\end{equation}
Therefore, the indepedency between a given $X_k$ and $Y_\text{par}^{n-1}$, for any $k\in[n]$, can be directly verified by noting that
\begin{align}
I(X_k;Y_\text{par}^{n-1}) &= I(X_k;Y_{\text{par},k-1},Y_{\text{par},k}) \nonumber\\
&= I(X_k;Y_{\text{par},k}|Y_{\text{par},k-1}) \nonumber\\
&= I(Y_{\text{par},k-1},X_k;Y_{\text{par},k}) \nonumber\\
&= 0.\nonumber
\end{align}
Above, the first equality uses the structrure of \eqref{eq:joint_structure}, and the rest of the derivation follows the fact that $I(X_k;Y_{\text{par},k}) = I(X_k;Y_{\text{par},k-1}) = 0$, and that $Y_{\text{par},k-1} - X_k - Y_{\text{par},k}$ form a Markov chain, as shown below:
\begin{align}
    p_{Y_{\text{par},k-1},Y_{\text{par},k}|X_k}(y_{\text{par},k-1},y_{\text{par},k}|x_k) 
    =&\sum_{x_{k-1},x_{k+1}} p_{Y_{\text{par},k-1},Y_{\text{par},k},X_{k-1},X_{k+1}|X_k}(y_{\text{par},k-1},y_{\text{par},k},x_{k-1},x_{k+1}|x_k)\nonumber\\
    =&\sum_{x_{k-1},x_{k+1}} 
    p_{X_{k-1}}(x_{k-1})p_{X_{k+1}}(x_{k+1})
    \prod_{j=k-1}^{k} p_{Y_{\text{par},j}|X_j,X_{j+1}}(y_{\text{par},j}|x_j,x_{j+1}) \nonumber\\
    =& \sum_{x_{k-1}} 
    p_{X_{k-1}}(x_{k-1})p_{Y_{\text{par},k-1}|X_{k-1},X_{k}}(y_{\text{par},{k-1}}|x_{k-1},x_k) \nonumber\\
    &\times \sum_{x_{k+1}}
    p_{X_{k+1}}(x_{k+1})p_{Y_{\text{par},k}|X_k,X_{k+1}}(y_{\text{par},k}|x_k,x_{k+1}) \nonumber \\
    =& \:p_{Y_{\text{par},k-1}|X_k}(y_{\text{par},k-1}|x_k)
    p_{Y_{\text{par},k}|X_k}(y_{\text{par},k}|x_k).
\end{align}

For the second part of the Lemma, it can be shown that
\begin{align}
I(X^n;Y_\text{par}^{n-1}) 
&= I(X^n;Y_\text{par}^{n-1}) \nonumber\\
&= \sum_{j=1}^{n-1} I(X^n;Y_{\text{par},j}|Y^{j-1}_{\text{par}})\nonumber\\
&= \sum_{j=1}^{n-1}  \big[ I(X^n;Y_{j,\text{par}}) - I(Y_{j,\text{par}};Y_\text{par}^{j-1}) \big] \label{eq:001}\\
&= \sum_{j=1}^{n-1}  I(X_j,X_{j+1};Y_{j,\text{par}}) \label{eq:002}\\
&= \sum_{j=1}^{n-1}  I(X_j;Y_{\text{par},j}|X_{j+1}). \nonumber
\end{align}
Above, \eqref{eq:001} uses the fact that $Y_{\text{par},j} - X^n - Y_\text{par}^{j-1}$ is a Markov chain, as can be seen from
\begin{align}
    p_{Y^{j-1}_\text{par},Y_{\text{par},j}|X^n}(y_{\text{par}}^{j-1},y_{\text{par},j}|x^n) 
    &=\sum_{y_{\text{par},j+1},\dots,y_{\text{par},n-1}} 
    \prod_{i=1}^{n-1}p_{Y_{\text{par},i}|X_i,X_{i+1}}(y_{\text{par},i}|x_i,x_{i+1})\nonumber\\
     &=\prod_{i=1}^{j}p_{Y_{\text{par},i}|X_i,X_{i+1}}(y_{\text{par},i}|x_i,x_{i+1})\label{eq:asdvv}\\
     &= p_{Y^{j-1}_\text{par}|X^n}(y_{\text{par}}^{j-1}|x^n) p_{Y_{\text{par},j}|X^n}(y_{\text{par},j}|x^n).\nonumber
\end{align}
Finally, \eqref{eq:002} uses that 
\begin{align}
    p_{Y^{j-1}_{\text{par}},Y_{\text{par},j}|X_j}(y^{j-1}_{\text{par}},y_{\text{par},j}|x_j) 
    =& \sum_{x_i: i\neq j}   p_{Y^{j-1}_\text{par},Y_{\text{par},j}|X^n}(y_{\text{par}}^{j-1},y_{\text{par},j}|x^n) \prod_{k\in[n]: k\neq j}  p_{X_k}(x_k)\nonumber\\
    =& \sum_{x_1,\dots,x_{j-1}} \sum_{x_{j+1}}\prod_{i=1}^{j}p_{Y_{\text{par},i}|X_i,X_{i+1}}(y_{\text{par},i}|x_i,x_{i+1})
    p_{X_{j+1}}(x_{j+1})\prod_{k=1}^{j-1}  p_{X_k}(x_k)
    \label{eq:2222} \\
    =& \sum_{x_{j+1}} p_{Y_{\text{par},j}|X_j,X_{j+1}}(y_{\text{par},j}|x_j,x_{j+1}) p_{X_{j+1}}(x_{j+1}) \nonumber\\
    &\times \sum_{x_1,\dots,x_{j-1}} \prod_{i=1}^{j-1}p_{Y_{\text{par},i}|X_i,X_{i+1}}(y_{\text{par},i}|x_i,x_{i+1})\prod_{k=1}^{j-1}  p_{X_k}(x_k)\nonumber\\
    =& \: p_{Y_{\text{par},j}|X_j}(y_{\text{par},j}|x_j)
    \cdot
    p_{Y^{j-1}_{\text{par}}|X_j}(y^{j-1}_{\text{par}}|x_j)
    . \nonumber
\end{align}
and hence $Y_{\text{par},j} - X_j - Y_\text{par}^{j-1}$ is also a Markov chain and, in turn, $I(Y_{\text{par},j};Y_\text{par}^{j-1}) \leq (Y_{\text{par},j};X_j) =0$. Above, \eqref{eq:2222} is attained using \eqref{eq:asdvv}.

\bibliography{refs.bib}
\bibliographystyle{IEEEtran}
\end{document}